\definecolor{tikz-red}{rgb}{0.5019607843137255,0,0}
\definecolor{tikz-darkblue}{rgb}{0,0.2,0.6}
\definecolor{tikz-blue}{rgb}{0.49019607843137253,0.49019607843137253,1}
\definecolor{tikz-gray}{rgb}{0.5,0.5,0.5}
\newcommand{\E}{\mathcal{E}}
\renewcommand{\P}{\mathcal{P}}
\newcommand{\spray}{\ensuremath{\textrm{SPRAY}}}
\newcommand{\ord}[3]{\ensuremath{[#1\;#2\;#3]}}
\newcommand{\chain}[2][\;\dots\;]{%
 \ensuremath{\left[%
  \def\nextitem{\def\nextitem{#1}}
  \renewcommand*{\do}[1]{\nextitem##1}
  \docsvlist{#2}
 \right]}%
}
\newcommand{\customlabel}[2]{%
   \protected@write \@auxout {}{\string \newlabel {#1}{{#2}{\thepage}{#2}{#1}{}} }%
   \hypertarget{#1}{#2}
}
\spnewtheorem*{axiom}{Axiom}{\bf}{\rm}
\spnewtheorem*{theorem*}{Theorem}{\bf}{\it}  
\DeclareTextSymbol{\textbackslash}{T1}{92}
\newcommand{\setN}{{\mathord{\mathbb N}}}
\newcommand{\setQ}{{\mathord{\mathbb Q}}}
\newcommand{\setR}{{\mathord{\mathbb R}}}
\newcommand{\lsemantics}{\mathopen{\lbrack\mkern-3mu\lbrack}}
\newcommand{\rsemantics}{\mathclose{\rbrack\mkern-3mu\rbrack}}
\newcommand{\jtodo}[2][]{\todo[color=red!70,#1]{#2}}
\newcommand{\jeptodo}[2][]{\todo[color=green!70,#1]{#2}}
\definecolor{isa-rangecomment}{RGB}{139,0,0}
\definecolor{bgr-lightgrey}{gray}{0.98}
\definecolor{isa-kwcolor-2}{RGB}{27,158,119}
\definecolor{isa-kwcolor-3}{RGB}{217,95,2}
\definecolor{isa-kwcolor-4}{RGB}{117,112,179}
\colorlet{isa-proofsnip}{isa-kwcolor-3}
\lstdefinelanguage{Isar}%
{
  morekeywords=[2]{
    _first,
    theory,begin,end,types,datatype,consts,defs,primrec,
    syntax,translations,apply,
    lemma,theorem,corollary,definition,inductive,where,locale,context,abbreviation,
    done,next,proof,and,qed,
    assumes,fixes,shows,case,goal,obtains,defines,OF,in,
    fixed variables,
    _last
  },
  keywordstyle={[2]\bfseries\color{isa-kwcolor-2}},
  morekeywords=[3]{
  	sorry,oops
  },
  keywordstyle={[3]\bfseries\color{isa-kwcolor-3}},
  morekeywords=[4]{
    then,this,thus,hence,have,assume,obtain,moreover,fix,show,by,using,unfolding,
    ultimately,let,if,for,else,consider
  },
  keywordstyle={[4]\bfseries\color{isa-kwcolor-4}},
  sensitive=true,
  morecomment=[l]{-- },
  morecomment=[s]{(*}{*)},
  literate=
  	{(*...*)}{{\textcolor{isa-proofsnip}{\textbf{...}}}}3
  	{<proof>}{{\textcolor{isa-proofsnip}{\textbf{<proof>}}}}7
    {\\<not>}{{$\neg$}}1
    {\\<times>}{{$×$}}1                 
    {\\<Rightarrow>}{{$\Rightarrow$}}2%
    {\\<equiv>}{{$\equiv$}}1
    {~=}{{$\not=$}}1                         
    {\\<rightleftharpoons>}{{$\rightleftharpoons$}}2
    {\\<exists>}{{$\exists$}}1
    {\\<forall>}{{$\forall$}}1
    {\\<in>}{{$\in$}}1
    {\\<notin>}{{$\notin$}}1
    {\\<union>}{{$\cup$}}1
    {\\<Union>}{{$\bigcup$}}2
    {\\<inter>}{{$\cap$}}1
    {\\<emptyset>}{{$\emptyset$}}1
    {\\<theta>}{{$\theta$}}1
    {\\<Theta>}{{$\Theta$}}1
    {\\<lambda>}{{$\lambda$}}1
    {\\<E>}{{$\E$}}1
    {\\<subseteq>}{{$\subseteq$}}1
    {\\<ge>}{{$\geq$}}1
    {\\<le>}{{$\leq$}}1
    {\\<noteq>}{{$\neq$}}1
    {\\<P>}{{$\P$}}1
    {\\<parallel>}{{$\parallel$}}1
    {\\<Longrightarrow>}{{$\Longrightarrow$}}3
    {\\<longrightarrow>}{{$\longrightarrow$}}3
    {\\<leftrightarrow>}{{$\leftrightarrow$}}2
    {\\<longleftrightarrow>}{{$\longleftrightarrow$}}3
    {\\<lbrakk>}{{$\lsemantics$}}1
    {\\<rbrakk>}{{$\rsemantics$}}1
    {\\<and>}{{$\land$}}1
    {\\<or>}{{$\lor$}}1
    {\\<bigwedge>}{{$\bigwedge$}}1
    {\\<And>}{{$\bigwedge$}}1
    {\\<triangle>}{{$\triangle$}}1
    {\\<open>}{{$\langle$}}1
    {\\<close>}{{$\rangle$}}1
    {\\<\^cancel>}{{\%}}1
    {\\<\^sub>}{{\_}}1
    {ü}{"u}1 {ä}{"a}1 {ö}{"o}1
    {Ü}{"U}1 {Ä}{"A}1 {Ö}{"O}1 {ß}{{ss}}1
}[keywords,comments,strings]%
\renewcommand{\lstinline}{\oldlstinline[basicstyle=\ttfamily\normalsize]}
\journalname{Journal of Automated Reasoning}
\begin{document}

\title{Towards Formalising Schutz' Axioms for Minkowski Spacetime in Isabelle/HOL
}
\subtitle{}


\author{Richard Schmoetten \and Jake E. Palmer \and Jacques D. Fleuriot}


\institute{
	R. Schmoetten \at
		Artificial Intelligence and its Applications Institute,
		The University of Edinburgh\\
		\email{s1311325@sms.ed.ac.uk}
\and
	J. E. Palmer \at
		Artificial Intelligence and its Applications Institute,
		The University of Edinburgh\\
		\email{jake.palmer@ed.ac.uk}
\and 
	J. D. Fleuriot \at
		Artificial Intelligence and its Applications Institute,
		The University of Edinburgh\\
		\email{jdf@ed.ac.uk}
}


\maketitle

\begin{abstract}
Special Relativity is a cornerstone of modern physical theory. While a standard coordinate model is well-known and widely taught today, several alternative systems of axioms exist. This paper reports on the formalisation of one such system which is closer in spirit to Hilbert's axiomatic approach to Euclidean geometry than to the vector space approach employed by Minkowski.
We present a mechanisation in Isabelle/HOL of the system of axioms as well as theorems relating to temporal order. Proofs and excerpts of Isabelle/Isar scripts are discussed, particularly where the formal work required additional steps, alternative approaches, or corrections to Schutz' prose.
\keywords{Isabelle \and Relativity \and Minkowski \and Synthetic geometry}
\end{abstract}

\section{Introduction}
\label{sec:intro}
Formal foundations are a recently re-emerging trend in modern physics. While philosophical, mathematical, and empirical studies were inseparably entwined in antiquity, formal mathematics and physical science drifted apart in the eighteenth and nineteenth centuries \cite{suppes1968}.

The mathematical deduction employed for example in Ptolemy's \textit{Harmonics} is taken to be almost divine. Thus he considers ``arithmetic and geometry, as instruments of indisputable authority'' \cite[p.~507]{bernard2010}.
In contrast, the main physical theories of the twentieth century were developed as physics first, and retro-fitted with rigorous mathematical foundations later.
An example particularly relevant to this work is that of Special Relativity (SR) \cite{gourgoulhon2013g}. 
The comprehensive mathematical treatment given by Minkowski \cite{minkowski1908} was at first dismissed as unnecessarily complicated
\cite{einstein1908}.
Early work on axiomatising SR (e.g. by Robb \cite{robb1936}) went largely unnoticed by the physical research community.

But the search for a formal foundation to modern physics gained support in the second half of the twentieth century. Philosophical essays \cite{suppes1968}, the successes of the new mathematical quantum and relativity theories \cite{schrodinger1926,born1926}, and increasing interest by the mathematical community, all contributed to works ranging from differential geometry and General Relativity (GR) to the Wightman axioms in particle physics \cite{streater2000}.

We will present here a mechanisation of an axiom system for Minkowski spacetime, the main ingredient of the theory of SR, given by Schutz in 1997 \cite{schutz1997}. To this end, we use the proof assistant Isabelle/HOL, briefly introduced in Section~\ref{sec:background}. We then proceed to an exhibition of the axioms in Section~\ref{sec:axioms}, and describe some of our mechanised lemmas and theorems in Section~\ref{sec:ch3}.%
\footnote{The formalisation can be accessed at \href{https://github.com/rhjs94/schutz-minkowski-space}{https://github.com/rhjs94/schutz-minkowski-space}.}
\todo{formatting at the end: make sure proofs are not split across pages; make sure lstinline doesn't lead to overfull hboxes (seen in warnings, words extending into right margin); noindent after listings?; see if more places need linking of text/Isa/Schutz (e.g. by filling in variable names); mathpmx fonts; indent for proof env?;}

\section{Background}
\label{sec:background}
    \subsection{Formalisation in Special Relativity}
        \label{sec:bgr:physics}
        

Several axiom systems have been proposed for Minkowski spacetime.
Schutz himself proposes several iterations, starting with a formulation based on primitive particles and the binary \textit{signal} relation in 1973 \cite{schutz1973}. The next iteration in 1981 replaces \textit{signals} with a binary \textit{temporal order} relation, and light signals become an entirely derived notion, whose existence is proven, not assumed \cite{schutz1981}. It is the final axiom system, published in a monograph in 1997, that is of primary interest to us: it contains many of the axioms of earlier systems as theorems, while also boasting the property of independence (see Sec.~\ref{sec:axioms} for details).
Systems formulated by Szekeres \cite{szekeres1968} and Walker \cite{walker1959} also rely on undefined bases and axioms inspired by physical intuition, and Schutz cites them as direct predecessors to his work.
Another early approach is that of Robb \cite{robb1936}, based on events and an ordering relation, and continued by Mundy \cite{mundy1986,mundy1986a}.
A first-order alternative to Schutz is given by Goldblatt \cite{goldblatt1989,goldblatt2012}, who relies on a relation of orthogonality in addition to the betweenness Schutz employs in his system of 1997.

More recently, an extension of Tarski's Euclidean ideas using Goldblatt's approach to Minkowski spacetime was given by Cocco and Babic \cite{cocco2021}. Their system is mostly formulated in first-order logic, but with a second-order continuity axiom in order to show the usual four-dimensional Minkowski spacetime is a model.
A flexible first-order system of axioms describing several different theories of relativity was given by Andr\'eka et al. \cite{andreka2011,andreka2013}. Notably, there exists a mechanisation of this approach in Isabelle/HOL by Stannett and N\'emeti \cite{stannett2014}. In contrast to what we propose here, Stannett and N\'emeti assume an underlying coordinate formulation and use first-order axioms, while Schutz' system is second-order, and his Isomorphism Theorem linking it to the usual coordinate model is one of his final results.
    \subsection{Axiomatic Geometries}
        \label{sec:bgr:geometry}

Geometry is arguably the oldest discipline to have seen successful axiomatisation in the form of Euclid's \textit{Elements} \cite{heath1956}. Over two millennia later, Hilbert's \textit{Grundlagen der Geometrie}
\cite{hilbert1950} built on Euclid to propose a new, self-contained system of axioms using modern logical concepts such as undefined notions (in contrast to Euclid's primitive definitions). Many alternative Euclidean systems have been postulated and examined since then.
Schutz acknowledges clear parallels between several of his theorems and those of Veblen \cite{veblen1904}, whose axioms for Euclidean geometry replace Hilbert's primitives (points, lines, planes, and several relations between them) to use only points and a single relation.
Tarski's system of elementary Euclidean geometry \cite{tarski1959} is influential too: points as well as two undefined relations are his only primitive notions. His axioms can be formulated in primitive notions only, using first-order logic (with identity and using an axiom schema).
Schutz \cite{schutz1997} similarly strives for simplicity, though his continuity axiom is second-order, and while a line-like primitive exists, only a single undefined relation is required.

\subsubsection{Mechanisation in Geometry}\label{sec:mech-geom}
Several axiomatic approaches to geometry have been (at least partially) formalised in Isabelle/HOL. Hilbert's \textit{Grundlagen} has seen work in Isabelle by Meikle, Scott and Fleuriot \cite{meikle2003,scott2008}, and further investigation of both the axioms and tools for their study in HOL Light \cite{scott2011,scott2015}. Tarski's axiom system was investigated by Narboux in Coq \cite{narboux2007}, and its independence verified in Isabelle by Makarios \cite{makarios2012}. Geometric formalisations also exist e.g. for projective geometry in Coq \cite{magaud2011} and again for Tarski's geometry in Mizar \cite{grabowski2016}. We refer to a recent review for a more comprehensive picture \cite{narboux2018}.

Our formalisation bears some similitude to the above work on Hilbert's \emph{Grundlagen} in a number of respects since several of Schutz' axioms originate in the \emph{Grundlagen} (see Section~\ref{sec:axioms}). For example, our definition of chains (Section~\ref{sec:axioms:chains}), one of the most fundamental constructs in this paper, relies on an adapted definition due to Scott's work on the \emph{Grundlagen} in HOL Light \cite{scott2015}.
As another example, we employ the same weakening of Schutz' Axiom O3 that can be found in Scott's formalisation of Hilbert's Axiom II.1. Scott also finds a result very similar to our \lstinline|chain_unique_upto_rev| (see Section~\ref{sec:unreach-connected}): while he obtains it from a remark of Hilbert's \cite[Section~6.7.2]{scott2015}, we derived it by necessity in an early version of our proof of Theorem~\ref{thm:13}, and found the correspondence only later.
Notice the formalisations of Hilbert's \emph{Grundlagen} cited here focus on the first three groups of axioms, which exclude the parallel and continuity axioms.

    \subsection{Isabelle/HOL}
        \label{sec:bgr:isa}
        Computer-based theorem proving, verification, and proof exploration is the dominant area of automated reasoning today. A breakthrough development for the field was Scott's work on LCF \cite{scott1993a}, a typed version of the $\lambda$-calculus, and the subsequent construction of an interactive theorem prover of the same acronym by Gordon, Milner and Wadsworth \cite{gordon1979}.
Isabelle is a generic proof assistant which continues the LCF-style of automated reasoning \cite{wenzel2008,paulson2019}. Its generic meta-logic (the simple type system responsible for validity checking) supports multiple instances of object logic: we will be using higher order logic (HOL), but instances for e.g. first-order logic (FOL) and ZFC set theory exist.

We review several salient aspects of Isabelle below, and give a brief introduction to proof reading and writing.



\subsubsection{Automation and Readability}

\epigraph{A proof is a repeatable experiment in persuasion.}{Jim Horning}

Considering the above quote, the advantage of computer assistance in logical and mathematical proof is clear. Using Isabelle (for example), we can write a proof of any (provable) theorem, and provided our readers are convinced of the soundness%
\footnote{The consistency of theorem provers is its own research field \cite{kuncar2017}.}
of Isabelle's trusted kernel, they can take the theorem as fact without manually verifying the proof.
A famous and well-popularised success of computer-verified mathematics is the \textit{Flyspeck} project \cite{hales2015}.
A computer-assisted proof of the Kepler conjecture was submitted for review in 1998, but only published (without the reviewers' complete certification) in 2006 \cite{hales2006,lagarias2011a}.
The \textit{Flyspeck} project is a twelve-year effort to formalise this proof,
accepted to a mathematical journal in 2017. 

Even if a proof is certified and trusted, it is often still instructive to read through it. One may identify methods to be used in similar problems, or generalised to unrelated areas of inquiry; intuition is built for the behaviour of the mathematical entities manipulated throughout the proof.
Readability is therefore important, particularly for proofs as verbose as those often found in mechanisations.
Isabelle provides us with the language Isar (\textbf{I}ntelligible \textbf{s}emi-\textbf{a}utomated \textbf{r}easoning) \cite{wenzel1999} that can be used for proofs that are both human-readable and supported by automatic solvers. Isar proofs merge the forward reasoning common in mathematical texts and natural for human readers to follow, and the backwards reasoning often useful in exploring possible avenues for a proof to be completed (see the next section for a glimpse of Isar).

Several tools for proof discovery come with the Isabelle distribution. In particular, the umbrella tool \textit{sledgehammer} \cite{paulson2010} automatically chooses a range of (several hundred) facts to pass to different first-order solvers (both resolution and SMT provers), and, if successful, provides a reconstruction of the automatic proof in Isabelle/HOL. In practice, automatic proof discovery is useful, but sometimes struggles to justify steps that seem obvious to the reader, or returns proofs relying on highly unexpected facts. This may be due to the complexity of some of our definitions, or difficulty in reductions to first-order logic. It has also sometimes led to corrections to axioms.

\subsubsection{Proofs and Isar}\label{sec:bgr:isaind}


Working in Isabelle/HOL (and Isar) is a mix of meta- and object-level reasoning.
This is best looked at through an example: we use a lemma named \hbox{\lstinline|no_empty_paths|} from our current formalisation. We are only interested in the formalism and method for now. Sec.~\ref{sec:axioms:unreach} will provide some context.

Meta-logic in Isabelle can be part of the inner syntax (between double quotes, e.g. \lstinline|\<lbrakk>...\<rbrakk>| for assumptions and \lstinline|\<Longrightarrow>| for meta-implication) or outer syntax (e.g. \lstinline|assumes|, \lstinline|shows|).
We announce the statement of a top-level fact requiring proof with keywords such as \lstinline|theorem|, \lstinline|lemma|. This is followed (optionally) by a unique name, as well as the fact statement, either in inner syntax
or in the more legible Isar style as above.

\pagebreak
\begin{lstlisting}
lemma no_empty_paths:
  assumes "Q\<in>\<P>"
  shows "Q\<noteq>{}"
proof -
  obtain a where "a\<in>\<E>"
    using nonempty_events by blast
  have "a\<in>Q \<or> a\<notin>Q" by auto
  thus ?thesis
  proof (rule disjE)
    assume "a\<in>Q"
    thus ?thesis by blast
  next
    assume "a\<notin>Q"
    then obtain b where "b\<in>\<emptyset> Q a"
      using two_in_unreach \<open>a\<in>\<E>\<close> assms
      by blast
    thus ?thesis
      using unreachable_subset_def by auto
  qed
qed
\end{lstlisting}


We start an Isar proof proof with the keyword \lstinline|proof|.
We can supplement \lstinline|proof| with an initial method to use (e.g. a case split \lstinline|rule disjE| as above or the general method \lstinline|safe|, which splits and rewrites goals; or \lstinline|induct| as explained below). Isabelle will try to choose a rule for us if we do not provide one, unless we prevent this using a dash (i.e. \lstinline|proof -|). A successful proof ends with \lstinline|qed|. Two other keywords can terminate a proof: \lstinline|sorry| and \lstinline|oops|. Both signify a proof that is not complete, or cannot be done, but while \lstinline|oops| means that Isabelle will refuse use of the unproven fact, \lstinline|sorry| allows an unproven statement to be used in legitimate proofs of other propositions. Thus \lstinline|sorry| can be quite dangerous (see Sec.~\ref{sec:bgr:isaloc} for an alternative). It is useful, however, for checking which subgoals could be sufficient to prove a lemma.

Intermediate facts are declared using for example \lstinline|have| or \lstinline|hence|, and facts that satisfy the current goal using \lstinline|show| or \lstinline|thus|. This is followed by an optional name and the fact statement, and proved using a separate \lstinline|proof (*...*) qed| block, with its own scope for variables. Such blocks can be nested. We may provide useful facts after \lstinline|using|, and a proof method or automatic theorem provers (ATP) after \lstinline|by|. Isabelle will now verify whether this method and collection of results are sufficient to prove the desired statement. The sixth line of the listing above is a simple example of this procedure.
Multiple facts can be listed after a single name, and proved all at once; such facts can be referenced by their given name, accompanied by a number in brackets that indicates which fact it was (e.g.\ \lstinline|factname(2)|).

Several results of our formalisation are proved by induction.
The method \lstinline|induct| takes an induction parameter, which is always of type \lstinline|nat| for our proofs, and splits the goal into subgoals, e.g. a base case and an inductive step.
Each subgoal is proved in its own scope, separated from the others by \lstinline|next|.
Isabelle provides shorthand notation for the usual first lines of both split cases. The \textit{base case} (\lstinline|case 0|) sets a goal that is just the lemma's conclusion, but with the induction variable set to $0$.
The induction case (\lstinline|case (Suc n)|) fixes $n$, assumes the lemma's conclusion for $n$, and sets the goal to the conclusion for $n+1$ (i.e. \lstinline|Suc n|). This assumption for $n$ is the induction hypothesis (IH).

Finally, several of our lemmas in Section~\ref{sec:order-path} use the keywords \lstinline|fixes|, which introduces a variable, and \lstinline|defines|, which gives its definition as an equality (strictly speaking, a meta-equality). Isabelle will treat the fixed variable as an abbreviation for its defining statement. We refer to the lemma \lstinline|show_segmentation| (part of Theorem~\ref{thm:11}) in Section~\ref{sec:order-path} as an example.

\subsubsection{Locales}\label{sec:bgr:isaloc}

One useful feature, particularly for sizeable axiom systems such as ours, is Isabelle's \lstinline|locale| mechanism. One can think of a locale as a parameterised context: it names one or more ``arbitrary but fixed'' parameters, and assumes some initial properties. In our case, these are undefined notions and axioms respectively. Since the formulation of axioms often changes as proofs are attempted because they are found wanting (e.g. axiom I6, see Theorem~\ref{thm:13} in Sec.~\ref{sec:unreach-connected}), we try to limit the amount of logic that is affected and possibly invalidated by such a change. Containing small groups of related axioms in their own separate locales circumscribes the scope of their influence. For instance, this purpose is served by our locale \lstinline|MinkowskiDense| (see again Sec.~\ref{sec:unreach-connected}, and below), which contains an assumption (in this case an additional, hidden assumption needed for one of Schutz' proofs) that we do not want to spill outside the locale. This is a safer alternative to using \lstinline|sorry|.

Locales have additional practical benefits: they are augmented by each theorem proven inside them, they can extend other locales, and they can be interpreted. The latter allows an explicit example to an abstract algebraic concept (e.g. $SO(3)$, 3D-rotations, form a concrete instance of a group).
This means that if we eventually want to find a model of our system, we can do so in steps: showing some interpretation $\mathcal{M}$ satisfies our locale \hbox{\lstinline|MinkowskiChain|} (see Sec.~\ref{sec:axioms:chains}) gives us immediate access to that locale's theorems (e.g. \lstinline|collinearity2|), and those of any locales it extends. These theorems may then be used to prove $\mathcal{M}$ satisfies the additional requirements of a locale extending \lstinline|MinkowskiChain|.

An example locale from our formalisation is given below. The locale \lstinline|MinkowskiDense| here extends \lstinline|MinkowskiSpacetime| with the additional assumption named \lstinline|path_dense|. The \emph{context block} of the locale is delimited by \lstinline|begin (*...*) end|.
Alternatively, the locale of an individual result can be specified directly using the keyword \lstinline|in| (fictitious example below).

\begin{lstlisting}
locale MinkowskiDense = MinkowskiSpacetime +
  assumes path_dense: "path ab a b \<Longrightarrow> \<exists>x. [[a x b]]"
begin
  lemma (in MinkowskiSpacetime) example: "True"
    by simp
end
\end{lstlisting}



\noindent
Since model proofs are outside the scope of this work, locales serve mostly an organisational purpose for our formalisation.

\section{Axioms}
\label{sec:axioms}
Schutz proves several properties of his axiomatic system in his monograph \cite{schutz1997}:
consistency (relative to the real numbers), categoricity, and independence. He insists upon independence i.e.\ that none of his axioms can be derived from any combination of the others: he considers that the search for it has made his axioms more intuitive.


Some of the axioms as we encode them in Isabelle are subtly different from Schutz' statements.
These changes are due in some cases to the requirements of Isabelle/HOL (e.g.\ Isabelle's functions being total on types, not sets), in other cases some details are not considered in the original axiom, and several are just a matter of choice and simplicity (e.g.\ reformulations for easier use in interactive proofs). These choices will be discussed as we proceed with our exposition.
In most cases, Schutz' formulation can be easily restored as a theorem, by using the entire system of axioms.

Schutz lays out his axioms in two main groups: order and incidence. The first relates betweenness to events and paths, and establish a kind of plane geometry with axiom O6. The second deals with the relationships between events and paths, and also contains statements regarding unreachable subsets, which make a Euclidean/Galilean model impossible. In contrast to Schutz, we present axioms according to their specificity to Minkowski spacetime. In particular, our main comparison is with Hilbert's \textit{Grundlagen der Geometrie} \cite{hilbert1950}, which introduced the separation of incidence and order axioms.

\todo[fancyline]{A short discussion of sort against work that we've done mechanizing Hilbert will probably be needed somewhere. - Added something above, is it enough?}

Since several definitions of derived objects are required for stating some axioms, we construct our system as a hierarchy of locales (Sec.~\ref{sec:bgr:isaloc}), defining objects in the locale they make most sense in, and often just before they are needed.

\subsection{Primitives and Simple Axioms}\label{sec:axioms:primitives}

The first axioms, introduced in the locale \lstinline|MinkowskiPrimitive| together with the primitive notions of events and paths (which are introduced with the keyword \lstinline|fixes|), are similar to examples found in many other geometric axiom systems, notably Hilbert \cite{hilbert1950}. Schutz names them I1, I2, I3 \cite[p.~13]{schutz1997}, and they assert basic properties of two primitives: a set of events, $\E$, and a set of paths, $\P$, where each path is a set of events. 

\begin{theopargself}
\begin{axiom}[\customlabel{ax:I1}{I1} (Existence)]
$\E$ is not empty.
\end{axiom}
\begin{axiom}[\customlabel{ax:I2}{I2} (Connectedness)]
For any two distinct events $a,b \in \E$ there are paths $R$, $S$ such that $a \in R$, $b\in S$, and $R\cap S\neq \emptyset$.
\end{axiom}
\begin{axiom}[\customlabel{ax:I3}{I3} (Uniqueness)]
For any two distinct events, there is at most one path which contains both of them.
\end{axiom}
\end{theopargself}

As an example for the verbosity of a full formalisation, contrast Axiom \ref{ax:I3} with the many premises of its formalised version \lstinline|eq_paths|, and its customary translation of ``there is at most one'' as ``if given two such objects, they must be equal''.
Importantly, note that we also require one axiom Schutz does not have: \lstinline|in_path_event|, which excludes the possibility of non-event objects of the appropriate type being in a path, and guarantees $\P$ is a subset of the powerset of $\E$, not the universal set.

\begin{lstlisting}
locale MinkowskiPrimitive =
  fixes \<E> :: "'a set"
    and \<P> :: "('a set) set"
  assumes in_path_event [simp]: "\<lbrakk>Q \<in> \<P>; a \<in> Q\<rbrakk> \<Longrightarrow> a \<in> \<E>"
      (* I1 *)
      and nonempty_events [simp]: "\<E> \<noteq> {}"
      (* I2 *)
      and events_paths:
      	"\<lbrakk>a \<in> \<E>; b \<in> \<E>; a \<noteq> b\<rbrakk>
      	\<Longrightarrow> \<exists>R\<in>\<P>. \<exists>S\<in>\<P>. a \<in> R \<and> b \<in> S \<and> R \<inter> S \<noteq> {}"
      (* I3 *)
      and eq_paths [intro]:
        "\<lbrakk>P \<in> \<P>; Q \<in> \<P>; a \<in> P; b \<in> P; a \<in> Q; b \<in> Q; a\<noteq>b\<rbrakk> \<Longrightarrow> P = Q"
\end{lstlisting}

Nothing initially defines $\E$ apart from the type of its elements, yet we do not take $\E$ to be the universal set of type \lstinline|'a|.
This choice is made since it may lead to easier model instantiations in the future: for example, it allows building a model from a subset of natural numbers without defining an extra datatype. Given Isabelle's lack of subtypes, if events were the universal set of some type, a model over a subset of natural numbers could not make immediate recourse to the type \lstinline|nat|, but would need to define an entirely new type.
A universal set of events would also differ from Schutz' language. For example, types are never empty in Isabelle, so a universal set of events already implies Axiom~\ref{ax:I1}.
The set of paths $\P$ is always envisaged as a strict subset of the powerset of $\E$ -- otherwise the axioms introduced later in Sec.~\ref{sec:axioms:unreach} lose all relevance.

Our final undefined notion, the ternary relation of betweenness, is defined over events.
It is introduced in the locale \lstinline|MinkowskiBetweenness|, which extends \hbox{\lstinline|MinkowskiPrimitive|} and contains the first five axioms of order (O1 - O5) \cite[p.~10]{schutz1997}.

The axioms of order in Schutz' system are in close analogy with axioms of the same name in Hilbert's \emph{Grundlagen} (i.e.\ his group II). Hilbert's Axiom~II.1 combines Schutz' Axioms~\ref{ax:O1}, \ref{ax:O2}, \ref{ax:O3}; Hilbert's II.2 becomes Schutz' Theorem~\ref{thm:6}, II.3 becomes Theorem~\ref{thm:1}. Pasch's axiom exists in both systems, respectively as II.4 and \ref{ax:O6}.

\begin{theopargself}
\begin{axiom}[\customlabel{ax:O1}{O1}]
For events $a,b,c \in \E$,
\[\ord{a}{b}{c} \implies \exists Q \in \P : a,b,c \in Q.\]
\end{axiom}
\begin{axiom}[\customlabel{ax:O2}{O2}]
For events $a,b,c \in \E$,
\[\ord{a}{b}{c} \implies \ord{c}{b}{a}.\]
\end{axiom}
\begin{axiom}[\customlabel{ax:O3}{O3}]
For events $a,b,c \in \E$,
\[\ord{a}{b}{c} \implies a,b,c \text{ are distinct.}\]
\end{axiom}
\begin{axiom}[\customlabel{ax:O4}{O4}]
For distinct events $a,b,c,d \in \E$,
\[\ord{a}{b}{c} \text{ and } \ord{b}{c}{d} \implies \ord{a}{b}{d}\;.\]
\end{axiom}
\begin{axiom}[\customlabel{ax:O5}{O5}]
For any path $Q \in \P$ and any three distinct events $a,b,c \in Q$,
\[
    \ord{a}{b}{c} \;\text{ or }\; \ord{b}{c}{a} \;\text{ or }\; \ord{c}{a}{b} \;\text{ or } \\
    \ord{c}{b}{a} \;\text{ or }\; \ord{a}{c}{b} \;\text{ or }\; \ord{b}{a}{c} \;.
\]
\end{axiom}
\end{theopargself}

Schutz denotes betweenness as $\ord{\_}{\_}{\_}$, but since that notation is used for lists in Isabelle, we define it to be \lstinline|[[_ _ _]]| below.

\begin{lstlisting}
locale MinkowskiBetweenness = MinkowskiPrimitive +
  fixes betw :: "'a \<Rightarrow> 'a \<Rightarrow> 'a \<Rightarrow> bool" ("[[_ _ _]]")
      (* O1 *)
  assumes abc_ex_path: "[[a b c]] \<Longrightarrow> \<exists>Q\<in>\<P>. a \<in> Q \<and> b \<in> Q \<and> c \<in> Q"
      (* O2 *)
      and abc_sym: "[[a b c]] \<Longrightarrow> [[c b a]]"
      (* O3 *)
      and abc_ac_neq: "[[a b c]] \<Longrightarrow> a \<noteq> c"
      (* O4 *)
      and abc_bcd_abd: "\<lbrakk>[[a b c]]; [[b c d]]\<rbrakk> \<Longrightarrow> [[a b d]]"
      (* O5 *)
      and some_betw:
        "\<lbrakk>Q \<in> \<P>; a \<in> Q; b \<in> Q; c \<in> Q; a \<noteq> b; a \<noteq> c; b \<noteq> c\<rbrakk>
               \<Longrightarrow> [[a b c]] \<or> [[b c a]] \<or> [[c a b]]"
\end{lstlisting}

Three of these have mild changes compared to Schutz: O3 and O5 are slightly weaker (having weaker conclusions) since the original statements are actually derivable (in the same locale). In O4, Schutz' condition that $a,b,c,d$ be distinct has to be removed. This is because distinctness of $a,c$ and $b,d$ is already implied by \ref{ax:O3}, and requiring $a\neq d$ makes Schutz' proof of Theorem~\ref{thm:1} impossible (see Sec.~\ref{sec:order-fin-chain}).

We prove Schutz' Axiom~\ref{ax:O3} from our formulation of Axioms~\ref{ax:O2}, \ref{ax:O3}, \ref{ax:O4}; and Schutz' Axiom~\ref{ax:O5} from our \ref{ax:O2} and \ref{ax:O5}.

\subsection{Chains}\label{sec:axioms:chains}

The final axiom of order given by Schutz is analogous to the axiom of Pasch, which is common in axiomatic geometric systems. It is stated in terms of particular subsets of paths called \emph{chains}, which Schutz defines as follows \cite[p.~11]{schutz1997}.

\begin{definition}
A sequence of events $\;Q_0, Q_1, Q_2,\; \dots\;$ (of a path $Q$) is called a \emph{chain} if:
\begin{enumerate}
\item[(i)] it has two distinct events, or
\item[(ii)] it has more than two distinct events and for all $i \geq 2$, $$\ord{Q_{i-2}}{Q_{i-1}}{Q_{i}}\;.$$
\end{enumerate}
\end{definition}

This is hard to represent in Isabelle because of the notion of a sequence as an indexed set. The informal naming convention of using a label $Q_i$ for an event encodes two pieces of information: that the event lies on path $Q$, and that several betweenness relations hold with other events indexed by adjacent natural numbers.
Following Palmer and Fleuriot \cite{palmer2018} and Scott \cite[p.~110]{scott2015}, we explicitly give a function $I \rightarrow Q$ (with $I \subseteq \setN$) that is order-preserving, and use this to define chains.
The predicate \lstinline|ordering| formalises what we mean by ``order-preserving'', taking as arguments an indexing function \lstinline|f|, a ternary relation \lstinline|ord| on the codomain of \lstinline|f|, and a set of events \lstinline|X|.


\begin{lstlisting}
definition ordering ::
  "(nat \<Rightarrow> 'a) \<Rightarrow> ('a \<Rightarrow> 'a \<Rightarrow> 'a \<Rightarrow> bool) \<Rightarrow> 'a set \<Rightarrow> bool"
  where "ordering f ord X
    \<equiv> \<forall>n. (finite X \<longrightarrow> n < card X) \<longrightarrow> f n \<in> X \<and>
      \<forall>x\<in>X. (\<exists>n. (finite X \<longrightarrow> n < card X) \<and> f n = x) \<and>
      \<forall>n n' n''. (finite X \<longrightarrow> n'' < card X) \<and> n<n' \<and> n'<n''
    		\<longrightarrow> ord (f n) (f n') (f n'')"
\end{lstlisting}

Our chains differ from Schutz' in that they use sets instead of his sequences, and that while he assumes chains to lie on paths, we prove this as a theorem (\lstinline|chain_on_path|).
We also have a stronger condition on preserving long-range order: in our case, $\ord{f(n)}{f(n')}{f(n'')}$ must hold for any $n<n'<n''$, while Schutz only considers $n+1 = n' = n''-1$.%
\footnote{A kind of chain that is more precisely similar to Schutz' definition is briefly introduced in Sec.~\ref{sec:order-fin-chain}.}
Notice that we split the definition between chains of two events, \lstinline|short_ch|, and chains with at least three events, \lstinline|long_ch_by_ord|, as Schutz does. The abbreviation \lstinline|path_ex| used in the definition of the two-event chain asserts that two elements are distinct, and that there is a path containing both.
The cardinality of a set $X$, denoted $|X|$ in prose, is \lstinline|card X| in Isabelle. It is a natural number, and infinite sets have cardinality $0$, just like the empty set does. The conditions involving cardinality in \lstinline|ordering| are used to ensure that a natural number is a valid index into the chain.


\begin{lstlisting}
definition short_ch :: "'a set \<Rightarrow> bool"
  where "short_ch X \<equiv>
    \<exists>x\<in>X. \<exists>y\<in>X. path_ex x y \<and> \<not>(\<exists>z\<in>X. z\<noteq>x \<and> z\<noteq>y)"

definition long_ch_by_ord :: "(nat \<Rightarrow> 'a) \<Rightarrow> 'a set \<Rightarrow> bool"
  where "long_ch_by_ord f X \<equiv>
    \<exists>x\<in>X. \<exists>y\<in>X. \<exists>z\<in>X. x\<noteq>y \<and> y\<noteq>z \<and> x\<noteq>z \<and> ordering f betw X"
\end{lstlisting}
\begin{lstlisting}
definition fin_long_chain :: "(nat\<Rightarrow>'a)\<Rightarrow>'a\<Rightarrow>'a\<Rightarrow>'a\<Rightarrow>'a set\<Rightarrow>bool"
    ("[_[_ .. _ ..  _]_]")
  where "fin_long_chain f x y z Q \<equiv>
    x\<noteq>y \<and> x\<noteq>z \<and> y\<noteq>z \<and> finite Q \<and> long_ch_by_ord f Q \<and>
    f 0 = x \<and> y\<in>Q \<and> f (card Q - 1) = z"
\end{lstlisting}

\noindent
Two auxiliary definitions are made to capture Schutz' prose definitions more directly.
\begin{lstlisting}
definition ch_by_ord :: "(nat \<Rightarrow> 'a) \<Rightarrow> 'a set \<Rightarrow> bool"
  where "ch_by_ord f X \<equiv>
    short_ch X \<or> long_ch_by_ord f X"

definition ch :: "'a set \<Rightarrow> bool"
  where "ch X \<equiv> \<exists>f. ch_by_ord f X"
\end{lstlisting}

\noindent
We point out the notation: a \lstinline|fin_long_chain| is denoted \lstinline|[f[x..y..z]X]|, and we carry the indexing function and the set of all chain elements explicitly; this is absorbed into Schutz' subscripting notation.
We are now ready to describe the final axiom of order.

\begin{theopargself}
\begin{axiom}[\customlabel{ax:O6}{O6}]
If $Q$, $R$, $S$ are distinct paths which meet at events $a \in Q \cap R$, $b \in Q\cap S$, $c \in R \cap S$ and if:
\begin{enumerate}
    \item[(i)] there is an event $d \in S$ such that $\ord{b}{c}{d}$, and
    \item[(ii)] there is an event $e \in R$ and a path $T$ which passes through both $d$ and $e$ such that $\ord{c}{e}{a}$,
\end{enumerate}
then $T$ meets $Q$ in an event $f$ which belongs to a finite chain $\chain[\cdot\cdot]{a,f,b}$.
\end{axiom}
\end{theopargself}

\begin{lstlisting}
locale MinkowskiChain = MinkowskiBetweenness +
  assumes O6:
    "\<lbrakk>Q \<in> \<P>; R \<in> \<P>; S \<in> \<P>; T \<in> \<P>; Q \<noteq> R; Q \<noteq> S; R \<noteq> S;
      a \<in> Q\<inter>R \<and> b \<in> Q\<inter>S \<and> c \<in> R\<inter>S;
      \<exists>d\<in>S. [[b c d]] \<and> (\<exists>e\<in>R. d \<in> T \<and> e \<in> T \<and> [[c e a]])\<rbrakk>
    \<Longrightarrow> \<exists>f\<in>T\<inter>Q. \<exists>X. [[a..f..b]X]"
\end{lstlisting}

Although the statement is technical, the intention of O6 (or Pasch's axiom) is simple.
Using some intuition from Euclidean geometry, a rough translation is: if three paths meet in a triangle, then a fourth path which intersects one side of the triangle externally, and another internally, must meet the third side internally as well (see Fig.~\ref{fig:O6}).
Such an intuitive understanding can be justified by noting that similar axioms occur e.g. in Hilbert's \textit{Grundlagen} \cite{hilbert1950} and its mechanisation \cite{meikle2003}; it is not O6 that makes our system non-Euclidean.

\enlargethispage{5\baselineskip}
\begin{figure}[ht!]
\centering
\includegraphics[width=.5\textwidth]{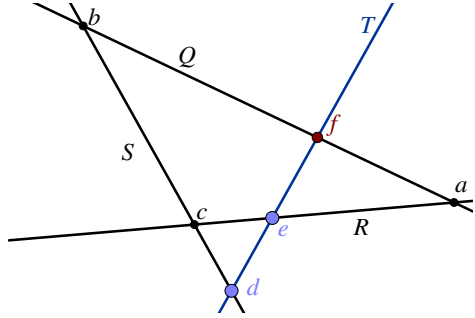}
\caption{\label{fig:O6}Intuitive visualisation of axiom O6. A path $T$ that meets $S$ externally to the triangle $QRS$ (in $d$) and meets $R$ internally (in $e$), must meet the third side of the triangle internally (in $f$).}
\end{figure}

\subsection{Unreachability}\label{sec:axioms:unreach}
While the axioms of the previous sections establish a geometry, nothing in them excludes a Euclidean space with Galilean relativity, i.e. velocities that are additive across reference frames \cite[p.~12]{schutz1997}.
Crucially, no speed limit is implied so far, and thus there is no trajectory through space and time that is forbidden.
The next group of axioms (I5-I7) specifies existence and basic properties of unreachable sets, a concept tightly linked to the lightcones often used in relativistic physics \cite[sec.~1.4]{gourgoulhon2013g}. In fact, if we pre-empt significantly, and hypothesise our undefined paths to relate to observer worldlines, one can glean the notion of an ultimate speed limit hidden in the condition that certain regions of spacetime should not be connected by paths. Ultimately, saying that nothing can move faster than some speed $c$ is merely the statement that certain histories or trajectories through space and time should not occur. We begin by formalising Schutz' various notions of unreachable sets.

\begin{definition}[Unreachable Subset from an Event] 
Given a path $Q$ and an event $b \notin Q$, we define the unreachable subset of $Q$ from $b$ to be
$$Q(b,\emptyset) := \left\lbrace x : \text{there is no path which contains $b$ and $x$}, x \in Q \right\rbrace \;.$$
\end{definition}
\begin{lstlisting}
definition unreachable_subset ::
  "'a set \<Rightarrow> 'a \<Rightarrow> 'a set" ("\<emptyset> _ _" [100, 100] 100)
  where "unreachable_subset Q b
           \<equiv> {x\<in>Q. Q \<in> \<P> \<and> b \<in> \<E> \<and> b \<notin> Q \<and> \<not>(path_ex b x)}"
\end{lstlisting}

The pen-and-paper definition is simple enough: it collects all the events $x$ of a path $Q$ that cannot be connected (by a path) to another event $b \notin Q$.
In prose, we use Schutz' notation $Q(b,\emptyset)$, where $\emptyset$ is used like a flag filtering elements of $Q$ whereas the Isabelle version uses \lstinline|\<emptyset> Q b|, where $\emptyset$ behaves as a function symbol. Note that the empty set in Isabelle is denoted \lstinline|{}|, so ambiguity is not an issue.

The second definition is more complex: if $Q$ meets $R$ at $x$, then we use the notation \linebreak\lstinline|\<emptyset> Q from Qa via R at x| to collect all events $Q_y \in Q$ that are on the side of the intersection $x$ given by $Q_a$, and where some event on $R$ is connected neither to $Q_a$ nor $Q_y$ (see Fig.~\ref{fig:unreach_via}).

\begin{definition}[Unreachable Subset via a Path~{\cite[pp.~16]{schutz1997}}]
For any two distinct paths $Q$, $R$ which meet at an event $x$, we define the unreachable subset of $Q$ from $Q_a$ via $R$ to be
$$Q(Q_a,R,x,\emptyset) := \left\lbrace Q_y : \ord{x}{Q_y}{Q_a} \text{ and } \exists R_w \text{ such that } Q_a, Q_y \in Q(R_w,\emptyset) \right\rbrace \;.$$
\end{definition}
\begin{lstlisting}
definition unreachable_subset_via ::
  "'a set \<Rightarrow> 'a \<Rightarrow> 'a set \<Rightarrow> 'a \<Rightarrow> 'a set"
    ("\<emptyset> _ from _ via _ at _" [100, 100, 100, 100] 100)
  where "unreachable_subset_via Q Qa R x
    \<equiv> {Qy. [[x Qy Qa]] \<and> (\<exists>Rw\<in>R. Qa \<in> \<emptyset> Q Rw \<and> Qy \<in> \<emptyset> Q Rw)}"
\end{lstlisting}

\begin{figure}
\centering
\includegraphics[width=.5\textwidth]{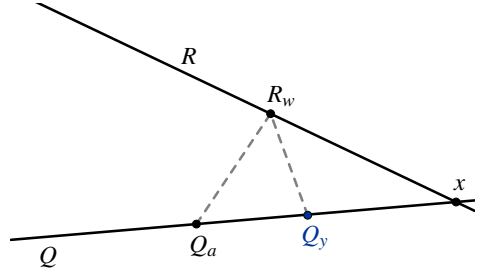}
\caption{\label{fig:unreach_via}The event $Q_y$ belongs to the unreachable subset of $Q$ from $Q_a$ via $R$. Thus there is an event $R_w$, such that there are no paths connecting $(Q_a, R_w)$ or $(Q_y, R_w)$ (dashed lines). In this case, $R_w$ also belongs to the unreachable subset of $R$ from $Q_a$.}
\end{figure}

Next, we give the formalised axioms I5-I7, introduced in the locale \linebreak\hbox{\lstinline|MinkowskiUnreachable|}, together with their prose formulation and some comment.
Axiom \ref{ax:I5} is simple once unreachable sets from events are understood. It has important implications for many proofs, since it is necessary to guarantee that the empty set is not a path (see Sec.~\ref{sec:bgr:isaind}, where this result serves as an example listing). It is the only axiom that mentions the existence of events on a path.

\begin{theopargself}
\begin{axiom}[\customlabel{ax:I5}{I5}]
For any path $Q$ and any event $b \notin Q$, the unreachable set $Q(b,\emptyset)$ contains (at least) two events.
\end{axiom}
\end{theopargself}
\todo[fancyline]{check globally and in Isa: chains are XYZ, paths are PQR}
\begin{lstlisting}
locale MinkowskiUnreachable = MinkowskiChain +
  assumes (*I5*) two_in_unreach:
    "\<lbrakk>Q \<in> \<P>; b \<in> \<E>; b \<notin> Q\<rbrakk> \<Longrightarrow> \<exists>x\<in>\<emptyset> Q b. \<exists>y\<in>\<emptyset> Q b. x \<noteq> y"
\end{lstlisting}

Schutz calls axiom I6 ``Connectedness of the Unreachable Set''. Indeed, given two unreachable (from $b$) events $Q_x, Q_z$ on a path $Q$, it essentially states that any points between $Q_x, Q_z$ must be unreachable too. This is phrased in terms of a finite chain with endpoints $Q_x,Q_z$.
\begin{theopargself}
\begin{axiom}[\customlabel{ax:I6}{I6}]
Given any path $Q$, any event $b \notin Q$ and distinct events $Q_x, Q_z \in Q(b,\emptyset)$, there is a finite chain $[Q_0 \;\dots\; Q_n]$ with $Q_0 = Q_x$ and $Q_n = Q_z$ such that for all $i \in \left\lbrace1,2,\dots,n\right\rbrace$,
\begin{enumerate}
    \item[(i)] $Q_i \in Q(b,\emptyset)$
    \item[(ii)] $\ord{Q_{i-1}}{Q_y}{Q_i} \implies Q_y \in Q(b,\emptyset)$.
\end{enumerate}
\end{axiom}
\end{theopargself}

Notice the extra clause for short chains in the formalisation: if we have only two events, ternary ordering is meaningless, thus so is $f$. This means that while Schutz often just doesn't mention two-event chains (supposing perhaps that this part of a proof is obvious), Isar statements and proofs have to be split, making them more complicated. The two-event clause was needed for the proof of Theorem~\ref{thm:13} (see Sec~\ref{sec:unreach-connected}).
\begin{lstlisting}
    assumes I6:
    "\<lbrakk>Q \<in> \<P>; b \<notin> Q; b \<in> \<E>; Qx \<in> (\<emptyset> Q b); Qz \<in> (\<emptyset> Q b)\<rbrakk>
     \<Longrightarrow> \<exists>X f. ch_by_ord f X \<and> f 0 = Qx \<and> f (card X - 1) = Qz \<and>
         (\<forall>i\<in>{1 .. card X - 1}. (f i) \<in> \<emptyset> Q b \<and>
             (\<forall>Qy\<in>\<E>. [[(f(i-1)) Qy (f i)]] \<longrightarrow> Qy \<in> \<emptyset> Q b)) \<and>
         (short_ch X \<longrightarrow> Qx\<in>X \<and> Qz\<in>X \<and>
            (\<forall>Qy\<in>\<E>. [[Qx Qy Qz]] \<longrightarrow> Qy \<in> \<emptyset> Q b))"
\end{lstlisting}


Axiom I7 about the ``Boundedness of the Unreachable Set'' is reminiscent of the Archi-medean property\todo{is this a good idea? archimedean is about fixed-size steps, while we have no length yet}, namely that one can ``leave'' the unreachable set in finitely many ``steps''. A simplified illustration is given in Fig.~\ref{fig:thm4}.
\begin{theopargself}
\begin{axiom}[\customlabel{ax:I7}{I7}]
Given any path $Q$, any event $b \notin Q$, and events $Q_x \in Q \setminus Q(b,\emptyset)$ and $Q_y \in Q(b,\emptyset)$, there is a finite chain
$$\chain{Q_0,Q_m,Q_n}$$ 
with $Q_0 = Q_x$, $Q_m = Q_y$ and $Q_n \in Q \setminus Q(b,\emptyset)$.
\end{axiom}
\end{theopargself}
We drop the double naming of the events $Q_x=Q_0$ and $Q_y=Q_m$, noting the index of $Q_x$ is implied once the chain $\chain{Q_x,Q_y,Q_n}$ is defined. The complement of the unreachable set, $Q \setminus Q(b,\emptyset)$, is best thought of as all the events of path $Q$ that \emph{can} be reached by a path passing through $b$. Axiom I7 is then straightforwardly formalised as:
\begin{lstlisting}
    assumes I7:
    "\<lbrakk>Q \<in> \<P>; b \<notin> Q; b \<in> \<E>; Qx \<in> Q - \<emptyset> Q b; Qy \<in> \<emptyset> Q b\<rbrakk>
     \<Longrightarrow> \<exists>g X Qn. [g[Qx..Qy..Qn]X] \<and> Qn \<in> Q - \<emptyset> Q b"
\end{lstlisting}

\subsection{Symmetry and Continuity}\label{sec:axioms:sym-cont}

The final two axioms, symmetry and continuity, both receive their own locale.
Although neither is used in proofs in this paper, we still present them in full as they are non-trivial to formalise in Isabelle.

The axiom of symmetry is a hefty statement that, according to Schutz \cite{schutz1997}, serves as a replacement of an entire axiom group in geometries such as Hilbert's \textit{Grundlagen}. Continuity is simple to state, but relies on mechanised definitions of bounds and closest bounds. 
We break up the presentation of the formalised axiom of symmetry, explaining the conclusion as we go along. See also Figure~\ref{fig:symmetry-axiom}.

\begin{theopargself}
\begin{axiom}[\customlabel{ax:symmetry}{S (Symmetry)}]
If $Q,R,S$ are distinct paths which meet at some event $x$ and if $Q_a \in Q$ is an event distinct from $x$ such that
$$Q(Q_a,R,x,\emptyset) = Q(Q_a,S,x,\emptyset)$$
then
\begin{enumerate}
    \item[(i)]
there is a mapping $\theta:\E\longrightarrow\E$
    \item[(ii)]
which induces a bijection $\Theta:\P \longrightarrow \P$, such that
    \item[(iii)]
the events of $Q$ are invariant, and
    \item[(iv)]
$\Theta : R \longrightarrow S$.
\end{enumerate}
\end{axiom}
\end{theopargself}

\begin{figure}
    \centering
    \includegraphics[width=.5\textwidth]{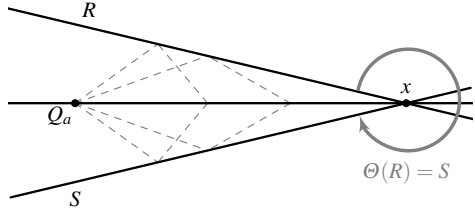}
    \caption{%
        \label{fig:symmetry-axiom}%
        Visualisation of Axiom~\ref{ax:symmetry}. The unreachable subsets of $Q$ from $Q_a$ via $R$ and $S$ (indicated by dashed lines) are equal, so the induced symmetry mapping $\Theta$ takes $R$ to $S$.
    }
\end{figure}

\begin{lstlisting}
locale MinkowskiSymmetry = MinkowskiUnreachable +
  assumes Symmetry:
    "\<lbrakk>Q \<in> \<P>; R \<in> \<P>; S \<in> \<P>; Q \<noteq> R; Q \<noteq> S; R \<noteq> S;
    x \<in> Q\<inter>R\<inter>S; Q\<^sub>a \<in> Q; Q\<^sub>a \<noteq> x;
    \<emptyset> Q from Q\<^sub>a via R at x = \<emptyset> Q from Q\<^sub>a via S at x\<rbrakk>
\end{lstlisting}

The first two lines essentially say that $Q,R,S$ are distinct paths in \lstinline|SPRAY x| (see Sec.~\ref{sec:axioms:dim}), and obtain an event $Q_a \neq x$ on $Q$.
The third states that the unreachable sets of $Q$ from the source $x$ via $R$ and $S$ are the same.

\todo[fancyline]{commented paragraph: rewrite?}

We split up the conclusion of the axiom, reproducing Schutz' prose \cite[p.~16]{schutz1997} for each of the parts (i)-(iv); notice the first line below quantifies the entire conclusion.\\
\begin{enumerate}
    \item[(i)]
there is a mapping $\theta:\E\longrightarrow\E$
\begin{lstlisting}
    \<Longrightarrow> \<exists>\<theta>::'a\<Rightarrow>'a.
\end{lstlisting}
    \item[(ii)]
which induces%
\footnote{Schutz doesn't give an explicit form for $\Theta$. Since the set of paths is contained in the powerset of events, taking the direct image under $\theta$ to be the induced bijection seems the only choice.}
a bijection $\Theta:\P \longrightarrow \P$
\begin{lstlisting}
                bij_betw (\<lambda>P. {\<theta> y | y. y\<in>P}) \<P> \<P> \<and>
\end{lstlisting}
    \item[(iii)]
the events of $Q$ are invariant, and
\begin{lstlisting}
                (y\<in>Q \<longrightarrow> \<theta> y = y) \<and> 
\end{lstlisting}
    \item[(iv)]
$\Theta : R \longrightarrow S$
\begin{lstlisting}
                (\<lambda>P. {\<theta> y | y. y\<in>P}) R = S
\end{lstlisting}
\end{enumerate}

Schutz' statement is not completely clear on whether he means $Q$ to be invariant under $\theta$ or $\Theta$. We settled on the stronger version, involving $\theta$-invariance: it is stronger than the alternative only by also preserving the ordering of the events on $Q$. Since this ordering affects unreachable sets, not preserving it seemed to go against the spirit of the axiom.

The axiom of continuity compares to the property of least upper bounds on the real numbers (also called Dedekind completeness). Indeed, Theorem~12 (entitled ``Continuity''), the first to use this axiom, deals with sets that look very similar to Dedekind cuts \cite{dedekind1963}.
Bounds are defined by Schutz only for infinite chains.


\begin{definition}[(Closest) Bound {\cite[pp. 17]{schutz1997}}]
Given a path $Q \in \P$ and an infinite chain $[Q_0, Q_1 \;\dots\; ]$ of events in $Q$, the set
$$\mathcal{B} = \left\lbrace Q_b : i < j \implies \ord{Q_i}{Q_j}{Q_b}; Q_i, Q_j, Q_b \in Q\right\rbrace \;,$$
is called the \emph{set of bounds} of the chain: if $\mathcal{B}$ is non-empty we say that the chain is \emph{bounded}. If there is a bound $Q_b \in \mathcal{B}$ such that for all $Q_{b'} \in \mathcal{B} \setminus \left\lbrace Q_b \right\rbrace$,
$$\ord{Q_0}{Q_b}{Q_{b'}}$$
we say that $Q_b$ is a \emph{closest bound}.
\end{definition}

\begin{theopargself}
\begin{axiom}[\customlabel{ax:C}{C} (Continuity)]
Any bounded infinite chain has a closest bound.
\end{axiom}
\end{theopargself}

The formalisation in this case is straightforward. We formally define bounds first.
\begin{lstlisting}
definition is_bound_f :: "'a \<Rightarrow> 'a set \<Rightarrow> (nat\<Rightarrow>'a) \<Rightarrow> bool" where
  "is_bound_f Q\<^sub>b Q f \<equiv>
    \<forall>i j ::nat. [f[(f 0)..]Q] \<and> (i<j \<longrightarrow> [[(f i) (f j) Q\<^sub>b]])"
definition bounded :: "'a set \<Rightarrow> bool" where
  "bounded Q \<equiv> \<exists> Q\<^sub>b f. is_bound_f Q\<^sub>b Q f"
definition closest_bound :: "'a \<Rightarrow> 'a set \<Rightarrow> bool" where
  "closest_bound Q\<^sub>b Q \<equiv> \<exists>f. is_bound_f Q\<^sub>b Q f \<and>
    (\<forall> Q\<^sub>b'. (is_bound Q\<^sub>b' Q \<and> Q\<^sub>b' \<noteq> Q\<^sub>b) \<longrightarrow> [[(f 0) Q\<^sub>b Q\<^sub>b']])"
\end{lstlisting}

\noindent
The axiom of continuity is now so simple that the Isabelle locale below is easily readable.
\begin{lstlisting}
locale MinkowskiContinuity = MinkowskiSymmetry +
  assumes Continuity: "bounded Q \<longrightarrow> (\<exists>Q\<^sub>b. closest_bound Q\<^sub>b Q)"
\end{lstlisting}

\subsection{Path Dependence and Dimension}\label{sec:axioms:dim}
The final axiom we introduce is that of dimension. It comes last in our hierarchy of locales because spacetimes in different numbers of dimensions can then be constructed. Thus we found it sensible to have an easily replaceable top layer that specifies only the axiom least critical to the general Minkowski spacetime structure, in case one wants to explore other dimensions.

However, this axiom has a hidden purpose much more fundamental than we first realised: it is the only one that excludes a singleton set of events with an empty set of paths from being a model. As a result, the axiom of dimension turns out to be crucial to several fairly basic proofs involving geometric construction of several paths (that without it could not be guaranteed to exist), and we end up working inside the full \lstinline|MinkowskiSpacetime| locale for many more proofs than originally expected (notably, any proof requiring the overlapping ordering lemmas presented in Sec.~\ref{sec:order-path}). A minor restructuring could isolate an axiom for existence of at least one path: if applications in higher or lower dimensions are deemed important in future work, this is easily done.%
\footnote{This may not even break independence, as Schutz' independence model for I4 is simply 1+1-dimensional spacetime.}
We keep Schutz' formulation for now.

Defining dimensionality in linear algebra requires the idea of linear dependence and independence.
Since vector spaces are not included in our axioms, we need a more basic notion, namely an idea of paths depending on other paths.
This relation is defined only for a set of paths that all cross in one point and is called a $\spray$ \cite[p.~13]{schutz1997}.
\begin{definition}
Given any event $x$,
$$\spray[x] := \left\lbrace R: R \ni x, R \in \P\right\rbrace \;.$$
\end{definition}
\begin{lstlisting}
definition SPRAY :: "'a \<Rightarrow> ('a set) set"
  where "SPRAY x \<equiv> {R\<in>\<P>. x \<in> R}"
\end{lstlisting}

\noindent Path dependence in a $\spray$ is defined first for a set of three paths \cite[p.~13]{schutz1997}:

\begin{definition}
A subset of three paths of a $\spray$ is \textit{dependent} if there is a path which does not belong to the $\spray$ and which contains one event from each of the three paths: we also say any one of the three paths is \textit{dependent on} the other two. Otherwise the subset is \textit{independent}.
\end{definition}

\begin{lstlisting}
definition dep3_event :: "'a set \<Rightarrow> 'a set \<Rightarrow> 'a set \<Rightarrow> 'a \<Rightarrow> bool"
  where "dep3_event Q R S x
    \<equiv> Q \<noteq> R \<and> Q \<noteq> S \<and> R \<noteq> S
      \<and> Q \<in> SPRAY x \<and> R \<in> SPRAY x \<and> S \<in> SPRAY x
      \<and> (\<exists>T\<in>\<P>. T \<notin> SPRAY x
        \<and> (\<exists>y\<in>Q. y \<in> T) \<and> (\<exists>y\<in>R. y \<in> T) \<and> (\<exists>y\<in>S. y \<in> T))"
\end{lstlisting}

To obtain path dependence for an arbitrary number of paths, we extend the base case above by induction, quoting Schutz \cite[p.~14]{schutz1997}:

\begin{definition}
A path $T$ is \textit{dependent on} the set of $n$ paths (where $n \ge 3$)
$$S = \left\lbrace Q^{(i)} : i = 1, 2, \dots, n;\; Q^{(i)} \in \spray[x]\right\rbrace $$
if it is dependent on two paths $S^{(1)}$ and $S^{(2)}$, where each of these two paths is dependent on some subset of $n - 1$ paths from the set $S$. We also say that the set of $n+1$ paths $S\cup \left\lbrace T\right\rbrace$ is a \textit{dependent set}. If a set of paths has no dependent subset, we say that the set of paths is an \textit{independent set}.
\end{definition}

\begin{lstlisting}
inductive dep_path :: "'a set \<Rightarrow> ('a set) set \<Rightarrow> 'a \<Rightarrow> bool"
  where
    dep_two: "dep3_event T A B x \<Longrightarrow> dep_path T {A, B} x"
  | dep_n: "\<lbrakk>S \<subseteq> SPRAY x; card S \<ge> 3; dep_path T {S1, S2} x;
      S' \<subseteq> S; S'' \<subseteq> S; card S' = card S - 1; card S'' = card S - 1;
      dep_path S1 S' x; dep_path S2 S'' x\<rbrakk>
        \<Longrightarrow> dep_path T S x"
\end{lstlisting}

This definition uses the keyword \lstinline|inductive|, which allows us to give a non-recursive base case and induction rules, to create the minimal set of triplets $T,S,x$ such that \linebreak\lstinline|dep_path T S x|.
Notice that we keep track of the (source of the) $\spray$ that the paths exist in explicitly, while Schutz keeps this implicit, referring to it as and when needed.
This leaves us with only the job of transforming this inductive definition into an analytical one, such that a set of paths can be examined and found dependent or not, rather than being able only to construct such sets to measure.

\begin{lstlisting}
definition dep_set :: "('a set) set \<Rightarrow> bool"
  where "dep_set S \<equiv> \<exists>x. \<exists>S'\<subseteq>S. \<exists>P\<in>(S-S'). dep_path P S' x"
  
definition indep_set :: "('a set) set \<Rightarrow> bool"
  where "indep_set S \<equiv> \<not>(\<exists>T \<subseteq> S. dep_set T)"
\end{lstlisting}

\noindent
Now the axiom of dimension can be given as follows, with a final definition:

\begin{definition}
A $\spray$ is a 3-$\spray$ if:
\begin{enumerate}
\item[(i)] it contains four independent paths, and
\item[(ii)] all paths of the $\spray$ are dependent on these four paths.
\end{enumerate}
\end{definition}
\begin{theopargself}
\begin{axiom}[\customlabel{ax:I4}{I4} (Dimension)]
If $\E$ is non-empty, then there is at least one 3-$\spray$.
\end{axiom}
\end{theopargself}

Notice Schutz introduces the Axiom~\ref{ax:I1} into the antecedent of Axiom~\ref{ax:I4}. This serves the purpose of conserving independence: the empty set is an obvious model for proving independence of \ref{ax:I1}, and in this current formulation, the empty event-set vacuously satisfies Axiom~\ref{ax:I4}.

Formalising the 3-$\spray$ in Isabelle/HOL is long because we need to introduce the four distinct paths, all of them in a $\spray$. The final two lines of the definition are the interesting ones. Much like the Axiom of Continuity, Dimension becomes very simple, even in Isabelle, once all the preparation is complete.
\begin{lstlisting}
definition three_SPRAY :: "'a \<Rightarrow> bool" where
  "three_SPRAY x \<equiv> \<exists>S1\<in>\<P>. \<exists>S2\<in>\<P>. \<exists>S3\<in>\<P>. \<exists>S4\<in>\<P>.
    S1 \<noteq> S2 \<and> S1 \<noteq> S3 \<and> S1 \<noteq> S4 \<and> S2 \<noteq> S3 \<and> S2 \<noteq> S4 \<and> S3 \<noteq> S4
    \<and> S1 \<in> SPRAY x \<and> S2 \<in> SPRAY x \<and> S3 \<in> SPRAY x \<and> S4 \<in> SPRAY x
    \<and> (indep_set {S1, S2, S3, S4})
    \<and> (\<forall>S\<in>SPRAY x. dep_path S {S1,S2,S3,S4} x)"

locale MinkowskiSpacetime = MinkowskiContinuity +
  (* I4 *)
  assumes ex_3SPRAY: "\<E> \<noteq> {} \<Longrightarrow> \<exists>x\<in>\<E>. three_SPRAY x"
\end{lstlisting}

\section{Formalisation: Temporal Order on a Path}\label{sec:ch3}
We have formalised all of Schutz' results from Chapter 3 (\emph{Temporal Order on a Path}) of his monograph, except for Theorem~12 (\emph{Continuity}; see Section~\ref{sec:conclusion} for a short discussion). In many cases, his statements had to be extended or amended to pass Isabelle's unforgiving scrutiny. In what follows, rather than giving formal proofs for all of these results, we sketch the proofs given by Schutz and highlight interesting features of their formalisation. We refer to the Isabelle proof document%
\footnote{To be accessed at \href{https://github.com/rhjs94/schutz-minkowski-space.git}{https://github.com/rhjs94/schutz-minkowski-space}.}
for the complete proof script, and the original monograph \cite{schutz1997} for sometimes more extensive prose, when we do not reproduce it.

We endeavour to present proof procedures at a comfortable level of detail. Fairly often, extra steps required in Isabelle are obvious to the inspecting reader; usually their omission does not obscure the flow of the overall argument. We therefore employ ``snipping'' rather freely. We denote by \lstinline|<proof>| a proof that was cut, but exists in the associated proof script. The notation \lstinline|(*...*)| is used for cutting away multiple not necessarily related lines, or even just a part of a line. This relaxation is possible because we trust the Isabelle verification of our proof: if one wanted to verify all the statements in this paper, one could simply make sure they exist in the Isabelle theory, identify the introduced axioms, and let Isabelle check the entire file. Regardless of snipping, all results presented are completed and accepted by Isabelle.

The following section is ordered as in Schutz' monograph, and this structure is reflected in the formal proof document as well.

    \subsection{Order on a finite chain}
        \label{sec:order-fin-chain}
        \begin{theorem}\label{thm:1}
    If $\ord{a}{b}{c}$ then $\ord{c}{b}{a}$ and no other order.
\end{theorem}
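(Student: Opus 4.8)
The plan is to get the positive conclusion immediately and then eliminate the four remaining orderings by a short case analysis, using only Axioms~\ref{ax:O2}, \ref{ax:O3}, \ref{ax:O4} (Schutz also invokes \ref{ax:O1} and \ref{ax:O5} for this result, but they turn out to be unnecessary). The statement $\ord{c}{b}{a}$ follows at once from \ref{ax:O2} applied to $\ord{a}{b}{c}$. For ``and no other order'', observe that the six possible orderings of the three events fall into three pairs under \ref{ax:O2}: $\{\ord{a}{b}{c}, \ord{c}{b}{a}\}$, $\{\ord{b}{c}{a}, \ord{a}{c}{b}\}$, and $\{\ord{c}{a}{b}, \ord{b}{a}{c}\}$. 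Since the first pair holds, it suffices to refute one representative of each of the other two.

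First I would rule out $\ord{b}{c}{a}$: feeding $\ord{a}{b}{c}$ and $\ord{b}{c}{a}$ into \ref{ax:O4} (in our distinctness-free form \lstinline|abc_bcd_abd|, instantiated with $d := a$) yields $\ord{a}{b}{a}$, and \ref{ax:O3} (\lstinline|abc_ac_neq|) then forces $a \neq a$, a contradiction. By \ref{ax:O2}, $\ord{a}{c}{b}$ implies $\ord{b}{c}{a}$, so it is excluded as well.

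Next I would rule out $\ord{c}{a}{b}$: applying \ref{ax:O4} to $\ord{c}{a}{b}$ and $\ord{a}{b}{c}$ gives $\ord{c}{a}{c}$, which again contradicts \ref{ax:O3}. And $\ord{b}{a}{c}$ implies $\ord{c}{a}{b}$ by \ref{ax:O2}, so that case falls too. This exhausts all six possibilities.

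There is no serious obstacle here; the one point worth flagging is that our weakened \ref{ax:O3} delivers only $a \neq c$ rather than full three-way distinctness, but that is exactly the shape of contradiction produced above, since each bad case collapses to some $\ord{x}{y}{x}$. In Isabelle, ``and no other order'' is spelled out as a conjunction of four negations, and each conjunct is discharged by the two- or three-step chain of \ref{ax:O2}/\ref{ax:O4}/\ref{ax:O3} applications described above, so a single automated call supplied with these facts closes each subgoal.
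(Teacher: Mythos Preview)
Your proposal is correct and matches the paper's approach essentially line for line: the positive conclusion via \ref{ax:O2}, then each forbidden ordering reduced (using the distinctness-free \ref{ax:O4}) to some $\ord{x}{y}{x}$ contradicting \ref{ax:O3}, with the remaining cases falling out by \ref{ax:O2}-symmetry. One small discrepancy: the paper's \lstinline|theorem1| actually states only two of the four negations explicitly (the other two being immediate via \ref{ax:O2}), though a companion lemma \lstinline|abc_only_cba| records all four as you describe.
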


The point of this theorem is really to exclude other orders, as $\ord{c}{b}{a}$ is explicitly established by Axiom~\ref{ax:O2}. Schutz proceeds by contradiction, and following him forced us to change Axiom~\ref{ax:O4}. For example, Schutz claims that $\ord{b}{c}{a}$ implies (with $\ord{a}{b}{c}$) the order $\ord{a}{b}{a}$ via Axiom~\ref{ax:O4}. This works only if Axiom~\ref{ax:O4} is changed to allow, in the notation of its definition in Sec.~\ref{sec:axioms}, the case $a=d$.
We obtain a contradiction from $\ord{a}{b}{a}$ and Axiom~\ref{ax:O3}, which applies here to give $a \neq a$.

\begin{lstlisting}
theorem theorem1:
  assumes abc: "[[a b c]]"
  shows "[[c b a]] \<and> \<not> [[b c a]] \<and> \<not> [[c a b]]"
\end{lstlisting}

Our formalisation is concerned only with two of the four impossible orderings, the rest being trivial via Axiom~\ref{ax:O2}.
In addition to \lstinline|theorem1|, we prove a similar result called \lstinline|abc_only_cba|. This concludes only the impossible orderings from $\ord{a}{b}{c}$, and is used frequently in the rest of the formalisation. It follows from \ref{ax:O2}, \ref{ax:O3}, and \ref{ax:O4} like Theorem~\ref{thm:1}.

\begin{lstlisting}
lemma abc_only_cba: 
  "[[a b c]] 
   \<Longrightarrow> \<not> [[b a c]] \<and> \<not> [[a c b]] \<and> \<not> [[b c a]] \<and> \<not> [[c a b]]"
\end{lstlisting}

The second theorem, ``Order on a Finite Chain'', begins building a link between Schutz' definition of chains, and ours (Sec.~\ref{sec:axioms:chains}) \cite[p.~18]{schutz1997}.
In fact, it allows to transform a chain with only local orderings (orderings of elements with adjacent indices) into one where any three events on the chain can be ordered -- the latter being true of our chains by definition. In this way, Theorem~\ref{thm:2} justifies our definition, since Schutz' chains can be immediately transformed into this stronger variety.

\begin{theorem}[Order on a Finite Chain]\label{thm:2}
    On any finite chain \chain{Q_0, Q_n}, there is a betweenness relation for each ordered triple; that is
    $$0 \leq i < j < l \leq n \implies \ord{Q_i}{Q_j}{Q_l}\;.$$
    Furthermore all events of a chain are distinct.
\end{theorem}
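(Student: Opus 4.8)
The plan is to prove this by induction on the length of the chain, upgrading the \emph{local} orderings that come with Schutz' definition of a chain to the full family of long-range orderings, and then reading off distinctness. Write the chain as $\chain{Q_0,Q_n}$; the only betweenness facts we are entitled to assume are $\ord{Q_{i-2}}{Q_{i-1}}{Q_i}$ for $2 \le i \le n$ (vacuously none when $n \le 1$), together with the definitional fact that the chain has at least two distinct events. The target is $\ord{Q_i}{Q_j}{Q_l}$ for all $0 \le i < j < l \le n$, plus pairwise distinctness of all the $Q_k$.

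\emph{Base cases.} When $n = 1$ the chain is a \lstinline|short_ch|: there is no triple $i<j<l$, so the ordering claim is vacuous, and $Q_0 \ne Q_1$ holds by definition. When $n = 2$ the sole triple is $(0,1,2)$, and $\ord{Q_0}{Q_1}{Q_2}$ is precisely the defining local ordering; Axiom~\ref{ax:O3} then gives that $Q_0,Q_1,Q_2$ are distinct.

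\emph{Inductive step.} Assume the statement for chains of length $n \ge 2$ and consider $\chain{Q_0,Q_{n+1}}$. Its initial segment $\chain{Q_0,Q_n}$ is again a chain of length $n$, so by the induction hypothesis it already satisfies $\ord{Q_i}{Q_j}{Q_l}$ for all $i<j<l\le n$ and has distinct events; only triples ending at index $n+1$ and distinctness of $Q_{n+1}$ remain. The key tool is a \emph{shifted} form of Axiom~\ref{ax:O4}: from $\ord{a}{b}{c}$ and $\ord{b}{c}{d}$ one obtains $\ord{a}{c}{d}$. This is immediate from \ref{ax:O4} applied to the \ref{ax:O2}-reversed hypotheses $\ord{d}{c}{b}$ and $\ord{c}{b}{a}$, giving $\ord{d}{c}{a}$, and reversing once more with \ref{ax:O2}. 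Applying it with $a=Q_i$, $b=Q_{n-1}$, $c=Q_n$, $d=Q_{n+1}$ — where $\ord{Q_i}{Q_{n-1}}{Q_n}$ is supplied by the induction hypothesis for $i \le n-2$, and $\ord{Q_{n-1}}{Q_n}{Q_{n+1}}$ is the new local ordering — yields $\ord{Q_i}{Q_n}{Q_{n+1}}$ for every $i \le n-2$; for $i=n-1$ this is the new local ordering itself. Now for an arbitrary triple $(i,j,n+1)$ with $i<j<n$, combine $\ord{Q_i}{Q_j}{Q_n}$ (induction hypothesis) with the just-established $\ord{Q_j}{Q_n}{Q_{n+1}}$ via \ref{ax:O4} directly to get $\ord{Q_i}{Q_j}{Q_{n+1}}$. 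This covers all new triples. For distinctness: $Q_{n+1} \ne Q_n$ and $Q_{n+1} \ne Q_{n-1}$ follow from $\ord{Q_{n-1}}{Q_n}{Q_{n+1}}$ by \ref{ax:O3}, and for $k < n-1$ the ordering $\ord{Q_k}{Q_{n-1}}{Q_{n+1}}$ (an instance of what we just proved) with \ref{ax:O3} gives $Q_k \ne Q_{n+1}$; together with the induction hypothesis, all events of $\chain{Q_0,Q_{n+1}}$ are distinct. Note there is no circularity: the ordering derivations use only \ref{ax:O2} and \ref{ax:O4}, neither of which carries a distinctness hypothesis, and distinctness is extracted from the orderings afterwards via \ref{ax:O3}.

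\textbf{Expected main obstacle.} The mathematics is light once the shifted form of \ref{ax:O4} is isolated; the real difficulty, and the likely source of any gap in Schutz' prose, is bookkeeping in the mechanisation: threading the cardinality side-conditions of \lstinline|ordering| and \lstinline|long_ch_by_ord| through the induction, keeping the \lstinline|short_ch| case separated from the \lstinline|long_ch_by_ord| case at each stage, matching Schutz' index shift $Q_{i-2},Q_{i-1},Q_i$ against a \lstinline|nat|-indexed function with its off-by-one hazards, and handling the two-event base case before any ternary relation is even meaningful. It is also worth recording that the weakened Axiom~\ref{ax:O4} (with the case $a=d$ permitted, as forced by Theorem~\ref{thm:1}) is exactly what lets all the \ref{ax:O4} applications above go through without spurious distinctness premises.
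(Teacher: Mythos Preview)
Your proof is correct, but it takes a different inductive decomposition from the paper's (which follows Schutz). You induct on the \emph{length} $n$ of the chain: assume the result for $\chain{Q_0,Q_n}$, peel off $Q_{n+1}$, and close the new triples using a shifted form of \ref{ax:O4}. The paper instead keeps the chain fixed and runs \emph{two index inductions} on the middle position $j$: one (decreasing $j$ below $l$) to get $\ord{Q_j}{Q_{l-1}}{Q_l}$-type facts, and a second (increasing $j$ above $i$) to bridge to arbitrary $i<j<l$, each step propagating an adjacent ordering along the chain via \ref{ax:O2} and \ref{ax:O4}. Both routes ultimately use exactly the same axioms and derive distinctness afterwards from \ref{ax:O3}, so neither is deeper. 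Your approach buys a single outer induction and a clean story for the $n=1,2$ base cases, at the cost of needing the (easy) observation that an initial segment of a chain is again a chain; Schutz' index-based approach avoids talking about subchains but splits the argument into two separate inductions. Your closing remark about circularity and the weakened \ref{ax:O4} is on point and matches the paper's discussion of why the distinctness premise in Schutz' original \ref{ax:O4} had to be dropped.
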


\noindent
This theorem is true by definition for the chains we define in Sec.~\ref{sec:axioms:chains}. Indeed, it can be verified by the prover \lstinline|metis| in a single line \cite{hurd2003,smolka2013}.

\begin{lstlisting}
theorem (*2*) order_finite_chain:
  assumes chX: "long_ch_by_ord f X"
      and finiteX: "finite X"
      and ordered_nats: "0 \<le> (i::nat) \<and> i < j \<and> j < l \<and> l < card X"
    shows "[[(f i) (f j) (f l)]]"
  by (metis ordering_def chX long_ch_by_ord_def ordered_nats)
\end{lstlisting}

In order to check that Schutz' proof holds,
we introduce a new definition for chains, \lstinline|long_ch_by_ord2|. This is closer to Schutz' original definition, and similar to \linebreak\hbox{\lstinline|long_ch_by_ord|},
except for imposing ordering relations only on adjacent events.

\begin{lstlisting}
definition ordering2 ::
  "(nat \<Rightarrow> 'a) \<Rightarrow> ('a \<Rightarrow> 'a \<Rightarrow> 'a \<Rightarrow> bool) \<Rightarrow> 'a set \<Rightarrow> bool"
  where "ordering2 f ord X
    \<equiv> (*...*) (\<forall>n n' n''.
      (finite X\<longrightarrow>n'' < card X) \<and> Suc n = n' \<and> Suc n' = n''
        \<longrightarrow> ord (f n) (f n') (f n''))"
\end{lstlisting}

\begin{lstlisting}
definition long_ch_by_ord2 ::
  "(nat \<Rightarrow> 'a) \<Rightarrow> 'a set \<Rightarrow> bool"
  where "long_ch_by_ord2 f X
    \<equiv> \<exists>x\<in>X. \<exists>y\<in>X. \<exists>z\<in>X. x\<noteq>y \<and> y\<noteq>z \<and> x\<noteq>z \<and> ordering2 f betw X"
\end{lstlisting}

We can then state Theorem~\ref{thm:2} using this new chain. Notice that Theorem~\ref{thm:2} strengthens the ordering relations between chain elements to an extent that is sufficient to prove equivalence between \lstinline|long_ch_by_ord| and \lstinline|long_ch_by_ord2|, provided the chains are finite. This is why we use the former in most of our formalisation: it gives immediate access to a more powerful relationship between chain events.


\begin{lstlisting}
theorem order_finite_chain2:
  assumes chX: "long_ch_by_ord2 f X"
      and finiteX: "finite X"
      and ordered_nats: "0 \<le> (i::nat) \<and> i < j \<and> j < l \<and> l < card X"
  shows "[[(f i) (f j) (f l)]]"
\end{lstlisting}

The proof of Theorem~\ref{thm:2} follows the outline of Schutz \cite[p.~19]{schutz1997}: it is split into two proofs by induction on decreasing $j$ for $j<l$, and increasing $j$ for $i<j$. The induction step propagates ordering relations along increasing/decreasing indices using Axioms~\ref{ax:O2} and \ref{ax:O4}.


Distinctness of chain events is an obvious conclusion of the first part of the theorem and Axiom~\ref{ax:O3}. Our explicit handling of indices allows for a clearer statement of this property, namely that distinct indices label distinct events (i.e. the indexing function is injective). Several such statements are included in the formalisation, and we give an example below. The proof relies notably on Axiom~\ref{ax:O3} only, but involves a few case splits according to how we can find a third element for the betweenness relation (e.g. whether a natural number exists between $i$ and $j$ or not).

\begin{lstlisting}
theorem (*2ii*) index_injective:
  fixes i::nat and j::nat
  assumes chX: "long_ch_by_ord2 f X"
      and finiteX: "finite X"
      and indices: "i<j" "j<card X"
    shows "f i \<noteq> f j"
\end{lstlisting}

Schutz follows the statement of Theorem~\ref{thm:2} with the remark that Theorem~\ref{thm:10} extends it to any finite subset of a path. Indeed, there is a tight relationship between these two results, and we will mention Theorem~\ref{thm:2} again in Sec.~\ref{sec:order-path}.

We can now prove an explicit claim of our chains being the same (in the finite case) as Schutz'. The proofs for each individual direction of the equivalence go through easily using Theorem~\ref{thm:2}.

\begin{lstlisting}
lemma ch_equiv:
  assumes "finite X"
  shows "long_ch_by_ord f X \<longleftrightarrow> long_ch_by_ord2 f X"
\end{lstlisting}

    \subsection{First collinearity theorem}
        \label{sec:collinearity1}
        We begin by defining a fundamental structure for the geometric proofs to come. This can be intuitively thought of as a triangle -- while maintaining the reassurance that Isabelle will not allow us to use any unproven Euclidean intuition about triangles.

\begin{definition}[Kinematic Triangle]
A set of three distinct events $\left\lbrace a,b,c \right\rbrace$ is called a \textit{kinematic triangle} if each pair of events belongs to one of three distinct paths: we will refer to the kinematic triangle $\triangle abc$, or simply $\triangle abc$.
\end{definition}

Furthermore, since each path is defined by any two distinct points that lie on it (thanks to Axiom ~\ref{ax:I3}), we shall denote a path that contains two distinct events $a$ and $b$ as $ab$. In Isabelle, this shorthand is not possible, but we approximate it using the following Isabelle abbreviations. 

\begin{lstlisting}
abbreviation path :: "'a set \<Rightarrow> 'a \<Rightarrow> 'a \<Rightarrow> bool" where
  "path ab a b \<equiv> ab \<in> \<P> \<and> a \<in> ab \<and> b \<in> ab \<and> a \<noteq> b"

abbreviation path_of :: "'a \<Rightarrow> 'a \<Rightarrow> 'a set" where
  "path_of a b \<equiv> THE ab. path ab a b"
\end{lstlisting}

Theorem~3 is a straightforward application of the Axiom of Collinearity (\ref{ax:O6}, see also Fig.~\ref{fig:O6}), and named after it. Schutz provides three results of this name, of increasing complexity, with Theorem~7 being the other one included in our formalisation. The Third Collinearity Theorem, numbered 15, is fundamental to Schutz' treatment of optical lines and causality \cite[chap.~4]{schutz1997}. Its proof relies heavily on the preceding Collinearity Theorems.

\begin{theorem}[Collinearity]\label{thm:3}
    Given a kinematic triangle $\triangle abc$ and events $d,e$ such that
    \begin{enumerate}
        \item[(i)] there is a path $de$, and
        \item[(i)] $\ord{b}{c}{d}$ and $\ord{c}{e}{a}$
    \end{enumerate}
    then $de$ meets $ab$ in an event $f$ such that $\ord{a}{f}{b}$.
\end{theorem}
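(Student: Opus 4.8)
The plan is to read Theorem~\ref{thm:3} as a direct instance of Axiom~\ref{ax:O6}, with the three sides of the kinematic triangle supplying the paths $Q$, $R$, $S$ and the path $de$ supplying $T$. Concretely, I would set $Q$ to be the path through $a$ and $b$, $R$ the path through $a$ and $c$, and $S$ the path through $b$ and $c$; the definition of $\triangle abc$ guarantees these exist, are pairwise distinct, and satisfy $a \in Q\cap R$, $b \in Q\cap S$, $c \in R\cap S$. Take $T$ to be the path $de$ from hypothesis~(i).

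First I would discharge the two numbered premises of \ref{ax:O6}. From $\ord{b}{c}{d}$, Axiom~\ref{ax:O1} yields a path containing $b$, $c$, $d$; since it contains the distinct events $b$ and $c$, Axiom~\ref{ax:I3} identifies it with $S$, so $d \in S$ and premise~(i) holds. Symmetrically, $\ord{c}{e}{a}$ puts $e$ on the path through $c$ and $a$, i.e.\ $e \in R$, and together with the path $de = T$ through $d$ and $e$ this gives premise~(ii). All hypotheses of \ref{ax:O6} are now met, so it produces an event $f \in T\cap Q = de \cap ab$ lying on a finite chain $\chain[\cdot\cdot]{a,f,b}$.

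It remains to recover $\ord{a}{f}{b}$. I would first show $f \neq a$: if $f = a$ then $a \in de$, so $de$ contains $a$ and $e$, both of which lie on the path through $a$ and $c$; by \ref{ax:I3} this forces $de$ to coincide with that path, hence $d$ lies on it too. But $d$ also lies on the path through $b$ and $c$ (established above), and these two sides of the triangle are distinct, so by \ref{ax:I3} their only common point is $c$, giving $d = c$ --- contradicting the distinctness of $b,c,d$ guaranteed by \ref{ax:O3} from $\ord{b}{c}{d}$. An entirely analogous argument, swapping the roles of $a$ and $b$ and of $e$ and $d$, shows $f \neq b$. Since $f$ is thus distinct from the first event $a$ and the last event $b$ of the chain $\chain[\cdot\cdot]{a,f,b}$, this chain cannot be a two-event chain, and Theorem~\ref{thm:2} applied to the triple (first event, $f$, last event) --- whose indices satisfy $0 < \text{index}(f) < \text{index}(b)$ by injectivity of the chain indexing --- delivers $\ord{a}{f}{b}$. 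Hence $de$ meets $ab$ in $f$ with $\ord{a}{f}{b}$, as required.

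I do not anticipate any real difficulty: as Schutz notes, the theorem is essentially ``a straightforward application'' of \ref{ax:O6}. The only mildly delicate points in the mechanisation are bookkeeping --- instantiating the many incidence premises of our Isabelle statement of \lstinline|O6| with the correctly-named paths \lstinline|path_of a b|, \lstinline|path_of a c|, \lstinline|path_of b c| --- and the short-chain corner case of the chain returned by \lstinline|O6|, which the $f \neq a$, $f \neq b$ argument above is designed to eliminate before invoking Theorem~\ref{thm:2}.
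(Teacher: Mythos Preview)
Your proposal is correct and follows the same route as the paper: instantiate Axiom~\ref{ax:O6} with the three sides of $\triangle abc$ as $Q,R,S$ and $de$ as $T$, then extract $\ord{a}{f}{b}$ from the resulting chain via Theorem~\ref{thm:2}. The paper's proof is in fact this one-liner, expanded only by naming the paths $ac$ and $bc$ explicitly.

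One remark: your careful argument that $f\neq a$ and $f\neq b$, intended to rule out a short chain before invoking Theorem~\ref{thm:2}, is unnecessary in the mechanisation. The Isabelle statement of \lstinline|O6| returns \lstinline|[[a..f..b]X]|, i.e.\ a \lstinline|fin_long_chain|, whose very definition already requires $a\neq f$, $f\neq b$, $a\neq b$ and that the underlying chain be long. So the corollary \lstinline|fin_ch_betw| (Theorem~\ref{thm:2} without explicit indices) applies immediately, and the distinctness argument --- while valid --- is superfluous.
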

\begin{proof}
By the previous theorem (Theorem~2), the statement $\chain{a,f,b}$ of the Axiom of Collinearity (Axiom~\ref{ax:O6}) implies $\ord{a}{f}{b}$.
\end{proof}

The proof in Isabelle again follows Schutz closely. His proof, a single sentence quoting Axiom~\ref{ax:O6} and Theorem~2, is expanded upon merely by finding the precise paths to use in the Axiom of Collinearity (O6), namely $ac$ and $bc$.


\begin{lstlisting}
theorem (*3*) (in MinkowskiChain) collinearity:
  assumes tri_abc: "\<triangle> a b c"
      and path_de: "path de d e"
      and bcd: "[[b c d]]"
      and cea: "[[c e a]]"
    shows "(\<exists>f\<in>de\<inter>(path_of a b). [[a f b]])"
\end{lstlisting}
    \subsection{Boundedness of the unreachable set}
        \label{sec:unreach-bounded}
        In the spirit of Theorem~3, Schutz continues to strengthen the statements made by his axioms. Theorem~4 (Boundedness of the Unreachable Set, see also Fig.~\ref{fig:thm4}) is concerned with restating the Axiom I7, which shares its name, in the context of the chain order established in Theorem~2. Schutz' proof is a one-liner referencing these two results.

\begin{theorem}[Boundedness of the Unreachable Set]\label{thm:4}
    Let Q be any path and let b be any event such that $b \notin Q$. Given events $Q_x \in Q \setminus Q(b, \emptyset)$ and $Q_y \in Q(b, \emptyset)$, there is an event $Q_z \in Q \setminus Q(b, \emptyset)$ such that
    \begin{enumerate}
        \item[(i)] $\ord{Q_x}{Q_y}{Q_z}$, and
        \item[(ii)] $Q_x \neq Q_z$.
    \end{enumerate}
\end{theorem}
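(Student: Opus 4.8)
The plan is to follow Schutz and derive the statement almost directly from Axiom~\ref{ax:I7} together with Theorem~\ref{thm:2}. First I would apply Axiom~\ref{ax:I7} to $Q$, $b$, $Q_x$, $Q_y$: its hypotheses are exactly those of this theorem, since $Q_x \in Q \setminus Q(b,\emptyset)$ and $Q_y \in Q(b,\emptyset)$ are automatically distinct (one is unreachable from $b$, the other is not). This yields an indexing function $g$, a finite set $X$, and an event $Q_n \in Q \setminus Q(b,\emptyset)$ such that the finite chain $\chain{Q_x,Q_y,Q_n}$ is witnessed by $g$ and $X$: that is, $g$ orders $X$, $g(0) = Q_x$, $g(|X|-1) = Q_n$, $Q_y \in X$, and the three named events are pairwise distinct. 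Since this is a \lstinline|long_ch_by_ord| chain we also get $|X| \ge 3$. I would then set $Q_z := Q_n$, so conclusion (ii) is immediate from the distinctness $Q_x \neq Q_n$ built into \lstinline|fin_long_chain|, and $Q_z$ lies in $Q \setminus Q(b,\emptyset)$ as required.

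It remains to establish (i), $\ord{Q_x}{Q_y}{Q_z}$. By the \lstinline|ordering| predicate there is an index $m < |X|$ with $g(m) = Q_y$. Because $g(m) = Q_y \neq Q_x = g(0)$ we get $m \neq 0$, and because $g(m) = Q_y \neq Q_n = g(|X|-1)$ we get $m \neq |X|-1$; with $|X| \ge 3$ this forces $0 < m < |X|-1$. Applying Theorem~\ref{thm:2} (concretely \lstinline|order_finite_chain|, with the index triple $0 < m < |X|-1 < |X|$) gives $\ord{g(0)}{g(m)}{g(|X|-1)}$, which is precisely $\ord{Q_x}{Q_y}{Q_n} = \ord{Q_x}{Q_y}{Q_z}$.

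There is no genuine difficulty here beyond quoting the two earlier results; the only extra work compared with Schutz' one-line prose is the small index-bookkeeping that places $Q_y$ at an interior position of the chain, which is handled entirely by the distinctness clauses of \lstinline|fin_long_chain|. The one spot where Isabelle demands slightly more care than the prose is discharging the cardinality side-conditions of \lstinline|order_finite_chain| (and noting $0 < |X|-1$), both of which follow routinely from the chain being long.
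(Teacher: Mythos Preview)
Your proposal is correct and matches the paper's approach: apply Axiom~\ref{ax:I7} to obtain the finite chain $\chain{Q_x,Q_y,Q_n}$ and then use Theorem~\ref{thm:2} to extract the betweenness $\ord{Q_x}{Q_y}{Q_n}$. The only cosmetic difference is that the paper packages your index-bookkeeping step into a helper lemma \lstinline|fin_ch_betw| (from \lstinline|[f[a..b..c]X]| conclude $\ord{a}{b}{c}$), after which the whole theorem goes through in a single \lstinline|metis| call; your explicit argument via the index $m$ is exactly what that lemma encapsulates.
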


\begin{figure}
    \centering
    \includegraphics[width=0.5\textwidth]{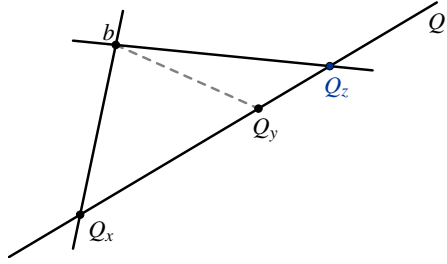}
    \caption{\label{fig:thm4}Boundedness of the Unreachable Set. Given $Q_x$ (reachable from $b$) and $Q_y$ (unreachable from $b$), Theorem~\ref{thm:4} obtains $Q_z$ (reachable from $b$). Axiom~\ref{ax:I7} furthermore states that all three events must be part of a finite chain.}
\end{figure}

Formalisation is again very simple, and in fact, Theorem~\ref{thm:4} can be proven in one step by Isabelle's \lstinline|metis|. The only results needed for this (apart from the theorem assumptions, I7, and Theorem~2) are the definition of chains and a corollary of Theorem~2 without explicit indices (\lstinline|fin_ch_betw|).

\begin{lstlisting}
lemma fin_ch_betw:
  assumes "[f[a..b..c]X]"
  shows "[[a b c]]"
\end{lstlisting}

\enlargethispage{5\baselineskip}
\begin{lstlisting}
theorem (*4*) (in MinkowskiUnreachable) unreachable_set_bounded:
  assumes path_Q: "Q \<in> \<P>"
      and b_nin_Q: "b \<notin> Q"
      and b_event: "b \<in> \<E>"
      and Qx_reachable: "Qx \<in> Q - \<emptyset> Q b"
      and Qy_unreachable: "Qy \<in> \<emptyset> Q b"
  shows "\<exists>Qz\<in>Q - \<emptyset> Q b. [[Qx Qy Qz]] \<and> Qx \<noteq> Qz"
  using assms I7 order_finite_chain fin_long_chain_def
  by (metis fin_ch_betw)
\end{lstlisting}

Theorem~5 allows one to generate additional events, given an event and a path: a second event on the same path, and a reachable event outside the path. After Theorem~3, this is the next more involved proof of the monograph.
The events provided by Theorem~\ref{thm:5} form a triangle of paths, thus enabling very geometric proofs of several lemmas leading up to Theorem~\ref{thm:9}.
These lemmas are, in practice, amongst the most important results for this work, both practically and conceptually, allowing to conclude new betweenness relations from existing ones (similarly to Axiom \ref{ax:O4}).

\begin{theorem}[First Existence Theorem]\label{thm:5}
    Given a path $Q$ and an event $a \in Q$, there is
    \begin{enumerate}
        \item[(i)] an event $b \in Q$ with $b$ distinct from $a$, and
        \item[(ii)] an event $c \notin Q$ and a path $ac$ (distinct from $Q$).
    \end{enumerate}
\end{theorem}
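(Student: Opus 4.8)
The plan is to bootstrap everything from a single nontrivial fact: that some event lies off $Q$. I would extract this from the Axiom of Dimension. Suppose, toward a contradiction, that $\E=Q$. Since $a\in Q\subseteq\E$ we have $\E\neq\{\}$, so \ref{ax:I4} gives a 3-$\spray$ at some event $x$, hence four \emph{distinct} paths through $x$, each of them a nonempty (by \lstinline|no_empty_paths|) subset of $\E=Q$. By \ref{ax:I3} (i.e.\ \lstinline|eq_paths|) any such path with two or more events must coincide with $Q$, so at most one of the four equals $Q$; the remaining three (or more) are nonempty proper subsets of $Q$ all containing $x$, which forces each of them to be exactly $\{x\}$, contradicting pairwise distinctness. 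Hence there is an event $b\in\E\setminus Q$. Run against an arbitrary path, the same argument together with \ref{ax:I5} shows that every path has at least two events.

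Part~(i) is then immediate: apply \ref{ax:I5} to $Q$ and $b$ to get two distinct events of $Q(b,\emptyset)\subseteq Q$; being distinct, at least one of them differs from $a$, and it is the event required.

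For part~(ii), note first that since $c\notin Q$, any path through $a$ and $c$ is automatically distinct from $Q$; thus it suffices to produce a path $R$ with $a\in R$ and $R\neq Q$, because then $R\cap Q=\{a\}$ by \ref{ax:I3}, and $R$ — having at least two events — contains some $c\neq a$, which is necessarily off $Q$, with $R$ itself serving as the path $ac$. To obtain $R$, apply \ref{ax:I2} to the distinct events $a$ and $b$: it yields paths $R\ni a$, $S\ni b$ meeting at some event $p$. If $a\in S$, then $S$ is a path through $a$ and $b$ with $b\notin Q$ and we finish with $c:=b$. Otherwise $p\neq a$ and $R$ passes through the distinct events $a$ and $p$; if $p\notin Q$ we finish with $c:=p$ and path $R$. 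The only case left is $p\in Q$, which by \ref{ax:I3} forces $R=Q$.

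The main obstacle is exactly this residual case, in which the witnesses furnished by \ref{ax:I2} collapse onto $Q$; eliminating it reduces to showing that $Q$ is not the unique path through $a$. When the 3-$\spray$ supplied by \ref{ax:I4} happens to be centred at $a$ this is immediate, since it provides four distinct paths through $a$ not all of which can equal $Q$; in the general case I expect to reduce to that situation through a further appeal to \ref{ax:I2}, or else to invoke Pasch's axiom (\ref{ax:O6}) — equivalently Theorem~\ref{thm:3} — on a triangle assembled from $Q$, $S$ and a suitable transversal, so as to route a path through $a$ that meets $Q$ only at $a$. This case analysis, which Schutz passes over in silence, is where I anticipate the formalisation to demand the most care.
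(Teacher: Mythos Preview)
Your argument for producing an event off $Q$ and for part~(i) is essentially the paper's: both derive a contradiction with Axiom~\ref{ax:I4} from $\E=Q$ via \ref{ax:I3} together with \ref{ax:I5}, and then apply \ref{ax:I5} once more to put a second event on $Q$. The paper packages the first step as the lemma \lstinline|only_one_path| (if $Q$ contains every event then every path equals $Q$), but the content is the same.

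Part~(ii) is where you have a genuine gap, precisely the residual case you isolate. Neither of your proposed escapes is what Schutz or the formalisation does, and neither is obviously workable. Re-centring the 3-$\spray$ at $a$ is not available in general, since Axiom~\ref{ax:I4} only promises \emph{some} centre; and pushing a transversal through $a$ via \ref{ax:O6} would require you to manufacture the betweenness hypotheses of Pasch, which you do not yet have at this point of the development. The actual fix stays entirely inside the unreachable-set machinery. In your residual case you already hold a path $S$ with $b\in S$, $p\in S\cap Q$, $p\neq a$, hence $a\notin S$ by \ref{ax:I3}. Apply \ref{ax:I5} to $(S,a)$ to obtain some $r\in S(a,\emptyset)$. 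The crossing event $p$ is reachable from $a$ via $Q$, so $p\in S\setminus S(a,\emptyset)$. Theorem~\ref{thm:4}, already proven at this stage, then yields $c\in S\setminus S(a,\emptyset)$ with $\ord{p}{r}{c}$ and $c\neq p$; since $S\cap Q=\{p\}$ we get $c\notin Q$, and ``$c$ reachable from $a$'' is exactly the existence of a path $ac\neq Q$. The paper summarises both branches as ``use Axiom~\ref{ax:I5} to obtain the desired reachable event $c$''; the $e\neq a$ branch is longer because it routes through Theorem~\ref{thm:4}, but the crux is the reachable/unreachable dichotomy on the crossing path, not any appeal to dimension or collinearity.
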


Schutz first shows that there is an event $d$ outside the path $Q$. This is done by contradiction, i.e. by showing that there cannot be a path containing all events (by Axiom I3, this would be the only existing path). We encapsulate this statement, the crux of the proof of Theorem~5(i), in a helper lemma.

\begin{lstlisting}
lemma (in MinkowskiUnreachable) only_one_path:
  assumes path_Q: "Q \<in> \<P>"
      and all_inQ: "\<forall>a\<in>\<E>. a \<in> Q"
      and path_R: "R \<in> \<P>"
  shows "R = Q"
\end{lstlisting}

In addition to Axiom I3, we require I5 in order to prove this, which Schutz misses out. The proof is again by contradiction: If a path $R$ exists that is not $Q$, then, since $Q$ is the set of all events, $R \subset Q$. A contradiction to I3, the Axiom of Uniqueness (of paths), can only be obtained if there are two events on $R$, which is guaranteed by Axiom I5.
The remainder of Theorem~5(i) follows Schutz, using I4 to contradict $Q=\E$, and I5 again to obtain the required event $b$.

The second statement of Theorem~5 is proved as in the original prose. In particular, now that we have two events, a second path is implied by Axiom I2, as in the statement below.

\begin{lstlisting}
lemma ex_crossing_path:
  assumes path_Q: "Q \<in> \<P>"
  shows "\<exists>R\<in>\<P>. R \<noteq> Q \<and> (\<exists>e. e \<in> R \<and> e \<in> Q)"
\end{lstlisting}

Then our proof follows the case split made by Schutz: either $e=a$ or not. The latter case becomes a little longer than in prose, but there are no surprises. Both cases use Axiom I5 to obtain the desired reachable event $c$. The final pair of statements for Theorem~5 is listed below.

\begin{lstlisting}
theorem (*5i*) ge2_events:
  assumes path_Q: "Q \<in> \<P>"
      and a_inQ: "a \<in> Q"
  shows "\<exists>b\<in>Q. b \<noteq> a"
\end{lstlisting}

\begin{lstlisting}
theorem (*5ii*) ex_crossing_at:
  assumes path_Q: "Q \<in> \<P>"
      and a_inQ: "a \<in> Q"
  shows "\<exists>ac\<in>\<P>. ac \<noteq> Q \<and> (\<exists>c. c \<notin> Q \<and> a \<in> ac \<and> c \<in> ac)"
\end{lstlisting}
    \subsection{Prolongation}
        \label{sec:prolongation}
        Theorem~\ref{thm:6} goes a little further in justifying our intuition of paths as line-like objects by showing they are infinite. This also gives us the means to always find more events on a path.

\begin{theorem}[Prolongation]\label{thm:6}
    \begin{enumerate}
    \item[(i)] If $a,b$ are distinct events of a path $Q$, then there is an event $c \in Q$ such that $\ord{a}{b}{c}$.
    \item[(ii)] Each path contains an infinite set of distinct events.
    \end{enumerate}
\end{theorem}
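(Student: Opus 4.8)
The plan is to derive~(i) from Theorem~\ref{thm:4} by manufacturing an auxiliary event $e\notin Q$ that is reachable from $a$ but \emph{un}reachable from $b$; Theorem~\ref{thm:4} can then ``jump over'' $b$ to produce the desired $c$. First I would use Theorem~\ref{thm:5}(ii) to get an event $c'\notin Q$ together with a path $R$ through $a$ and $c'$; since $a,b\in Q$ with $a\neq b$, having $b\in R$ would force $R=Q$ by Axiom~\ref{ax:I3}, contradicting $c'\notin Q$, so $b\notin R$. Applying Axiom~\ref{ax:I5} to the path $R$ and the event $b$ produces an event $e\in R(b,\emptyset)$; because $Q$ joins $a$ and $b$ we have $a\notin R(b,\emptyset)$, hence $e\neq a$, and since $Q\cap R=\{a\}$ (again Axiom~\ref{ax:I3}) this also gives $e\notin Q$. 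By definition of $R(b,\emptyset)$ no path joins $b$ and $e$, so $b\in Q(e,\emptyset)$; but $R$ joins $a$ and $e$, so $a\in Q\setminus Q(e,\emptyset)$. Now Theorem~\ref{thm:4}, with the event $e$ playing the role of its ``$b$'' and with $Q_x=a$, $Q_y=b$, yields an event $c\in Q$ with $\ord{a}{b}{c}$.

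For~(ii) I would iterate~(i): starting from two distinct events $x_0,x_1\in Q$ (which exist by Theorem~\ref{thm:5}(i)), define $x_{n+1}$ by primitive recursion as an event of $Q$ satisfying $\ord{x_{n-1}}{x_n}{x_{n+1}}$, supplied by part~(i) applied to the distinct pair $x_{n-1},x_n$ (consecutive distinctness being maintained by Axiom~\ref{ax:O3}). This yields a sequence whose every adjacent triple is in betweenness relation; re-running the induction underlying Theorem~\ref{thm:2} (iterated use of Axioms~\ref{ax:O2} and~\ref{ax:O4}) on the initial segments upgrades this to $\ord{x_i}{x_j}{x_k}$ for all $i<j<k$. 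In particular, given $i<j$, choosing $k=j+1$ and applying Axiom~\ref{ax:O3} shows $x_i\neq x_j$, so $n\mapsto x_n$ is injective and $\{x_n : n\in\setN\}$ is an infinite subset of $Q$.

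The main obstacle is locating the witness $e$ in part~(i): Theorem~\ref{thm:4} already encapsulates the geometric content, so the real work is in exhibiting an off-$Q$ event with exactly the right reachability profile, which is why one passes to the ``second-order'' unreachable set $R(b,\emptyset)$ of the auxiliary path $R$ rather than to an unreachable set of $Q$ directly. In Isabelle the fiddly part will be discharging the side conditions ($e\notin Q$, $e\neq a$, $b\notin R$, $e\in\E$) and checking that each appeal to Axioms~\ref{ax:I3} and~\ref{ax:I5} is licensed. Part~(ii) is routine by comparison: the only friction is setting up the recursive definition of the $x_n$ and re-deriving the finite-chain order along its initial segments, for which the formalised Theorem~\ref{thm:2} is already available.
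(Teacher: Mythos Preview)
Your argument for part~(i) is essentially the same as the paper's: obtain a second path through $a$ via Theorem~\ref{thm:5}(ii), use Axiom~\ref{ax:I5} to find an event on it unreachable from $b$, and invoke Theorem~\ref{thm:4}. Your extra bookkeeping ($b\notin R$, $e\neq a$, $e\notin Q$) spells out exactly the side conditions that the Isabelle proof must discharge, so there is nothing to add here.

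For part~(ii) your route is correct but differs from the paper's formalisation. You build an explicit injective sequence $(x_n)_{n\in\setN}$ by iterating part~(i), then appeal to the chain-ordering argument behind Theorem~\ref{thm:2} on finite prefixes to obtain $\ord{x_i}{x_j}{x_k}$ for all $i<j<k$, whence injectivity via Axiom~\ref{ax:O3}. This is close in spirit to Schutz' own one-line hint (``part~(i), Theorem~\ref{thm:1}, and induction''). The paper instead argues by contradiction: it first proves a helper lemma \texttt{finite\_path\_has\_ends} (every finite subset $X\subseteq Q$ with $|X|\ge3$ has two extremal events $a,b$ with all other elements strictly between them), assumes $Q$ is finite, extracts such endpoints for $X=Q$, and then uses part~(i) to prolong past $b$, producing a new event of $Q$ outside $Q$. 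Your approach is more constructive and reuses Theorem~\ref{thm:2} directly; the paper's approach avoids the recursive definition (and the attendant choice of witnesses at each stage) at the cost of an auxiliary induction on $|X|$ to locate endpoints. Both are sound; the paper's choice was driven by what was found easier to push through Isabelle's induction machinery, whereas yours requires setting up the recursion via Hilbert choice and carrying Theorem~\ref{thm:2} along the prefixes.
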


Schutz' proof \cite[p.~21]{schutz1997} of the first part is straightforward, and remains so in Isabelle: the formal proof reads almost exactly like Schutz' prose. Theorem 5(ii) provides an event \linebreak$e \notin Q$ and a path $ae$. Axiom~I5 then guarantees existence of an event $f \in ae$ that is unreachable from $b$; thus $b \in Q(f, \emptyset)$. Theorem~4 delivers the desired event $c$.

\begin{lstlisting}
lemma (in MinkowskiSpacetime) prolong_betw2:
  assumes path_Q: "Q \<in> \<P>"
      and a_inQ: "a \<in> Q"
      and b_inQ: "b \<in> Q"
      and ab_neq: "a \<noteq> b"
  shows "\<exists>c\<in>Q. [[a b c]]"
\end{lstlisting}

While the second part of Theorem~\ref{thm:6} can be proven almost by inspection by the reader, it is much trickier to formalise.
Schutz says that ``By the preceding theorem [...] part (i), Theorem~1, and induction, the path $Q$ contains an infinite set of distinct events''. Our problem is to formalise this list of results into an inductive proof that can be checked by Isabelle.
This involves thinking about how to translate from induction on a natural number to infinity, what exactly the induction variable should be,
and properly applying Isabelle's induction rule.

The main proof is by induction on the cardinality of a subset $X \subseteq Q$, and is encapsulated by the helper lemma \lstinline|finite_path_has_ends|, which allows us to choose two elements $a,b$ of a set of events on a path $Q$, such that all other elements of that set are between $a$ and $b$.
\jeptodo[fancyline]{Shall we say anything about alternative approaches we tried or thought about here? So far we haven't discussed any dead-ends, just end-products.}



\begin{lstlisting}
lemma finite_path_has_ends:
  assumes "Q \<in> \<P>"
      and "X \<subseteq> Q"
      and "finite X"
      and "card X \<ge> 3"
    shows "\<exists>a\<in>X. \<exists>b\<in>X. a \<noteq> b \<and> (\<forall>c\<in>X. a \<noteq> c \<and> b \<noteq> c \<longrightarrow> [[a c b]])"
\end{lstlisting}

These events will later be used to apply the first part of Theorem~\ref{thm:6}. A sample listing of the proof is given below. We begin by applying the induction hypothesis to identify the edges $a$ and $b$ of the set $Y$.

\begin{lstlisting}
proof (induct "card X - 3" arbitrary: X)
(*...*)
case IH: (Suc n)
  obtain Y x where X_eq: "X = insert x Y" and "x \<notin> Y"
    by (meson IH.prems(4) Set.set_insert three_in_set3)
  (*...*)
  obtain a b where ab_Y: "a \<in> Y" "b \<in> Y" "a \<noteq> b"
             and Y_ends: "\<forall>c\<in>Y. (a \<noteq> c \<and> b \<noteq> c) \<longrightarrow> [[a c b]]"
    using IH(1) [of Y] IH.prems(1-3) X_eq by auto
\end{lstlisting}

\noindent
The rest of the proof treats each possible ordering of the additional event $x \in X \setminus Y$ with $a$ and $b$, to identify the extremal events of the larger set $X$.

\begin{lstlisting}
  consider "[[a x b]]" | "[[x b a]]" | "[[b a x]]" <proof>
  thus ?case
  proof (cases)
    (*...*)
    assume "[[x b a]]"
    { fix c
      assume "c \<in> X" "x \<noteq> c" "a \<noteq> c"
      then have "[[x c a]]" <proof>
    }
    thus ?thesis
      using X_eq \<open>[[x b a]]\<close> ab_Y(1) abc_abc_neq insert_iff
      by force
  qed
qed
\end{lstlisting}

We can now prove that the cardinality of a path cannot be finite. The cases for less than three events on a path are dispensed with separately, using Theorem~5
(as is hinted in the prose we gave above).
For any set of events on a path $Q$, we can use the lemma \lstinline|finite_path_has_ends| to obtain events $a,b$ such that all other elements of that set are between $a$ and $b$. Assuming any finite, non-zero cardinality%
\footnote{The default way of treating cardinality in Isabelle is using natural numbers only. In this formalism, infinite sets are given cardinality $0$.}
of the path $Q$, the prolongation obtained from these two endpoint events using Theorem~6(i) can be used to obtain a contradiction. Thus we conclude that the cardinality of any path must be $0$.

\begin{lstlisting}
lemma path_card_nil:
  assumes "Q\<in>\<P>"
  shows "card Q = 0"
\end{lstlisting}

However, we also know that the empty set is not a path (see Sec.~\ref{sec:bgr:isaind}), thus all paths must be infinite. The formalised result is slightly more simply stated than Schutz' \textit{``Each path contains an infinite set of distinct events''}, since any path that contains an infinite subset must be infinite (and conversely, since paths contain only events, an infinite path must have infinite subsets of events). We are not sure why Schutz did not settle for this seemingly more elegant formulation.

\begin{lstlisting}
theorem (*6ii*) infinite_paths:
  assumes "P\<in>\<P>"
  shows "infinite P"
\end{lstlisting}
    \subsection{Second collinearity theorem}
        \label{sec:collinearity2}
        The Second Collinearity Theorem extends the First (Theorem~\ref{thm:3}) by adding the ordering $\ord{d}{e}{f}$ to the conclusion.

\begin{theorem}[Second Collinearity Theorem]\label{thm:7}
    In the notation of collinearity (Axiom O6),
    $$\ord{a}{f}{b} \text{ and } \ord{d}{e}{f} \;.$$
    That is, given a kinematic triangle $\triangle abc$ with $\ord{b}{c}{d}$ and $\ord{c}{e}{a}$, if there is a path $de$ then on the path $de$ there is an event $f$ such that
    $$\ord{a}{f}{b} \text{ and } \ord{d}{e}{f} \;.$$
\end{theorem}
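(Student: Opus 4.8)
The plan is to obtain $f$ from the First Collinearity Theorem (Theorem~\ref{thm:3}), so that $\ord{a}{f}{b}$ comes for free and the only genuinely new content is $\ord{d}{e}{f}$. The crucial remark is that $d$, $e$ and $f$ all lie on the path $de$: $d$ and $e$ by hypothesis, and $f$ because Theorem~\ref{thm:3} places it on $de$. Hence, once I check that $d$, $e$, $f$ are pairwise distinct, Axiom~\ref{ax:O5} applied to $de$ gives that one of $\ord{d}{e}{f}$, $\ord{e}{f}{d}$, $\ord{f}{d}{e}$ holds, and it suffices to rule out the last two.

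Both exclusions follow the same pattern: assume the unwanted order and feed an auxiliary kinematic triangle back into Theorem~\ref{thm:3}, obtaining a betweenness relation that \lstinline|abc_only_cba| forbids. The only paths in play are $ab$, $bc$ (which contains $d$), $ca$ (which contains $e$) and $de$ (which contains $f$); using Axiom~\ref{ax:I3} one verifies these four are pairwise distinct, so $\{c,d,e\}$ and $\{a,e,f\}$ span kinematic triangles whose sides are among them. To exclude $\ord{e}{f}{d}$, apply Theorem~\ref{thm:3} to the triangle on $\{d,c,e\}$ with transversal the path through $a$ and $f$ (which is $ab$): its two premises are exactly $\ord{c}{e}{a}$ (a hypothesis) and $\ord{e}{f}{d}$ (the assumption), and its conclusion places a point of $ab\cap bc$ between $d$ and $c$. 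But $ab\cap bc=\{b\}$ by Axiom~\ref{ax:I3}, so $\ord{d}{b}{c}$, contradicting $\ord{b}{c}{d}$. To exclude $\ord{f}{d}{e}$, apply Theorem~\ref{thm:3} to the triangle on $\{f,a,e\}$ with transversal the path through $c$ and $d$ (which is $bc$): its premises $\ord{a}{e}{c}$ (from Axiom~\ref{ax:O2}) and $\ord{e}{d}{f}$ (from Axiom~\ref{ax:O2} and the assumption) hold, and its conclusion places a point of $bc\cap ab=\{b\}$ between $f$ and $a$, i.e.\ $\ord{f}{b}{a}$, contradicting $\ord{a}{f}{b}$. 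With both cases eliminated, $\ord{d}{e}{f}$ follows.

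I expect the main obstacle to be the side-condition bookkeeping: each branch needs the distinctness of the three vertices, the distinctness of the three side-paths, and the existence of the transversal path, all of which reduce to Axioms~\ref{ax:I3} and~\ref{ax:O3} but are precisely the routine-yet-verbose obligations that tend to bloat the Isar script. The other delicate point is choosing the right cyclic arrangement of the triangle's vertices when instantiating Theorem~\ref{thm:3}: with the wrong arrangement the instance's premises simply do not line up with the available orderings, and its conclusion fails to collapse (via uniqueness of paths) to a three-event betweenness statement contradicting $\ord{b}{c}{d}$ or $\ord{a}{f}{b}$. Once Theorem~\ref{thm:3} and \lstinline|abc_only_cba| are available, nothing else is hard.
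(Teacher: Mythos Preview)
Your proposal is correct and matches the paper's approach: obtain $f$ from Theorem~\ref{thm:3}, establish that $d,e,f$ are distinct events on $de$, invoke Axiom~\ref{ax:O5}, and rule out the two unwanted orderings by applying Theorem~\ref{thm:3} to an auxiliary kinematic triangle and deriving a contradiction via \lstinline|abc_only_cba|. The paper uses $\triangle dce$ for the $\ord{e}{f}{d}$ case (exactly as you do) and dismisses $\ord{f}{d}{e}$ as ``analogous''; your explicit choice of $\triangle fae$ for the second case is a perfectly good instantiation of that analogy. The paper also flags exactly the side-condition you anticipate---showing $de\neq ce$ to establish $\triangle dce$---as the one step requiring extra work in Isabelle.
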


The proof begins where Theorem~\ref{thm:3} left off (see also Fig.~\ref{fig:O6}), i.e. we need prove only $\ord{d}{e}{f}$. Since $f \in de$ (so $d\neq e$),
there are six possible relations between the events $d,e,f$, summarised in Isabelle by \hbox{\lstinline|some_betw2|.}

\begin{lstlisting}
lemma (in MinkowskiBetweenness) some_betw2:
  assumes path_Q: "Q \<in> \<P>"
      and a_inQ: "a \<in> Q" and b_inQ: "b \<in> Q" and c_inQ: "c \<in> Q"
  shows "a = b \<or> a = c \<or> b = c \<or> [[a b c]] \<or> [[b c a]] \<or> [[c a b]]"
\end{lstlisting}

Since $de$ defines a path, we know $d\neq e$. Either one of the remaining equalities would imply that $a,b,c$ are on the same path. For example, if $e=f$, then $\ord{c}{e}{a}$ and $\ord{a}{f}{b}$ (from Theorem~\ref{thm:3}) imply $a,b,c$ are on the path $af$. Since paths are unique, and the definition of kinetic triangles includes distinctness of the three defining paths, this contradicts $\triangle abc$.

\begin{lstlisting}
lemma triangle_diff_paths:
  assumes tri_abc: "\<triangle> a b c"
  shows "\<not> (\<exists>Q\<in>\<P>. a \<in> Q \<and> b \<in> Q \<and> c \<in> Q)"
\end{lstlisting}

There are two remaining possibilities to falsify. Both sub-proofs by contradiction follow the same layout, so we present only the case $\ord{e}{f}{d}$. We first show that $\triangle dce$. Schutz takes this as fact, but Isabelle requires us to demonstrate $de \neq ce$. Since $\ord{b}{c}{d}$ and $\ord{c}{e}{a}$, and $bc \neq ac$, we know that $\lnot \ord{c}{d}{e}$, which establishes the kinematic triangle.
Then $\ord{c}{e}{a}$ and $\ord{e}{f}{d}$ satisfy condition (ii) of Theorem~\ref{thm:3} and we obtain an event $x \in af$ with $\ord{d}{x}{c}$. Uniqueness (I3) then gives $x=b$, hence $\ord{d}{b}{c}$, which contradicts the assumption $\ord{b}{c}{d}$ via Theorem~\ref{thm:1}.
Discounting the possibility $\ord{f}{d}{e}$ in an analogous manner, we are left with only $\ord{d}{e}{f}$.

The formalisation follows Schutz rather easily, with only the proof $\triangle dce$ requiring an extra step. Notice that this theorem could have been proved at the same time as Theorem~\ref{thm:3},
but encapsulating the First Collinearity Theorem allows us to use it multiple times throughout the proof of the Second Collinearity Theorem.

\begin{lstlisting}
theorem (*7*) (in MinkowskiChain) collinearity2:
  assumes tri_abc: "\<triangle> a b c"
      and bcd: "[[b c d]]"
      and cea: "[[c e a]]"
      and path_de: "path de d e"
  shows "\<exists>f\<in>de. [[a f b]] \<and> [[d e f]]"
\end{lstlisting}
    \subsection{Order on a path}
        \label{sec:order-path}
        This section gives the chapter its name, and will allow us to work much more freely with the betweenness relation, bringing it closer to the intuition we have from Euclidean geometry. Theorem~\ref{thm:8} is a preliminary result, but provides an intuitive piece of information about kinematic triangles. Theorem~\ref{thm:9} and Theorem~\ref{thm:10} establish finite subsets of paths as totally ordered sets\footnote{
We have in fact proven that not just finite subsets, but paths themselves are totally ordered. This proof uses a binary order derived from betweenness and uses definitions of HOL-Algebra; since this dependency is not required anywhere else, and Schutz does not introduce binary order until Theorem 29, we refrain from giving this result here.}.
The proof of Theorem~\ref{thm:9} hinges on three lemmas that are, to any practical purpose, as important as any result of this chapter, and allow us to work with orderings of overlapping sets of events.

Theorem~\ref{thm:8} presupposes the easy result (not explicitly mentioned by Schutz) that $\triangle abc$ implies no betweenness ordering of $a,b,c$ exists, and extends it to events on the paths defining the triangle (rather than its vertices)%
\footnote{The equivalence between there being some ordering of $a,b,c$, and all three events being on a path is established by Axiom~\ref{ax:O1} and Axiom~\ref{ax:O5}.}.
Using some geometric intuition, Theorem~\ref{thm:8} might be likened to the statement
that no path can cross all three sides of a kinematic triangle internally.

\begin{theorem}\label{thm:8}
    Given a kinematic triangle $\triangle abc$ with events $a', b', c'$ such that $\ord{a}{b'}{c}$, $\ord{b}{c'}{a}$, and $\ord{c}{a'}{b}$, then there is no path which contains $a'$, $b'$ and $c'$.
\end{theorem}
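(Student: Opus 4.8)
The plan is to argue by contradiction: suppose some path $L$ contains $a'$, $b'$, and $c'$. I would first unpack the kinematic triangle, extracting the three pairwise‑distinct paths $ab$, $bc$, $ca$, and note via Axiom~\ref{ax:O1} and uniqueness (\ref{ax:I3}) that the hypotheses place $c' \in ab$, $a' \in bc$, and $b' \in ca$. A short bookkeeping step then shows that $a'$, $b'$, $c'$ are pairwise distinct and that $L$ differs from both $ab$ and $ca$: in each case equality would make two sides of the triangle share two distinct events and hence coincide by \ref{ax:I3}, while distinctness of the primed events from the vertices is immediate from \ref{ax:O3} applied to $\ord{a}{b'}{c}$, $\ord{b}{c'}{a}$, $\ord{c}{a'}{b}$.

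Since $a'$, $b'$, $c'$ are then distinct events on the path $L$, Axiom~\ref{ax:O5} (\lstinline|some_betw|) gives one of the six orderings, and by \ref{ax:O2} it suffices to handle the three cases in which $a'$, $b'$, or $c'$ is the midpoint. These cases are permuted by the relabelling $(a,b,c,a',b',c') \mapsto (b,c,a,b',c',a')$, which preserves all hypotheses and the assumed collinearity, so I would treat only $\ord{b'}{a'}{c'}$ and either appeal to this symmetry or replay the argument for the remaining two.

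For $\ord{b'}{a'}{c'}$: the triple $b', a, c'$ forms a kinematic triangle $\triangle b' a c'$, since $b'a$ lies on $ca$, $ac'$ on $ab$, $b'c'$ on $L$, and those three paths are pairwise distinct by the previous step. I would apply Theorem~\ref{thm:3} to $\triangle b' a c'$ with the extra events $d := b$ and $e := a'$: the path $ba'$ exists and equals $bc$; the required betweenness premises are $\ord{a}{c'}{b}$ (from $\ord{b}{c'}{a}$ by \ref{ax:O2}) and $\ord{c'}{a'}{b'}$ (the case hypothesis, again via \ref{ax:O2}). Theorem~\ref{thm:3} then yields an event $f$ on $bc$ and on the path through $b'$ and $a$ — that is, on $ca$ — with $\ord{b'}{f}{a}$. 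As $bc$ and $ca$ are distinct paths meeting at $c$, \ref{ax:I3} forces $f = c$, giving $\ord{b'}{c}{a}$; but this contradicts $\ord{a}{b'}{c}$ by Theorem~\ref{thm:1} (\lstinline|abc_only_cba|). The other two cases go through identically, applying Theorem~\ref{thm:3} to $\triangle c' b a'$ with events $c, b'$, and to $\triangle a' c b'$ with events $a, c'$.

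I expect the main obstacle to be the distinctness bookkeeping rather than the geometric core. Schutz would treat ``these paths are distinct, so $\triangle b' a c'$ is a triangle'' and ``$bc$ meets $ca$ only at $c$'' as self‑evident, but in Isabelle each such fact ($L \neq ab$, $L \neq ca$, $b' \neq c'$, $b \neq a'$, $bc \cap ca = \{c\}$, and so on) must be discharged explicitly from \ref{ax:I3} and \ref{ax:O3}, and there is a genuine risk that an apparently trivial step turns out to need an auxiliary lemma. A secondary cost is the threefold case split, which either forces a small symmetry lemma or three near‑duplicate sub‑proofs.
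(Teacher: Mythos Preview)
Your proposal is correct and follows essentially the same approach as the paper: contradiction, distinctness bookkeeping via \ref{ax:O3} and \ref{ax:I3}, a case split on the ordering of $a', b', c'$, forming an auxiliary kinematic triangle with two primed events and one original vertex, applying a collinearity theorem to produce an intersection event that is forced (by \ref{ax:I3}) to coincide with a vertex of $\triangle abc$, and then deriving an ordering contradiction via Theorem~\ref{thm:1}, with the remaining cases handled by cyclic relabelling. The only cosmetic differences are that the paper treats the case $\ord{a'}{b'}{c'}$ using the auxiliary triangle $\triangle a'bc'$ and invokes Theorem~\ref{thm:7} rather than Theorem~\ref{thm:3}; since only the $\ord{a}{f}{b}$ part of Theorem~\ref{thm:7}'s conclusion is actually used, your appeal to Theorem~\ref{thm:3} is sufficient and slightly more economical.
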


Schutz first notes that $a',b',c'$ are distinct from $a,b,c$ and from each other, by the orderings assumed in Theorem~\ref{thm:8} and Axioms \ref{ax:O3} and \ref{ax:I3}. Notice also that existence of a path containing three events would imply some ordering of these events.

The proof of Theorem~\ref{thm:8} is then by contradiction. We assume there is some ordering of $a',b',c'$, and we examine the cases one-by-one, starting with $\ord{a'}{b'}{c'}$. A small sub-proof by contradiction shows there is no path $Q \ni a',b,c'$, since Axiom~\ref{ax:I3} would imply $a, c \in Q$, placing the three vertices of the triangle on the same path.
Thus $a', b, c'$ form a kinematic triangle.
Much like we used Theorem~\ref{thm:3} in the last proof, we now apply Theorem~\ref{thm:7} to $\triangle a'bc'$, and obtain $x = ab' \cap a'b$ where $\ord{a'}{x}{b}$. Obtaining $c = ab' \cap a'b$ is slightly longer in Isabelle than in Schutz,
but follows from the assumed orderings $\ord{a}{b'}{c}$ and $\ord{c}{a'}{b}$. The contradiction is between $\ord{a'}{x}{b} \implies \ord{a'}{c}{b}$ and $\ord{c}{a'}{b}$ (cf Theorem~\ref{thm:1}).

While the proof has closely followed the prose so far, Schutz now goes on to state simply that ``cyclic interchange of the symbols $a$, $b$, $c$ (and $a'$, $b'$, $c'$) throughout the proof'' \cite[p.~23]{schutz1997} proves the remaining cases. In Isabelle, this interchange is done explicitly, by reproducing the same proof with different event orderings. However, this is the first time that we encountered ordering symmetry of this sort, where one has to consider multiple equivalent cases depending not on essential qualities of events, but their names, and the inconsequential (or arbitrary) ordering that results from this naming scheme. This kind of reasoning is often employed in mathematics, and might be announced simply as ``without loss of generality, let $\ord{a'}{b'}{c'}$''. While in this case, the complete proof is still less than 150 lines of proof script, this kind of redundancy becomes disproportionate later on, and we refer the reader to our treatment of Theorem~\ref{thm:14} for details. The mechanised Theorem~\ref{thm:8} is given below.

\begin{lstlisting}
theorem (*8*) (in MinkowskiChain) tri_betw_no_path:
  assumes tri_abc: "\<triangle> a b c"
      and ab'c: "[[a b' c]]"
      and bc'a: "[[b c' a]]"
      and ca'b: "[[c a' b]]"
  shows "\<not> (\<exists>Q\<in>\<P>. a' \<in> Q \<and> b' \<in> Q \<and> c' \<in> Q)"
\end{lstlisting}

Theorem~\ref{thm:9} is the base case for the inductive Theorem~\ref{thm:10}. One might compare these two results to parts (i) and (ii) of Theorem~\ref{thm:6}, but the induction is more complicated in the case of Theorem~\ref{thm:10}, and hides a few more surprises when attempting a formalisation.

\begin{theorem}\label{thm:9}
    Any four distinct events on a path form a chain, so they may be represented by the symbols $a$, $b$, $c$, $d$ in such a way that $\chain[\;]{a,b,c,d}$.
\end{theorem}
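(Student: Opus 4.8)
The plan is to reduce the theorem to a finite case split governed by Axiom~\ref{ax:O5}, after first extracting a single ``three-into-four'' sub-lemma. I would prove the helper statement: if $x,y,z$ are distinct events of a path $Q$ with $\ord{x}{y}{z}$, and $w\in Q$ is distinct from all of them, then $\{w,x,y,z\}$ is a chain. Theorem~\ref{thm:9} then follows immediately: given four distinct events on a path $Q$, Axioms~\ref{ax:O1} and \ref{ax:O5} (via \lstinline|some_betw|) let me pick a labelling $x,y,z$ of three of them with $\ord{x}{y}{z}$, the helper handles the fourth, and a final cosmetic relabelling puts the result in the $a,b,c,d$ form of the statement.

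For the helper, I would locate $w$ relative to the ``interval'' $[x,z]$. Applying Axiom~\ref{ax:O5} to $x,w,z$ and normalising with Axiom~\ref{ax:O2} leaves three possibilities: $\ord{w}{x}{z}$ ($w$ before $x$), $\ord{x}{z}{w}$ ($w$ after $z$), or $\ord{x}{w}{z}$. In the last case I would apply Axiom~\ref{ax:O5} once more to $x,w,y$; the ``$x$-in-the-middle'' branch is impossible (via Axiom~\ref{ax:O4} and Theorem~\ref{thm:1} it contradicts $\ord{x}{y}{z}$), so after normalising we are left with $\ord{x}{w}{y}$ or $\ord{x}{y}{w}$. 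This gives exactly four leaf cases, in which the ordered list is forced to be, respectively, $\chain[\;]{w,x,y,z}$, $\chain[\;]{x,y,z,w}$, $\chain[\;]{x,w,y,z}$, $\chain[\;]{x,y,w,z}$.

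In each leaf case it remains to verify the betweenness relations a chain requires. By Theorem~\ref{thm:2} (and \lstinline|ch_equiv|) it suffices to establish the two \emph{adjacent} triples of the list, since these propagate to all $\binom{4}{3}$ ordered triples. The first adjacent triple is always a hypothesis; the second is obtained in all four cases from a single overlapping-order lemma of the shape $\ord{a}{b}{c}\wedge\ord{a}{c}{d}\implies\ord{a}{b}{d}\wedge\ord{b}{c}{d}$ (one of the three lemmas mentioned in Section~\ref{sec:order-path}), together with Axiom~\ref{ax:O2} in the ``$w$ before $x$'' case. Finally I would package the ordered four-tuple as an indexing function $f$ with $f\,0,\dots,f\,3$ the events in order and discharge \lstinline|long_ch_by_ord f {w,x,y,z}|.

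I expect the main obstacle to be not any individual deduction but keeping the case analysis under control: naively branching on all six disjuncts of Axiom~\ref{ax:O5} for several triples explodes, so the real work is choosing which triples to split on (here $x,w,z$, then only when needed $x,w,y$) so that every branch closes with one overlapping-order lemma. A secondary nuisance, typical of this formalisation, is the threefold ``without loss of generality'' redundancy introduced when Axiom~\ref{ax:O5} is used to select the middle of $x,y,z$, plus the bookkeeping needed to turn ``these are the events in this order'' into the formal \lstinline|ordering| predicate — Theorem~\ref{thm:2} defuses the latter, provided one remembers to invoke it rather than check all four triples by hand.
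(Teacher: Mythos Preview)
Your proposal is correct and follows the same high-level strategy as the paper: obtain an initial ordered triple from Axiom~\ref{ax:O5}, locate the fourth event by a second application of~\ref{ax:O5}, and close each leaf case with the overlapping-order lemmas. The only tactical difference is the choice of second triple: the paper (following Schutz) splits on $d,a,b$, i.e.\ the new event together with the first two of the initial triple, which yields the three cases $\ord{d}{a}{b}$, $\ord{a}{d}{b}$, $\ord{a}{b}{d}$ and then invokes both Lemma~\ref{lem:1} and Lemma~\ref{lem:3} (the last case branching via Lemma~\ref{lem:1}). You instead split on the new event against the two \emph{endpoints} $x,z$, which gives four leaf cases after a second split, but lets you discharge every branch with Lemma~\ref{lem:3} alone (plus a small elimination using \ref{ax:O4} and Theorem~\ref{thm:1}). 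So your route trades one extra explicit case for avoiding the heavier Lemma~\ref{lem:1}; since that lemma is already available at this point in the development, the paper's choice is slightly more economical in the number of cases, while yours has a lighter dependency footprint.
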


This result extends the Axiom~\ref{ax:O5}, with a chain being the appropriate generalisation of betweenness via Theorem~\ref{thm:2}. Thus the main point of Theorem~\ref{thm:9} is to do with overlapping betweenness relations between subsets of three out of four events. The proof is split into three lemmas that, together, allow us to propagate betweenness relations along a chain. The first one is the hardest to prove: the other two (and several similar results not printed in Schutz) follow from it easily.

\begin{lemma}\label{lem:1}
    If $\ord{a}{b}{c}$ and $\ord{a}{b}{d}$ and $c \neq d$ then either $\ord{b}{c}{d}$ or $\ord{b}{d}{c}$.
\end{lemma}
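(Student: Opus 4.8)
The plan is to first pin down that $a,b,c,d$ are collinear, then use the order axioms to reduce the goal to excluding one ``bad'' configuration, and finally kill that configuration with a Pasch-style argument.

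\emph{Step 1 (collinearity and distinctness).} From $\ord{a}{b}{c}$, Axiom~\ref{ax:O1} yields a path containing $a,b,c$; from $\ord{a}{b}{d}$, a path containing $a,b,d$. Since $a\neq b$ by Axiom~\ref{ax:O3}, Axiom~\ref{ax:I3} identifies these two paths, so there is a single path $Q$ with $a,b,c,d\in Q$. Axiom~\ref{ax:O3} also gives $b\neq c$ and $b\neq d$, and together with the hypothesis $c\neq d$ this makes $b,c,d$ three distinct events of $Q$.

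\emph{Step 2 (reduction).} Apply \lstinline|some_betw| (Axiom~\ref{ax:O5}) to $b,c,d$ on $Q$: one of $\ord{b}{c}{d}$, $\ord{c}{d}{b}$, $\ord{d}{b}{c}$ holds. The first is exactly the conclusion; the second gives $\ord{b}{d}{c}$ by Axiom~\ref{ax:O2}, again the conclusion. So the whole lemma comes down to deriving a contradiction in the single case $\ord{d}{b}{c}$ (equivalently $\ord{c}{b}{d}$ by Axiom~\ref{ax:O2}), i.e. the case where $b$ lies between $c$ and $d$ while also lying between $a$ and $c$ and between $a$ and $d$.

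\emph{Step 3 (the contradiction).} Here the order axioms alone are powerless: every betweenness fact one can state about $\{a,b,c,d\}$ — namely $\ord{a}{b}{c}$, $\ord{a}{b}{d}$, $\ord{c}{b}{d}$ and their reversals — has $b$ in the middle position, so Axiom~\ref{ax:O4} can never be chained, and Axioms~\ref{ax:O1}, \ref{ax:O2}, \ref{ax:O3}, \ref{ax:O5} produce no new betweenness relations. A genuinely geometric argument is needed. The intended route is to invoke the First Existence Theorem (Theorem~\ref{thm:5}) to obtain an event off $Q$, assemble a kinematic triangle from it and suitable events of $Q$, and then use the Collinearity Theorems (Theorem~\ref{thm:3} and Theorem~\ref{thm:7}) — or Theorem~\ref{thm:8}, which says the ``internal'' points on the three sides of a triangle cannot be collinear — to contradict the fact that $b,c,d$ all lie on the single path $Q$.

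The hard part is this last step, both mathematically and in Isabelle. Choosing the auxiliary construction so that the right Pasch instance applies takes some ingenuity; and whereas in Euclidean geometry one could freely draw the connecting lines, in Minkowski spacetime not every pair of events is joined by a path, so one must actually check that each path required to build the triangle exists, that the triangle's three paths are pairwise distinct, and that all events involved are distinct, before Theorems~\ref{thm:3}, \ref{thm:7} or \ref{thm:8} can be applied. These are precisely the ``obvious'' side conditions Schutz passes over, and discharging them is what makes this the most demanding of the three lemmas behind Theorem~\ref{thm:9}.
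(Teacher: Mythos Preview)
Your Steps 1 and 2 are exactly the paper's reduction: collinearity via \ref{ax:O1} and \ref{ax:I3}, distinctness via \ref{ax:O3}, then \ref{ax:O5} on $b,c,d$ to reduce to the single bad case $\ord{d}{b}{c}$. Your diagnosis that the order axioms alone cannot kill this case is also correct, and you name the right tools (Theorem~\ref{thm:5}, the Collinearity Theorems, Theorem~\ref{thm:8}).

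Where your Step~3 goes astray is the claim that the contradiction will come from ``$b,c,d$ all lie on the single path $Q$''. That is not how it works, and there does not seem to be a triangle whose three internal side-events are $b,c,d$ themselves: the only betweenness relations available for $c$ and $d$ have $b$ in the middle, so neither $c$ nor $d$ can be placed as an internal point on a triangle side without further construction. The actual argument (due to Veblen, and followed by Schutz and the paper) is more elaborate. One obtains $e$ off $Q$ via Theorem~\ref{thm:5}, with paths $ae$ and $be$; then, because paths to $e$ from $c$ and $d$ need not exist, one produces events $c',d'$ (possibly equal to $c,d$) with the same order properties and with paths $c'e$, $d'e$, using Theorem~\ref{thm:4}; similarly an event $f'$ with a path $f'e$. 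Then Theorem~\ref{thm:7} is applied \emph{twice}, to two different auxiliary triangles, to manufacture new events $g,h$ satisfying $\ord{a}{h}{e}$, $\ord{d'}{g}{e}$, and lying on the path through $f'$ and $b$. Together with $\ord{a}{b}{d'}$ this places the collinear triple $b,g,h$ on the three sides of $\triangle ead'$, contradicting Theorem~\ref{thm:8}. So the collinear events that violate Theorem~\ref{thm:8} are $b,g,h$ on the path $f'b$, not $b,c,d$ on $Q$.

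You correctly anticipate the path-existence headaches; in the formalisation these are handled by separate lemmas (\lstinline|exist_c'd'|, \lstinline|exist_f'|) that encapsulate the case splits ``path $de$ exists or use Theorem~\ref{thm:4} to find $d'$'', keeping the main proof linear.
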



\begin{lstlisting}
lemma abc_abd_bcdbdc:
  assumes abc: "[[a b c]]"
      and abd: "[[a b d]]"
      and c_neq_d: "c \<noteq> d"
  shows "[[b c d]] \<or> [[b d c]]"
\end{lstlisting}

To prove \lstinline|abc_abd_bcdbdc|, we follow Schutz fairly closely, with the top layer being a proof by contradiction together with
$\neg [d b c] \rightarrow [b c d] \lor [b d c]$, which is obtained by noting that path uniqueness (Axiom~\ref{ax:I3}) and \lstinline|abc_ex_path| (Axiom~\ref{ax:O1}) imply that $b, c, d$ all lie on the same path, and thus must be in some betweenness relationship (Axiom~\ref{ax:O5}).
We thus assume $[d b c]$ and derive a contradiction by constructing several kinematic triangles, whose interaction with each other leads to a contradiction with Theorem~\ref{thm:8} (\lstinline|tri_betw_no_path|).


We obtain the basic geometric ingredients: first a path containing $a$ and $b$. 
Given a path $ab$ and an event $a$ on it, Theorem~\ref{thm:5}
provides a different path $S$.
Using the existence of unreachable events (Axiom~\ref{ax:I5}) and the boundedness of the unreachable set (Theorem~\ref{thm:4}), we obtain $e \in S \setminus \left\lbrace a \right\rbrace$ (so we can rename $S=ae$) and a path $be$.


The difficulty of translating Schutz' approach to the remaining proof into Isabelle, is in his conditional assignment of events to the variables he calls $c*$, $d*$, and $f*$.
For example, Schutz defines $d'$ as ``If there is a path $de$ we let $d'=d$. Otherwise [Theorem~\ref{thm:4}] implies the existence of an event $d'$ such that [...]''. This would require us to consider both cases for each statement involving $d'$ in the remainder of the proof; we found this to be cumbersome in Isabelle.
\footnote{These become {\lstset{basicstyle=\ttfamily\footnotesize}\lstinline|c', d', f'|} in our formal proof since the $*$-affix is reserved in Isabelle.}
We abstract this difficulty into lemmas called \lstinline|exist_c'd'| and \lstinline|exist_f'|. Several case splits need to be considered, but have no further importance outside of these lemmas: thus we separate them from the main proof. Notice that \lstinline|exist_c'd'| and \lstinline|exist_f'| are trivial in a highly non-obvious fashion: since they are to be used inside a proof by contradiction, their assumptions already imply \lstinline|False|, which implies anything. This implication, however, is complex enough not to be detected by Isabelle's automatic tools, nor was it by us upon inspection. The assumptions on both lemmas are equivalent to the obtained facts in the main proof at the point of their use. 


\begin{lstlisting}
lemma exist_c'd':
  assumes abc: "[[a b c]]"
      and abd: "[[a b d]]"
      and dbc: "[[d b c]]"
      and path_S: "path S a e"
      and path_be: "path be b e"
      and S_neq_ab: "S \<noteq> path_of a b"
    shows "\<exists>c' d'. [[a b d']] \<and> [[c' b a]] \<and> [[c' b d']] \<and>
                   path_ex d' e \<and> path_ex c' e"
\end{lstlisting}

Schutz' proof
considers nested case splits ``in parallel'', jumping between cases for each statement in the flow of the main proof. We instead just abstract proofs of existential propositions with all the properties we need into the lemmas \lstinline|exist_c'd'| and \lstinline|exist_f'|, and require no case splits in the main proof. We find this setup both easier to formalise in Isar, and easier to understand for the reader. In this case, practical concerns towards a neater formalisation lead, we believe, to a less convoluted, more modular proof.

A structural outline is provided for the proof body of \lstinline|exist_c'd'|, but most of the individual steps are omitted. Notice the case splits according to whether paths between certain events exist, which reproduce those of Schutz.
\begin{lstlisting}
proof (cases "path_ex d e")
  let ?ab = "path_of a b"
  have path_ab: "path ?ab a b" <proof>
  { case True
    then obtain de where "path de d e" by blast
    (*...*)
    thus ?thesis
    proof (cases "path_ex c e")
      case True (*...*)
    next
      case False
      obtain c' c'e where "c'\<in>?ab \<and> path c'e c' e \<and> [[b c c']]"
        using unreachable_bounded_path <proof>
      (*...*)
    qed
  } {
    case False
    obtain d' d'e where d'_in_ab: "d' \<in> ?ab"
                    and bdd': "[[b d d']]" and "path d'e d' e"
      using unreachable_bounded_path <proof>
    thus ?thesis
    proof (cases "path_ex c e")
      (*...*)
  }
qed
\end{lstlisting}

Using the lemma \lstinline|unreachable_bounded_path| above, we replace Schutz' more vague statement of ``the Boundedness of the Unreachable Set (Th.4) implies''.
While this lemma relies on Theorem~\ref{thm:4} and the assumptions of \lstinline|exist_c'd'| only (excluding definitions), several steps are needed in Isabelle to derive this result. 
The lemma \lstinline|exist_f'|, which is proved similarly, is omitted here. 


From here on, the proof follows Schutz, who in turn follows Veblen \cite[p.357]{veblen1904}.
The idea is to find three events on the path \lstinline|f'b|, obtained from \lstinline|exist_f'|, that lie on different sides of the kinematic triangle $\triangle ead'$. This gives a contradiction to Th.8: no path can cross all three sides of a kinematic triangle. These events, $g$ and $h$, as well their ordering relations with $f',b,e$, are obtained by applying Theorem~\ref{thm:7} to two different kinematic triangles, outlined in Figure~\ref{fig:lem1}.
Now, $\ord{a}{h}{e}$, $\ord{d'}{g}{e}$, $\ord{a}{b}{d'}$ together imply that $b,g,h$ lie on different segments of $\triangle ead'$. However, all three must lie on a path by $\ord{f'}{b}{h}$ and $\ord{f'}{b}{g}$, contradicting Theorem~\ref{thm:8}.
Thus we conclude Lemma 1.

\begin{figure}
    \centering
     \begin{subfigure}[b]{0.48\textwidth}
         \centering
         \includegraphics[width=\textwidth]{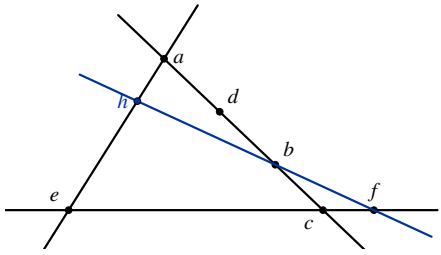}
         \caption{Triangle $\triangle ace$, ignoring the assumption $\ord{a}{b}{d}$ in the figure. Applying Theorem~\ref{thm:7} will yield the event $h$ with $\ord{a}{h}{e}$ and $\ord{f}{b}{h}$.}
         \label{fig:lem1_t-ace}
     \end{subfigure}
     \hfill
     \begin{subfigure}[b]{0.48\textwidth}
         \centering
         \includegraphics[width=\textwidth]{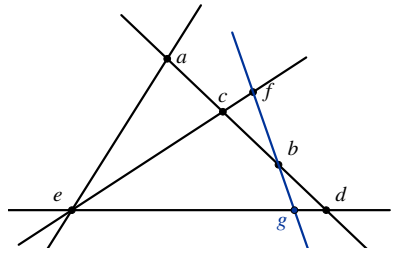}
         \caption{Triangle $\triangle dec$, ignoring the assumption $\ord{a}{b}{c}$ in the figure. Applying Theorem~\ref{thm:7} will yield the event $g$ with $\ord{d}{g}{e}$ and $\ord{f}{b}{g}$.}
         \label{fig:lem1_t-dec}
     \end{subfigure}
    \caption{Visualisation of an intermediate state in the proof of Lemma 1 \cite[pp.~23-24]{schutz1997}. For simplicity, all primed (or starred, in Schutz' prose) variables are equal to their unprimed counterparts. Since the proof is by contradiction, and plane geometry obeys the axioms of order and incidence, it is impossible to draw a correct figure: thus each of these constructions ignores one assumption in order to visualise different triangles to which Theorem~\ref{thm:7} is applied.}
    \label{fig:lem1}
\end{figure}

\begin{lemma}\label{lem:2}
    If $\ord{a}{b}{c}$ and $\ord{a}{b}{d}$ and $c \neq d$ then either $\ord{a}{c}{d}$ or $\ord{a}{d}{c}$.
\end{lemma}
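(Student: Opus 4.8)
The plan is to reduce Lemma~\ref{lem:2} to Lemma~\ref{lem:1} by a single case split, since the two lemmas share exactly the same hypotheses. First I would apply \lstinline|abc_abd_bcdbdc| to $\ord{a}{b}{c}$, $\ord{a}{b}{d}$ and $c \neq d$, obtaining $\ord{b}{c}{d} \lor \ord{b}{d}{c}$, and then discharge the two disjuncts separately, each time pushing the new betweenness relation and one of the original hypotheses through Axioms~\ref{ax:O2} and \ref{ax:O4}.

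In the case $\ord{b}{c}{d}$: Axiom~\ref{ax:O2} turns $\ord{a}{b}{c}$ into $\ord{c}{b}{a}$ and $\ord{b}{c}{d}$ into $\ord{d}{c}{b}$; applying Axiom~\ref{ax:O4} to $\ord{d}{c}{b}$ and $\ord{c}{b}{a}$ yields $\ord{d}{c}{a}$, whence Axiom~\ref{ax:O2} gives $\ord{a}{c}{d}$, the first disjunct of the conclusion. The case $\ord{b}{d}{c}$ is symmetric under exchange of $c$ and $d$: Axiom~\ref{ax:O2} turns $\ord{a}{b}{d}$ into $\ord{d}{b}{a}$ and $\ord{b}{d}{c}$ into $\ord{c}{d}{b}$; Axiom~\ref{ax:O4} applied to $\ord{c}{d}{b}$ and $\ord{d}{b}{a}$ gives $\ord{c}{d}{a}$, and Axiom~\ref{ax:O2} gives $\ord{a}{d}{c}$.

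There is essentially no obstacle here once Lemma~\ref{lem:1} is available: the work is purely bookkeeping of which reflected triple to feed into Axiom~\ref{ax:O4}. The only thing worth checking is that no distinctness side-condition blocks Axiom~\ref{ax:O4}; in the paper's formulation O4 has been weakened precisely so that the distinctness premise is not needed, and in any case all the events involved are pairwise distinct by Axiom~\ref{ax:O3} applied to the two hypotheses together with $c \neq d$. I would expect the Isabelle proof to mirror the above exactly, with the disjunction elimination done by \lstinline|rule disjE| on the output of \lstinline|abc_abd_bcdbdc|, or simply discharged in one step by \lstinline|metis| citing \lstinline|abc_abd_bcdbdc|, \lstinline|abc_sym| and \lstinline|abc_bcd_abd|.
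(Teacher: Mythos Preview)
Your proposal is correct and matches the paper's approach: the paper states that Lemma~\ref{lem:2} ``follow[s] quickly from Theorem~\ref{thm:1}, Axiom~\ref{ax:O4}, and Lemma~\ref{lem:1}'', which is precisely your reduction (the symmetry you invoke as Axiom~\ref{ax:O2} is the first conjunct of Theorem~\ref{thm:1}, so this is only a cosmetic difference in citation). Your detailed case analysis and the remark about distinctness in \ref{ax:O4} are both accurate.
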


\begin{lemma}\label{lem:3}
    If $\ord{a}{b}{c}$ and $\ord{a}{c}{d}$ then $\ord{b}{c}{d}$.
\end{lemma}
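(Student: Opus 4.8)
\textbf{Proof plan for Lemma~\ref{lem:3}.}
The plan is to reduce, via Axiom~\ref{ax:O5}, to a small number of ordering cases for the triple $b,c,d$ and discard the bad ones. First I would set up the scaffolding: the events $a,b,c,d$ are pairwise distinct and collinear. Distinctness of $a,b,c$ and of $a,c,d$ follows from the derived form of Axiom~\ref{ax:O3}; for $b\neq d$, observe that $b=d$ would turn $\ord{a}{c}{d}$ into $\ord{a}{c}{b}$, which \lstinline|abc_only_cba| rules out given $\ord{a}{b}{c}$. For collinearity, Axiom~\ref{ax:O1} gives a path through $a,b,c$ and a path through $a,c,d$; since these share the two distinct events $a,c$, Axiom~\ref{ax:I3} identifies them, so $b,c,d$ all lie on a single path $Q$.

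Next I would apply Axiom~\ref{ax:O5} (\lstinline|some_betw|) to the distinct events $b,c,d\in Q$, which leaves the three cases $\ord{b}{c}{d}$, $\ord{c}{d}{b}$, and $\ord{d}{b}{c}$. The first is exactly the goal. In the second, Axiom~\ref{ax:O4} applied to $\ord{a}{c}{d}$ and $\ord{c}{d}{b}$ yields $\ord{a}{c}{b}$, which contradicts $\ord{a}{b}{c}$ by \lstinline|abc_only_cba|. So the only genuine work is the last case, $\ord{d}{b}{c}$, equivalently $\ord{c}{b}{d}$ by Axiom~\ref{ax:O2}.

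For that case I would invoke Lemma~\ref{lem:2}: from $\ord{a}{b}{c}$ we get $\ord{c}{b}{a}$ (Axiom~\ref{ax:O2}), and together with $\ord{c}{b}{d}$ and $a\neq d$, Lemma~\ref{lem:2} gives $\ord{c}{a}{d}$ or $\ord{c}{d}{a}$. Both are impossible against the hypothesis $\ord{a}{c}{d}$: the first contradicts it directly via \lstinline|abc_only_cba|, and the second, after Axiom~\ref{ax:O2}, becomes $\ord{a}{d}{c}$, which \lstinline|abc_only_cba| again forbids. (One could instead use Lemma~\ref{lem:1} here, at the cost of a couple more applications of Axioms~\ref{ax:O2} and \ref{ax:O4}.) The main --- and essentially the only --- obstacle is recognising that the $\ord{d}{b}{c}$ branch cannot be closed using Axiom~\ref{ax:O4} alone and requires one of the overlapping-order lemmas; the remaining steps are routine manipulation with the symmetry Axiom~\ref{ax:O2} and \lstinline|abc_only_cba|.
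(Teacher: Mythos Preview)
Your proof is correct and essentially matches the paper's approach. The paper's own argument is a one-line remark that Lemmas~\ref{lem:2} and~\ref{lem:3} ``follow quickly from Theorem~\ref{thm:1}, Axiom~\ref{ax:O4}, and Lemma~\ref{lem:1}'', so your expansion is exactly the kind of detail the paper leaves implicit. Two minor points of comparison: you invoke Axiom~\ref{ax:O5} explicitly for the initial three-way case split on $b,c,d$, which the paper does not list among its ingredients (though some such case split is unavoidable here); and your primary route through the $\ord{d}{b}{c}$ branch uses Lemma~\ref{lem:2}, whereas the paper names only Lemma~\ref{lem:1}. Since you already note that Lemma~\ref{lem:1} works in its place (and since Lemma~\ref{lem:2} is itself derived from Lemma~\ref{lem:1}), this is not a substantive divergence.
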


The two remaining lemmas follow quickly from Theorem~\ref{thm:1}, Axiom~\ref{ax:O4}, and Lemma~\ref{lem:1}. In addition, we prove several further, similar results that follow readily too, such as
\[\ord{a}{b}{c} \wedge \ord{b}{c}{d} \implies \ord{a}{b}{c}\;, \text{ and}\]
\[\ord{a}{b}{e} \wedge \ord{a}{d}{e} \wedge \ord{b}{c}{d} \implies \ord{a}{c}{e}\;.\]

Theorem~\ref{thm:9} is now rather easy to prove, and we are able to follow Schutz' prose closely. However, we state the result in a different way: the prosaic ``\dots may be represented by the symbols $a$, $b$, $c$, $d$ in such a way that \dots'' from Theorem~\ref{thm:9} above is more easily expressed in Isabelle as a property of the set of all four events.

\begin{lstlisting}
theorem (*9*) chain4:
  assumes path_Q: "Q \<in> \<P>"
      and inQ: "a \<in> Q" "b \<in> Q" "c \<in> Q" "d \<in> Q"
      and abcd_neq: "a \<noteq> b \<and> a \<noteq> c \<and> a \<noteq> d \<and> b \<noteq> c \<and> b \<noteq> d \<and> c \<noteq> d"
    shows "ch {a,b,c,d}"
\end{lstlisting}

Even though our statement is different, we are able to follow Schutz' proof in the main aspects, and briefly explain it here.
A start of the chain, say $\ord{a}{b}{c}$, is easily obtained from Axiom~\ref{ax:O5}. The remaining element $d$ must then obey, also by \ref{ax:O5}, one of the three orderings $\ord{d}{a}{b}$, $\ord{a}{d}{b}$, or $\ord{a}{b}{d}$. In each case, the Lemmas \ref{lem:1} and \ref{lem:3} provide the remaining ordering we need for the chain of four elements.


In the actual formalisation, since events are named $a,b,c,d$ in the assumptions, we may not simply assume $\ord{a}{b}{c}$ and expect Isabelle to notice this happens, as mathematicians might say, without loss of generality. Instead, we obtain new events $a',b',c' \in \left\lbrace a,b,c,d \right\rbrace$ such that $\ord{a'}{b'}{c'}$. The rest of the proof follows as above, and concludes Theorem~\ref{thm:9}.  

\begin{theorem}\label{thm:10} 
Any finite set of distinct events of a path forms a chain. That is, any set of $n$ distinct events can be represented by the notation $a_1,a_2,\dots,a_n$ such that\vspace{-7pt}
\[[a_1 \; a_2 \dots a_n]\;.\]
\end{theorem}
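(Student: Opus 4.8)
\emph{Proof plan.} The statement says that for a path $Q \in \P$ and a finite $X \subseteq Q$ of distinct events, $X$ is a chain (in Isabelle, \lstinline|ch X|). The plan is induction on the cardinality $n = |X|$, as in the proof of \lstinline|finite_path_has_ends|. The cases $n \leq 2$ are immediate: a two-element subset of a path is a \lstinline|short_ch|, and smaller sets are degenerate. The case $n = 3$ is essentially Axiom~\ref{ax:O5} (\lstinline|some_betw|): it hands us one of the six betweenness relations on the three events, from which an \lstinline|ordering| function is built directly (this is also the base case $n=3$ of Theorem~\ref{thm:9}, \lstinline|chain4|). So all the work is in the inductive step, with $n+1 \geq 4$.

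For the step, rather than deleting an arbitrary event and searching for where to re-insert it (which would force a three-way split --- front, back, or interior), I would first apply \lstinline|finite_path_has_ends| to $X$ (legal since $|X| \geq 4 \geq 3$) to obtain extremal events $a, b \in X$, $a \neq b$, with $\ord{a}{c}{b}$ for every other $c \in X$. Put $X' = X \setminus \{a\}$; then $X' \subseteq Q$, $|X'| = n \geq 3$, so the induction hypothesis gives \lstinline|ch X'|, and since $|X'| \geq 3$ this must be \lstinline|long_ch_by_ord f X'| for some $f$, i.e.\ $X' = \{f(0),\dots,f(n-1)\}$ with $\ord{f(i)}{f(j)}{f(k)}$ whenever $i<j<k<n$ (Theorem~\ref{thm:2}; all $f(i)$ distinct by \lstinline|index_injective|). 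The next step is to show $b$ is an \emph{endpoint} of this chain, $b = f(0)$ or $b = f(n-1)$: if instead $b = f(m)$ with $0 < m < n-1$, then Theorem~\ref{thm:2} gives $\ord{f(0)}{b}{f(n-1)}$, while extremality of $a$ (applied to $f(0), f(n-1) \in X' \setminus \{b\}$) gives $\ord{a}{f(0)}{b}$ and $\ord{a}{f(n-1)}{b}$; chaining these with Axiom~\ref{ax:O4} produces $\ord{a}{b}{f(n-1)}$, contradicting $\ord{a}{f(n-1)}{b}$ via Theorem~\ref{thm:1}. Up to reversing $f$ --- the symmetry captured by \lstinline|chain_unique_upto_rev|, or handled by a short reversal argument --- we may assume $b = f(n-1)$.

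It then remains to prepend $a$: define $g(0) = a$ and $g(i+1) = f(i)$, so that $g$ enumerates $X$ in $n+1$ steps, and verify the \lstinline|ordering| predicate for $g$, \lstinline|betw| and $X$. The surjectivity and index-range clauses are routine; the only nontrivial part is $\ord{a}{f(i)}{f(j)}$ for all $i < j < n$. When $f(j) = b$ this is just extremality of $a$; otherwise we have $\ord{a}{f(i)}{b}$, $\ord{a}{f(j)}{b}$ and (from the $X'$-chain, since $i < j$) $\ord{f(i)}{f(j)}{b}$. Applying Lemma~\ref{lem:1} to $\ord{b}{f(j)}{f(i)}$ and $\ord{b}{f(j)}{a}$ yields $\ord{f(j)}{f(i)}{a}$ or $\ord{f(j)}{a}{f(i)}$; the second case, combined with $\ord{a}{f(i)}{b}$ through Axiom~\ref{ax:O4}, gives $\ord{b}{a}{f(j)}$, which contradicts $\ord{b}{f(j)}{a}$ by Theorem~\ref{thm:1}, so $\ord{f(j)}{f(i)}{a}$ holds, and by symmetry (Axiom~\ref{ax:O2}) $\ord{a}{f(i)}{f(j)}$. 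With the \lstinline|ordering| for $g$ established, $X$ witnesses \lstinline|long_ch_by_ord g X|, hence \lstinline|ch X|, closing the induction.

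I expect the main obstacle to be bookkeeping rather than geometry: setting up the induction (Isabelle's \lstinline|induct| on \lstinline|card X| with $X$ generalised, and discharging the $n \leq 3$ base cases cleanly), re-indexing the ordering function when prepending --- shifting every index by one while keeping the \lstinline|card| conditions inside \lstinline|ordering| correct --- and disposing of the ``reverse the chain'' symmetry so that $b$ can be taken as the last element of the $X'$-chain without an essentially duplicated argument. The betweenness derivations themselves are short given Lemma~\ref{lem:1}, Axioms~\ref{ax:O2} and \ref{ax:O4}, Theorem~\ref{thm:1}, and the auxiliary overlapping-order lemmas already proved for Theorem~\ref{thm:9}.
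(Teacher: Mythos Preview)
Your proposal is correct and takes a genuinely different route from the paper's proof. Both argue by induction on $|X|$, but the paper (following Schutz) removes an \emph{arbitrary} event $b$ from $X$, applies the induction hypothesis to the remainder $Y$, obtains the chain endpoints $a_1,a_n$ of $Y$, and then splits into three cases according to whether $\ord{b}{a_1}{a_n}$, $\ord{a_1}{a_n}{b}$, or $\ord{a_1}{b}{a_n}$. The first two cases are handled by edge-append lemmas \lstinline|chain_append_at_left_edge| and \lstinline|chain_append_at_right_edge|; the third, interior, case is the heaviest and requires the auxiliary \lstinline|smallest_k_ex| to locate the segment into which $b$ falls, followed by \lstinline|chain_append_inside|. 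Your approach instead invokes \lstinline|finite_path_has_ends| \emph{first} on $X$ to pick out an extremal event $a$, removes that, and then only ever has to prepend --- the interior-insertion case simply does not arise. The price you pay is the short argument that the other extremal event $b$ must land at an end of the inductively obtained chain (plus a possible reversal via \lstinline|chain_sym|), but this is considerably lighter than the paper's Case~(ii). One small remark: the step you attribute to Axiom~\ref{ax:O4} when deriving the contradiction $\ord{a}{b}{f(n-1)}$ from $\ord{a}{f(0)}{b}$ and $\ord{f(0)}{b}{f(n-1)}$ actually needs the derived overlapping-order lemma $\ord{a}{b}{c} \wedge \ord{b}{c}{d} \Rightarrow \ord{a}{c}{d}$ rather than \ref{ax:O4} itself, but that lemma is available (it is among the results proved alongside Lemmas~\ref{lem:1}--\ref{lem:3}).
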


There is, of course, nothing special about a set of four elements on a path: one would expect the result of Theorem~\ref{thm:9} to extend to larger sets too. Theorem~\ref{thm:10} proves that this is so.
Mechanising it was a major undertaking.
One problem was due to the definition of \lstinline|ordering| we were using initially \cite{palmer2017}: the chain definition used in most of our early results is stronger than Schutz'.
As shown in Sec.~\ref{sec:order-fin-chain}, this leads to a free proof of Theorem~\ref{thm:2}. But such things always come with a price:
Schutz' proof of Theorem~\ref{thm:10} only aims at a local chain. If we want to be consistent with our previous proofs in Isabelle (which use total chains, \lstinline|long_ch_by_ord|), we need this local chain to become a total chain, which essentially means going through all the steps of Schutz' proof for Theorem~\ref{thm:2}. This is why we defined a new local \lstinline|ordering2|, and proved \lstinline|order_finite_chain2| in Sec.~\ref{sec:order-fin-chain}.

Like for Theorem~\ref{thm:9}, we ignore the second sentence of Schutz' formulation, which essentially restates the first, but is harder to express in Isabelle.
Our statement differs from Schutz in another way. 
We make explicit the condition that any chain needs to have at least two elements (by definition): thus it isn't \textit{every} finite set of events that qualifies. This is left implicit in Schutz' notation, where $[a_1 \; a_2 \dots a_n]$ implies $n \geq 2$, leaving his first sentence imprecise. This condition on the cardinality of $X$ implies finiteness in Isabelle.

\begin{lstlisting}
theorem (*10*) path_finsubset_chain:
  assumes "Q \<in> \<P>"
      and "X \<subseteq> Q"
      and "card X \<ge> 2"
  shows "ch X"
\end{lstlisting}


The proof is by induction, as in Schutz \cite{schutz1997}. 
Notice Schutz uses a four-element chain as the base case, 
so we have to provide two (simple) extra cases: two- and three-element sets.
A two-element chain is just a set of two points on a path, thus a two-event set $X$ satisfies the definition of chains immediately. A set $X$ with three events $a,b,c$, all of them on a path, must be a chain because $a,b,c$ are in some betweenness relation by axiom O5. Both of these are omitted from the listing, and we move on to the induction.




The base case of $|X|=4$ follows directly from Theorem~9: it states that a set of four events on a path forms a chain.
Schutz' induction proceeds by assuming a chain of $n$ events, and adds an extra event. We instead fix the number of events in the set to the successor of the induction variable $n$ (where $n=|X|-4$ because Isabelle induction starts at $n=0$, see Sec.~\ref{sec:bgr:isaind}). Thus we obtain a new set by removing an element, and argue this new set must be a chain by the induction hypothesis \lstinline|IH|. We remove some overall indentation for legibility.


\begin{lstlisting}
case IH: (Suc n)
  then obtain Y b where X_eq: "X = insert b Y" and "b \<notin> Y"
    by (metis Diff_iff card_eq_0_iff finite.cases insertI1 insert_Diff_single not_numeral_le_zero)
  have "card Y \<ge> 4" "n = card Y - 4"
    using IH.hyps(2) IH.prems(4) X_eq \<open>b \<notin> Y\<close> by auto
  then have "ch Y"
    using IH(1) [of Y] IH.prems(3,4) X_eq assms(1) by auto
  then obtain f where f_ords: "long_ch_by_ord f Y"
    using ch_long_if_card_ge3 \<open>4 \<le> card Y\<close> by fastforce
\end{lstlisting}

This places us in the setting of Schutz' proof: we have a chain $Y$, indexed by $f$, of at least four events, and a set $X$ containing one extra event $b$ in addition to this chain. We now introduce variable names that agree with those of Schutz, e.g.\ $a_1 = f(0)$. In terms of our indexing function, the subscripts of those variables are shifted, but it allows us to reproduce his prose (given further below) more faithfully.

\begin{lstlisting}
obtain a\<^sub>1 a a\<^sub>n where long_ch_Y: "[f[a\<^sub>1..a..a\<^sub>n]Y]"
  using get_fin_long_ch_bounds Y_def f_def fin_X
  by fastforce
hence bound_indices: "f 0 = a\<^sub>1 \<and> f (card Y - 1) = a\<^sub>n"
  by (simp add: fin_long_chain_def)
\end{lstlisting}

The remaining proof is structured into the same three cases Schutz considers. We obtain the three possible betweenness relations that the three events above can be in, and consider each in turn.

\pagebreak
\begin{lstlisting}
consider "[[b a\<^sub>1 a\<^sub>n]]" | "[[a\<^sub>1 a\<^sub>n b]]" | "[[a\<^sub>n b a\<^sub>1]]" <proof>
thus "ch X"
proof (cases)
  (* case (i) *)
  assume "[[b a\<^sub>1 a\<^sub>n]]"
  obtain g where "g=(\<lambda>j::nat. if j\<ge>1 then f (j-1) else b)"
    by simp
  hence "[g[b..a\<^sub>1..a\<^sub>n]X]"
    using chain_append_at_left_edge (*...*) by blast
  thus "ch X"
    unfolding ch_def ch_by_ord_def using fin_long_chain_def by auto
\end{lstlisting}


The main proof steps needed for this first case, with $\ord{b}{a_1}{a_n}$, are inside \linebreak\hbox{\lstinline|chain_append_at_left_edge|}. Schutz' prose for this case is given below \cite[p.~25]{schutz1997}.

\jtodo[fancyline]{indent using quote env?}
\begin{proof}[Case (i)]
By the inductive hypothesis and Theorem~\ref{thm:2} we have $\ord{a_1}{a_2}{a_n}$, so the previous theorem [Theorem~\ref{thm:9}] implies that $[b \; a_1 \; a_2 \; a_n]$ which implies that $\ord{b}{a_1}{a_2}$. Thus $b$ is an element of a chain $[a^{*}_{1} \; a^{*}_{2} \; \dots \; a^{*}_{n+1}]$ where $a^{*}_{1} = b$ and (for $j\in \left\lbrace 2,\dots,n+1\right\rbrace$) $a^{*}_{j} := a_{j-1}$.
\qed
\end{proof}

We skip the step involving $[b \; a_1 \; a_2 \; a_n]$, employing instead an alternative ordering relation
\lstinline|abd_bcd_abc|, proving $$\ord{a}{b}{d} \text{ and } \ord{b}{c}{d} \implies \ord{a}{b}{c}\;.$$
This result is not given in Schutz, but it follows readily from the lemmas provided in the proof of Theorem~\ref{thm:9}.
We could have formulated a four-element chain with an explicit indexing function to follow Schutz' more closely, but since that requires multiple extra definitions, we decided this way was easier to read in Isabelle. We give a heavily cut listing of the proof below (remember that $a_2$ becomes $f(1)$).

\begin{lstlisting}
lemma chain_append_at_left_edge:
  assumes long_ch_Y: "[f[a\<^sub>1..a..a\<^sub>n]Y]"
      and bY: "[[b a\<^sub>1 a\<^sub>n]]"
    fixes g defines g_def: "g \<equiv> (\<lambda>j. if j\<ge>1 then f (j-1) else b)"
    shows "[g[b .. a\<^sub>1 .. a\<^sub>n](insert b Y)]"
proof -
  (*...*)
  hence "[[a\<^sub>1 (f 1) a\<^sub>n]]"
    using order_finite_chain fin_long_chain_def long_ch_Y
    by auto
  hence "[[b a\<^sub>1 (f 1)]]"
    using bY abd_bcd_abc by blast
\end{lstlisting}

Schutz' final sentence implies an indexing function that is equal to our
$g$, and his statement requires manual proofs of multiple chain properties regarding indexing and betweenness in Isabelle (namely, those that make up the definition of \lstinline|ordering2|, cf Sec.~\ref{sec:order-fin-chain}). Notice that this is where Theorem~\ref{thm:2} comes in for us, in the guise of \lstinline|ch_equiv| (see Sec.~\ref{sec:order-fin-chain}): Schutz only shows that a single betweenness relation holds between $b$ and adjacent elements. It is Theorem~\ref{thm:2} that allows us to extend this to betweenness relations involving any events on the (finite) chain, and obtain a total chain, thus proving our goal.
    
\begin{lstlisting}
  have "ordering2 g betw X" <proof>
  hence "long_ch_by_ord2 g X"
    using points_in_chain (*...*) by blast
  hence "long_ch_by_ord g X"
    using ch_equiv fin_X by blast
\end{lstlisting}

We now go back to Theorem~\ref{thm:10}'s induction. Two cases remain: $b$ being the middle element (ii), and $b$ being on the right (iii). Case (iii) is symmetric with case (i), and Schutz doesn't give an explicit proof of it. Instead of copy-pasting the entire proof for \linebreak\hbox{\lstinline|chain_append_at_left_edge|}, we therefore choose to use a different result, \lstinline|chain_sym|, to give a more interesting, shorter proof using symmetry. 

\begin{lstlisting}
lemma chain_sym:
  assumes "[f[a..b..c]X]"
    shows "[\<lambda>n. f (card X - 1 - n)[c..b..a]X]"
\end{lstlisting}

This relationship between a finite chain and its reversal is not explicitly mentioned in Schutz, an omission which leads to some complication also in our proof of Theorem~\ref{thm:13} (Sec.~\ref{sec:unreach-connected}). The lemma \lstinline|chain_sym| allows for a proof of Case~(iii) that makes use of Case~(i).

\begin{lstlisting}
lemma chain_append_at_right_edge:
  assumes long_ch_Y: "[f[a\<^sub>1..a..a\<^sub>n]Y]"
      and Yb: "[[a\<^sub>1 a\<^sub>n b]]"
    fixes g defines g_def: "g \<equiv> (\<lambda>j. if j \<le> (card Y - 1) then f j
    				     else b)"
    shows "[g[a\<^sub>1 .. a\<^sub>n .. b](insert b Y)]"
proof -
  (*...*)
  obtain f2 where f2_def: "[f2[a\<^sub>n..a..a\<^sub>1]Y]"
      "f2 = (\<lambda>n. f (card Y - 1 - n))"
    using chain_sym long_ch_Y by blast
  obtain g2 where g2_def: "g2 = (\<lambda>j. if j\<ge>1 then f2 (j-1) else b)"
    by simp
  have "[[b a\<^sub>n a\<^sub>1]]"
    using abc_sym Yb by blast
\end{lstlisting}

The functions $f_2$ and $g_2$ can be thought of as reversed versions of $f$ and $g$: if $f$ indexes a chain ``left-to-right'', $f_2$ counts ``right-to-left''. We can show $g_2$ orders $X$ into a chain using \lstinline|chain_append_at_left_edge|, and then reverse it again using \lstinline|chain_sym| to get $g_1$, which thus orders $X$. Finally, we show $g_1=g$, here in ordinary mathematical notation:


\begin{align*}
g_1(n) =
g_2(|X|-1-n) &=
\begin{cases}
	f_2(|X|-2-n) \;\; &\text{ if} \; |X|-1-n\ge1\\
	b \; &\text{ otherwise}
\end{cases}\\&=
\begin{cases}
	f(|Y|+1-|X|+n) \;\; &\text{ if} \; |X|-2\ge n\\
	b \; &\text{ otherwise}
\end{cases} \\&=
g(n)
\end{align*}


%
%

This concludes the cases of appending events at the end of a chain. Schutz' prose proof for the case of adding an event inside a chain is longer, and given below.

\begin{proof}[Case (ii)]
Let $k$ be the smallest integer such that $\ord{a_1}{b}{a_k}$.
Then the previous theorem [Theorem~\ref{thm:9}] implies either that $[a_1 \; a_{k-1} \; b \; a_k]$, or that $k=2$ so that $\ord{a_{k-1}}{b}{a_k}$.
If $k-2 \geq 1$ we have $\ord{a_{k-2}}{a_{k-1}}{a_k}$ which with $\ord{a_{k-1}}{b}{a_{k+1}}$ implies $[a_{k-2} \; a_{k-1} \; b \; a_k]$ by the previous theorem, while if $k+1 \leq n$ we have $\ord{a_{k-1}}{a_k}{a_{k+1}}$ which with $\ord{a_{k-1}}{b}{a_k}$ implies $[a_{k-1} \; b \; a_{k} \; a_{k+1}]$;
that is we have now shown that $\ord{a_{k-2}}{a_{k-1}}{b}$ (if $k-2 \geq 1$) and $\ord{a_{k-1}}{b}{a_k}$ and $\ord{b}{a_k}{a_{k+1}}$ (if $k+1 \leq n$) so that $b$ is an element of a chain $\chain{a_1^{*}\;a_2{*},a_{n+1}^{*}}$ where
\[ a_j^{*} = \begin{cases}
    a_j,        & j \leq k-1 \\
    b,          & j=k \\
    a_{j-1},    & j>k \;.
\end{cases} \]
\qed\end{proof}

Schutz' seemingly harmless first sentence ``Let $k$ be [\dots]'' requires a nontrivial existence proof in Isabelle.

\begin{lstlisting}
lemma (*for 10*) smallest_k_ex:
  assumes long_ch_Y: "[f[a\<^sub>1..a..a\<^sub>n]Y]"
      and b_def: "b\<notin>Y"
      and Yb: "[[a\<^sub>1 b a\<^sub>n]]"
    shows "\<exists>k>0. [[a\<^sub>1 b (f k)]] \<and> k < card Y \<and>
                 \<not>(\<exists>k'<k. [[a\<^sub>1 b (f k')]])"
\end{lstlisting}

The proof script is not instructive in detail, so we merely note it proceeds by obtaining the set of all indices of chain elements between $a_1$ and $b$.
We can then obtain its maximum $m$ (provided the set is not empty) using Isabelle's \lstinline|Max| operator, and show that $k=m+1$ satisfies the properties we are looking for. We then continue to prove Case (ii), listed below. Notice we already give a suitable definition for the indexing function $g$ in the assumptions.

\begin{lstlisting}
lemma (*for 10*) chain_append_inside:
  assumes long_ch_Y: "[f[a\<^sub>1..a..a\<^sub>n]Y]"
      and Y_def: "X = Y \<union> {b}" "b\<notin>Y"
      and fin_X: "finite X"
      and Yb: "[[a\<^sub>1 b a\<^sub>n]]"
      and k_def: "[[a\<^sub>1 b (f k)]]" "k < card Y"
        "\<not>(\<exists>k'. (0::nat)<k' \<and> k'<k \<and> [[a\<^sub>1 b (f k')]])"
      and g_def: "g = (\<lambda>j::nat.
        if (j\<le>k-1) then f j
        else (if j=k then b else f (j-1)))"
    shows "[g[a\<^sub>1 .. b .. a\<^sub>n]X]"
\end{lstlisting}

We did not manage to split the proof of Case (ii) according to the same conditions seen in Schutz' proof. We argue this is because he restricts his attention to a handful of events only, trusting his reader's intuition to convince them that the ordering of all other events stays the same. We, on the other hand, need to show explicitly that the new way of indexing given by $g$ satisfies the definition of a chain \emph{everywhere} on $X$, i.e.:
\begin{lstlisting}
have "\<forall>n n' n''.
      (finite X \<longrightarrow> n'' < card X) \<and> Suc n = n' \<and> Suc n' = n''
          \<longrightarrow> [[(g n) (g (Suc n)) (g (Suc (Suc n)))]]"
\end{lstlisting}

This means splitting according to the value of the natural number $n$ and its two successors, in order to fix the (conditional) form of the desired indexing function $g$. We do mirror his case splits in the following results, which are all used in different cases according to (the successors of) $n$.

\begin{lstlisting}
have b_middle: "[[(f (k-1)) b (f k)]]" <proof>
have b_right: "[[(f (k-2)) (f (k-1)) b]]" if "k \<ge> 2" <proof>
have b_left: "[[b (f k) (f (k+1))]]" if "k+1 \<le> card Y - 1" <proof>
\end{lstlisting}

\noindent
It may appear that one could force Schutz' case split, but since our definition of \linebreak\hbox{\lstinline|ordering2|} explicitly requires universal quantification over indices, and $g$ is defined piecewise, the case split we employ would still have to be made later on.

The final transformation from a local chain based on the ordering of successive indices to a globally ordered chain $X$ is again precisely the result of Theorem~\ref{thm:2}.
With $g$ now established as a suitable ordering function in these three scenarios, we have completed our proof for the final case of Theorem~\ref{thm:10}. Any finite set of at least two events on a path forms a chain (i.e.\ can be ordered).

\begin{theorem}\label{thm:11}
Any finite set of $N$ distinct events of a path separates it into $N-1$ segments and two prolongations of segments.
\end{theorem}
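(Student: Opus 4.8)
The plan is to reduce to chain form via Theorem~\ref{thm:10}, define the segments and prolongations explicitly from that chain, and then verify three things: that the pieces cover $Q$, that they are pairwise disjoint (and disjoint from $X$), and that there are exactly $N-1$ segments. Throughout I take $N \ge 2$ (the substantive case; $N \le 1$ is degenerate — a single event splits a path into its two prolongations, using Theorem~\ref{thm:5}(i) to produce a second event, and $N = 0$ is vacuous). By Theorem~\ref{thm:10}, $X$ forms a chain, so there is an indexing function $f$ with $X = \{a_1,\dots,a_N\}$, $a_i = f(i-1)$ and $[a_1\,\dots\,a_N]$; Theorem~\ref{thm:2} then supplies $\ord{a_i}{a_j}{a_k}$ whenever $i<j<k$ and that the $a_i$ are pairwise distinct. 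For $1 \le i \le N-1$ put $S_i := \seg{a_i}{a_{i+1}} = \{x \in Q : \ord{a_i}{x}{a_{i+1}}\}$, and set the two prolongations to be $P_{\mathrm{lo}} := \{x \in Q : \ord{x}{a_1}{a_N}\}$ and $P_{\mathrm{hi}} := \{x \in Q : \ord{a_1}{a_N}{x}\}$ (in the mechanisation these are introduced with \lstinline|fixes|/\lstinline|defines|, exactly as in the lemma \lstinline|show_segmentation|). The claim to prove is that $Q$ decomposes as the disjoint union of $X$, the $S_i$, $P_{\mathrm{lo}}$ and $P_{\mathrm{hi}}$.

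For coverage, fix $x \in Q \setminus X$. Since $x, a_1, a_N$ are distinct events of the path $Q$, Axiom~\ref{ax:O5} (\lstinline|some_betw|) forces one of $\ord{x}{a_1}{a_N}$, $\ord{a_1}{x}{a_N}$, $\ord{a_1}{a_N}{x}$; the first puts $x \in P_{\mathrm{lo}}$ and the third puts $x \in P_{\mathrm{hi}}$ directly. For the middle case I reuse the least-index construction \lstinline|smallest_k_ex| already proved for Theorem~\ref{thm:10}: let $k$ be smallest with $\ord{a_1}{x}{a_k}$ (such $k$ exists and $2 \le k \le N$). Then $\lnot\,\ord{a_1}{x}{a_{k-1}}$, and a routine betweenness argument — apply Axiom~\ref{ax:O5} to $a_{k-1}, x, a_k$ and discard its two outer cases using the chain relation $\ord{a_1}{a_{k-1}}{a_k}$, Axiom~\ref{ax:O4}, Lemmas~\ref{lem:1}--\ref{lem:3} and Theorem~\ref{thm:1}, the minimality of $k$ ruling out the case that would place $x$ strictly before $a_{k-1}$ — leaves $\ord{a_{k-1}}{x}{a_k}$, i.e. $x \in S_{k-1}$.

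Disjointness is the bulk of the work. Each $S_i$ and each prolongation misses $X$: an $a_j \in S_i$ would give $\ord{a_i}{a_j}{a_{i+1}}$, contradicting Theorem~\ref{thm:2} together with Axiom~\ref{ax:O3}, and similarly for $a_j \in P_{\mathrm{lo}}$ or $P_{\mathrm{hi}}$. For $S_i \cap S_j = \emptyset$ with $i<j$: an $x$ in both gives $\ord{a_i}{x}{a_{i+1}}$ and $\ord{a_j}{x}{a_{j+1}}$, which together with the chain relations among $a_{i+1}, a_j, a_{j+1}$ (the subcase $i+1 = j$ handled separately) contradict one another through Lemmas~\ref{lem:1}--\ref{lem:3}, Axiom~\ref{ax:O4} and Theorem~\ref{thm:1} (\lstinline|abc_only_cba|). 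The pairs $S_i$/$P_{\mathrm{lo}}$, $S_i$/$P_{\mathrm{hi}}$ and $P_{\mathrm{lo}}$/$P_{\mathrm{hi}}$ go the same way, pitting $\ord{a_i}{x}{a_{i+1}}$ or $\ord{a_1}{a_N}{x}$ against $\ord{x}{a_1}{a_N}$ and using the chain order. Finally the count is immediate: the $S_i$ are indexed by $i \in \{1,\dots,N-1\}$, giving exactly $N-1$ segments (no density is used, so some $S_i$ may be empty — paths are dense only inside \lstinline|MinkowskiDense|), and $P_{\mathrm{lo}}, P_{\mathrm{hi}}$ are the two prolongations.

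I expect the disjointness case analysis to be the real obstacle: each pair of piece-types needs its own short chain of betweenness deductions, and Isabelle will want every index subcase (notably $i+1 = j$ versus $i+1 < j$) spelled out — precisely the combinatorial betweenness bookkeeping that Lemmas~\ref{lem:1}--\ref{lem:3} were built to support, but still lengthy and easy to get bogged down in. The only conceptually non-routine ingredient, the least-$k$ argument used for coverage, is essentially \lstinline|smallest_k_ex| reused.
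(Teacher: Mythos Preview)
Your coverage and disjointness arguments match the paper's approach closely: reduce to a chain via Theorem~\ref{thm:10}, define the pieces explicitly from the indexing function, use the least-index construction (\lstinline|smallest_k_ex|) for the interior case, and discharge disjointness by betweenness bookkeeping using Theorem~\ref{thm:1} and the overlapping-order lemmas. That part is fine and is essentially what the paper's \lstinline|show_segmentation| and \lstinline|int_split_to_segs| do.

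The gap is in the count. You write that the $S_i$ are indexed by $i\in\{1,\dots,N-1\}$, ``giving exactly $N-1$ segments'', while simultaneously noting that some $S_i$ may be empty because density is unavailable. These two claims are incompatible. Segments are \emph{sets}; if $\seg{a_i}{a_{i+1}}$ and $\seg{a_j}{a_{j+1}}$ are both empty then they are the same set, so the map $i\mapsto S_i$ fails to be injective and the family $\{S_1,\dots,S_{N-1}\}$ has strictly fewer than $N-1$ elements. Nothing in the axioms available at this point guarantees that any event lies strictly between two given path events --- that is precisely Schutz' Theorem~17, which depends on Theorem~\ref{thm:13} and hence on Theorem~\ref{thm:11} itself. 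This is not a cosmetic issue: the paper shows that the cardinality clause is genuinely unprovable here and handles it by splitting the theorem in two. The decomposition, coverage and disjointness are proved unconditionally (\lstinline|show_segmentation|), while the statement $|S|=N-1$ is isolated in the locale \lstinline|MinkowskiDense|, where the extra assumption \lstinline|path_dense| makes \lstinline|segment_nonempty| available and lets one exhibit a bijection $\{0,\dots,N-2\}\to S$. Your proposal needs either to drop the count or to move it into that locale; the indexing alone does not establish it.
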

\begin{proof}
As in the proof of the previous Theorem~\ref{thm:10}, any event distinct from the \linebreak$a_i$ ($i=1,\dots,N$) belongs to a segment (Case (ii)) or a prolongation (Cases (i) and (iii)). Theorem~\ref{thm:1} implies that the $N-1$ segments and two prolongations are disjoint.
\qed
\end{proof}

The final result of Schutz' section 3.6 (Order on a path), Theorem~\ref{thm:11} allows us to use any finite subset of a path in order to split it into disjoint regions. Schutz provides a three-line argument by analogy with the proof of Theorem~\ref{thm:10}, arguing this result is a direct consequence of Theorems~\ref{thm:10} and \ref{thm:1}, employing the same case split as in the proof of the preceding Theorem~\ref{thm:10}. However, we found that
Schutz' statement is unprovable at the point of his stating it. A weaker version can be proven immediately; Schutz' full theorem only becomes true once Theorem~17 can be established. We discuss this issue after defining segments and intervals.

Schutz defines the \emph{segment} between distinct events $a,b$ of a path $ab$ as the set \linebreak$(ab) = \left\lbrace x : \ord{a}{x}{b},\; x \in ab \right\rbrace$. Similarly, he defines the \emph{interval} $|ab|$ as $(ab) \cup \left\lbrace a,b \right\rbrace$, and the \emph{prolongation} of $(ab)$ beyond $b$ as $\left\lbrace x : \ord{a}{b}{x},\; x \in ab \right\rbrace$. In Isabelle, we denote these sets as \lstinline|segment a b|, \lstinline|interval a b|, and \lstinline|prolongation a b| respectively.

Theorem~\ref{thm:11} and its proof sound natural enough to the geometric intuition, taking a path to be somehow line-like. However, the part of the statement regarding the number of segments is impossible to prove at this point.
Given two events $a$ and $b$ on a path $P$, Theorem~\ref{thm:6} (on prolongation, Sec.~\ref{sec:prolongation}) guarantees the existence of $c \in P$ such that $[abc]$ (or alternatively, such that $[cab]$), but we can guarantee the existence of an element $c$ such that $[acb]$ only after Theorem~17 (in Schutz' Chapter~4, not considered here), which states exactly that. Since no such element can be guaranteed to exist, segments can be empty. Then since they are defined as sets, all empty segments are equal (to the empty set), and this degeneracy can reduce the number of segments that exist in the segmentation. The problem is that formally, Theorem~17 relies on Theorem~\ref{thm:13}, which in turn requires Theorem~\ref{thm:11}, so we cannot just postpone this result.

One could fix this problem by taking intervals instead of segments. By definition, no interval is empty, fixing their number as Schutz suggests -- \todo{not double -?} but the intervals would overlap at their endpoints, losing disjointness.
We surmise that one could also prove that there are \textit{at most} $N-1$ segments.
We prove two versions of Theorem~\ref{thm:11}. In one we omit the conclusion about the number of segments (Sec.~\ref{sec:proofs:thm11:nocard}); in the other we include it, but have to assume path density (Sec.~\ref{sec:proofs:thm11:card}).

Ultimately, the problem is not fatal: we do not need to know how many segments there are for the proof of Theorem~\ref{thm:13}, only that a segmentation exists given a chain of events.
The disjointness of the segmentation is also added as a conclusion, while Schutz only mentions it in his proof.

\subsubsection{Without additional assumptions}\label{sec:proofs:thm11:nocard}
One could formalise Schutz' Theorem~\ref{thm:11} faithfully, as a pure existential statement, as in \lstinline|segmentation| given below.
\jtodo{\lstinline{is_pro} has not been mentioned before. Changed it to is\_prolongation. Does that need explaining?}
\begin{lstlisting}
abbreviation disjoint
  where "disjoint A \<equiv> (\<forall>a\<in>A. \<forall>b\<in>A. a \<noteq> b \<longrightarrow> a \<inter> b = {})"
\end{lstlisting}

\enlargethispage{-2\baselineskip}
\begin{lstlisting}
theorem (*11*) segmentation:
  assumes path_P: "P\<in>\<P>"
      and Q_def: "card Q\<ge>2" "Q\<subseteq>P"
    shows "\<exists>S P1 P2. P = ((\<Union>S) \<union> P1 \<union> P2 \<union> Q) \<and>
                     disjoint (S\<union>{P1,P2}) \<and> P1\<noteq>P2 \<and> P1\<notin>S \<and> P2\<notin>S \<and>
                     (\<forall>x\<in>S. is_segment x) \<and>
                     is_prolongation P1 \<and> is_prolongation P2"
\end{lstlisting}

However, in order to show the set of segments $S$ and the two prolongations $P_1$ and $P_2$ exist, and have the desired properties, we have to construct them explicitly. This leads to the more practical theorem \lstinline|show_segmentation|. In fact, this is the statement we prove, and \lstinline|segmentation| can then be derived from it quite easily by using Theorem~\ref{thm:10} to obtain an indexing function $f$ for the set of events $Q$.

\begin{lstlisting}
theorem (*11*) show_segmentation:
  assumes path_P: "P\<in>\<P>"
      and Q_def: "Q\<subseteq>P"
      and f_def: "[f[a..b]Q]"
    fixes P1 defines P1_def: "P1 \<equiv> prolongation b a"
    fixes P2 defines P2_def: "P2 \<equiv> prolongation a b"
    fixes S  defines S_def:
    	 "S \<equiv> if card Q=2 then {segment a b}
              else {segment (f i) (f (i+1)) | i. i<card Q-1}"
    shows "P = ((\<Union>S) \<union> P1 \<union> P2 \<union> Q)" "(\<forall>x\<in>S. is_segment x)"
          "disjoint (S\<union>{P1,P2})" "P1\<noteq>P2" "P1\<notin>S" "P2\<notin>S"
\end{lstlisting}

The additional assumption \lstinline|f_def| turns out to be required in order to follow Schutz' proof of Theorem~\ref{thm:13}, as well as allowing us to give an explicit definition of $S$.
Strictly adhering to Schutz' formulation for Theorem~\ref{thm:11} (like in \hbox{\lstinline|segmentation|}) would lead to additional complexity when proving Theorem~\ref{thm:13} (see Section~\ref{sec:unreach-connected}).


Notice that the definition of $S$ follows our division between short and long chains, and so must the proof.
The case of a short chain $Q=\{a,b\}$ is simple, since $S$ is a singleton with element $(ab)$. All individual required results are deriveable by Isabelle's \textit{sledgehammer} with the exception of $P = (ab) \cup P_1 \cup P_2 \cup Q$, which we prove by translating $x \in P$ into
$\ord{a}{x}{b} \lor \ord{b}{a}{x} \lor \ord{a}{b}{x} \lor x=a \lor x=b$ (by Axiom~\ref{ax:O5}).

For $N\geq3$ we prove $P_1$, $P_2$, and $S$ satisfy the conditions laid out in \hbox{\lstinline|show_segmentation|} one by on via helper lemmas.
The main lemma is that the set $S$ of segments covers the ``inside'' of the chain:

\begin{lstlisting}
lemma int_split_to_segs:
  assumes f_def: "[f[a..b..c]Q]"
    fixes S
  defines S_def: "S \<equiv> {segment (f i) (f(i+1)) | i. i<card Q-1}"
  shows "interval a c = (\<Union>S) \<union> Q"
\end{lstlisting}

The proof is lengthy, but the mechanisation details are largely uninspiring, so we omit these here. It proceeds by finding, for any event $x\in(ac)$, the closest chain events on either side (which give the segment of $S$ containing $x$); conversely, for any event $y\in(\bigcup S) \cup Q$, we apply the betweenness properties of chains as well as overlapping-betweenness lemmas similar to those of Sec.~\ref{sec:order-path} to obtain $\ord{a}{y}{c}$.

Similar lemmas exist for the remaining conclusions of Theorem~\ref{thm:11}, but we omit their proofs too. The main result is the segmentation of the interval: the prolongations just act as a two-sided catch-all for any other element. Furthermore, disjointness of the segments
(of the form \lstinline|segment (f i) (f(i+1))|)
follows from the ordering of finite chains, and obtaining a chain from a finite subset of a path is easy using Theorem~\ref{thm:10}. 

\subsubsection{Assuming path density}\label{sec:proofs:thm11:card}
Since Schutz omitted so many of the conclusions of our own \lstinline|show_segmentation| from his Theorem~\ref{thm:11}, but did insist on the number of segments, we created an additional locale, called \lstinline|MinkowskiDense|, to contain an assumed version of Schutz' Theorem~17. This is safer than a sorried theorem (see Sec.~\ref{sec:bgr:isa}) -- the assumption \lstinline|path_dense| will never be used accidentally, as long as we never work in the locale \lstinline|MinkowskiDense|, or in a locale built on top of it. We prove that the cardinality of the set $S$ of segments in the theorem \lstinline|show_segmentation| is indeed $N-1$ if path density is assumed.

\jeptodo{Do we address MinowskiDense somewhere else?}
\begin{lstlisting}
locale MinkowskiDense = MinkowskiSpacetime +
  assumes path_dense: "path ab a b \<Longrightarrow> \<exists>x. [[a x b]]"
begin

lemma segment_nonempty:
  assumes "path ab a b"
  obtains x where "x \<in> segment a b"
  using path_dense by (metis abc_abc_neq seg_betw assms)
\end{lstlisting}

The number-of-segments statement is obviously only interesting if $N\geq3$, which simplifies the definition of $S$. The remaining conditions are those of the helper lemmas for Theorem~\ref{thm:11}. Schutz' ``$N-1$ segments'' turns into a proposition on the cardinality of the set of segments $S$.

\begin{lstlisting}
lemma number_of_segments:
  assumes path_P: "P\<in>\<P>"
      and Q_def: "Q\<subseteq>P"
      and f_def: "[f[a..b..c]Q]"
    shows "card {segment (f i) (f(i+1)) | i. i<(card Q - 1)}
    	   = card Q - 1"
\end{lstlisting}

We can show two sets have equal cardinality if a bijection exists between them.\footnote{This is generally taken as a definition in mathematics (e.g. Liebeck \cite[p.~185]{liebeck2011}. Isabelle's definition is more technical, but the proof strategy still applies.}
To this end we define a function $g\colon i \mapsto (Q_iQ_{i+1})$, and prove it is a bijection between the sets $I=\{0\;..\;N-2\}$ and $S$.
With Isabelle's functions being total over types (in the case of $g$, total over $\setN$, not $I$), we must be subtle about what we prove: not bijectivity of $g$, but only bijectivity of its restriction to $I$.
This is expressed using \lstinline|bij_betw| in Isabelle.
In the listing below, $N = |Q|$, and the direct image of a function applied to a set is denoted by a backtick.

\begin{lstlisting}
proof -
  let ?g = "\<lambda> i. segment (f i) (f (i+1))"
  have "?g ` {0..?N-2} = ?S" <proof>
  moreover have "inj_on ?g {0..?N-2}" <proof>
  ultimately have "bij_betw ?g {0..?N-2} ?S"
    using inj_on_imp_bij_betw by auto
  thus ?thesis
    using assms(5) bij_betw_same_card
    by (metis (no_types, lifting) (*...*))
qed
\end{lstlisting}

Diving briefly into the proof of injectivity, we show where path density comes into play. Injectivity is proven as usual, that for
$i,j \in I$,
we have $g(i) = g(j) \Longrightarrow i=j$. This is shown by contradiction (\lstinline|assume "i\<noteq>j"|),
then split into the cases seen in Fig.~\ref{fig:thm11_j_cases}.
Notice this is almost the case split of Theorem~\ref{thm:10}, which is perhaps the reference Schutz makes to the preceding proof. Picking the left-most case of Fig.~\ref{fig:thm11_j_cases} as an example, such that $\ord{f(i+1)}{f(i)}{f(j)}$, we use \lstinline|segment_nonempty|
to obtain an element $e$ that satisfies the contradictory orderings $\ord{e}{f(j)}{f(j+1)}$ and $\ord{f(j)}{e}{f(j+1)}$.

\jeptodo{Fig. 5 caption: are we lower-casing `theorem' and `section' when referring to Schutz and upper-casing them when referring to this paper? I've noticed a few lower-cases for both.}
\begin{figure}
\centering
\includegraphics[width=\textwidth]{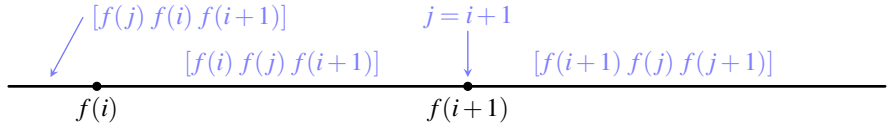}
\caption{\label{fig:thm11_j_cases}Four case splits in the proof of \lstinline{number_of_segments} for \lstinline{MinkowskiDense}'s Theorem~\ref{thm:11}, according to the index $j$, respectively the event $f(j)$.}
\end{figure}

\begin{lstlisting}
          assume "[[(f(i+1)) (f i) (f j)]]"
          then obtain e where "e\<in>?g i" using segment_nonempty
            by (metis (*...*))
          hence "[[e (f j) (f(j+1))]]"
            using \<open>[[(f(i+1)) (f i) (f j)]]\<close> <proof>
          moreover have "e\<in>?g j"
            using \<open>e \<in> ?g i\<close> asm(3) by blast
          ultimately show False
            by (simp add: abc_only_cba1 seg_betw)
\end{lstlisting}

\noindent
The other cases run similarly.
The case of $\ord{f(i)}{f(j)}{f(i+1)}$ proceeds in one step, since the assumption $g(i) = g(j)$ then gives $\ord{f(j)}{f(j)}{f(j+1)}$, which is false by Axiom~\ref{ax:O3}.
    \subsection{Continuity and the monotonic sequence property}
        \label{sec:continuity}
As mentioned in the introduction to Section~\ref{sec:ch3}, Theorem 12 is not included in this formalisation. See Section~\ref{sec:conclusion} for a brief discussion.
        \setcounter{theorem}{12}
    \subsection{Connectedness of the unreachable set}
        \label{sec:unreach-connected}
Since it was impossible to prove the full version of Schutz' Theorem~\ref{thm:11}, one may wonder if Schutz' results relying on this theorem remain valid.
As laid out in Sec.~\ref{sec:order-path}, the part of Theorem~\ref{thm:11} formalised in \lstinline|number_of_segments| relies indirectly on Theorem~\ref{thm:13}. Thus, mechanising Theorem~\ref{thm:13} while using only the weaker, verified version of Theorem~\ref{thm:11}, serves to dispel any doubts.
We give the prose statement and proof below, and follow them with the formalised theorem. 
Schutz here introduces a new notation for ``non-strict ordering'' \cite[p.~27]{schutz1997}:
$$[a \; b \; c ]\!] \equiv \ord{a}{b}{c} \text{ or } b = c\;.$$

\begin{theorem}[Connectedness of the Unreachable Set]\label{thm:13}
\\Given any path $Q$, any event $b\notin Q$, and distinct events $Q_x, Q_z \in Q(b,\emptyset)$, then
$$\ord{Q_x}{Q_y}{Q_z} \Longrightarrow Q_y \in Q(b, \emptyset)\;.$$
\end{theorem}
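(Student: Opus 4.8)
The engine of the proof is Axiom~\ref{ax:I6}. Applied to the path $Q$, the event $b$, and the distinct unreachable events $Q_x,Q_z$, it produces a finite chain $X$ with indexing function $f$ such that $f(0)=Q_x$, $f(|X|-1)=Q_z$, every member $f(i)$ with $1\le i\le|X|-1$ lies in $Q(b,\emptyset)$, and --- crucially --- every event lying between two consecutive chain members is again in $Q(b,\emptyset)$. Before invoking it I would record the routine prerequisites: $b\in\E$ is part of the very definition of $Q(b,\emptyset)$; and $Q_y$ is an event of $Q$, because $\ord{Q_x}{Q_y}{Q_z}$ gives, via Axiom~\ref{ax:O1}, a path through $Q_x,Q_y,Q_z$, and since $Q_x\ne Q_z$ already lie on $Q$, Axiom~\ref{ax:I3} identifies that path with $Q$ (this also hands us $Q_y\in\E$ through \lstinline|in_path_event|). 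From $\ord{Q_x}{Q_y}{Q_z}$ and \ref{ax:O3} (with its companion \lstinline|abc_abc_neq|) the three events $Q_x,Q_y,Q_z$ are pairwise distinct.

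The argument then forks on whether the chain $X$ is short or long. If $X=\{Q_x,Q_z\}$ is a two-event chain, the dedicated short-chain clause of our formalisation of \ref{ax:I6} states exactly that $\ord{Q_x}{Q_y}{Q_z}$ implies $Q_y\in Q(b,\emptyset)$, so we are done at once --- this is precisely the situation for which that extra clause was introduced. If $X$ is long, so $|X|\ge 3$, I would put it into the finite-chain form $[f[Q_x..m..Q_z]X]$ for a suitably chosen middle element $m\in X$ and apply \lstinline|int_split_to_segs| (from Theorem~\ref{thm:11}), which decomposes the interval from $Q_x$ to $Q_z$ as $(\bigcup S)\cup X$, where $S$ is the set of segments $(f(i)\,f(i{+}1))$ for $i<|X|-1$. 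Since $Q_y\in Q$ and $\ord{Q_x}{Q_y}{Q_z}$, the event $Q_y$ lies in this interval, so it is either a chain member or it lies in one of the segments. In the first case $Q_y=f(i)$ for some $i$, which cannot be $0$ or $|X|-1$ by distinctness, so $1\le i\le|X|-2$ and clause~(i) of \ref{ax:I6} gives $Q_y\in Q(b,\emptyset)$. In the second case $\ord{f(i)}{Q_y}{f(i{+}1)}$ for some $i<|X|-1$, and clause~(ii) of \ref{ax:I6} applied at index $i+1\in\{1,\dots,|X|-1\}$ gives $Q_y\in Q(b,\emptyset)$.

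The main obstacle is the long-chain case: coercing the chain that \ref{ax:I6} returns into the exact shape that \lstinline|int_split_to_segs| expects (picking the named middle event, matching indexing conventions, carrying finiteness), confirming that $Q_y$ lands in the interval rather than in a prolongation (which is where $Q_y\in Q$ and the distinctness facts earn their keep), and lining up the segment index $i$ with the index $i+1$ at which clause~(ii) of \ref{ax:I6} is phrased. A more elementary alternative --- closer to the route that led to \lstinline|chain_unique_upto_rev| --- avoids Theorem~\ref{thm:11} entirely: take the least $k$ with $\ord{Q_x}{Q_y}{f(k)}$, note $1\le k\le|X|-1$, and combine $\ord{Q_x}{f(k-1)}{f(k)}$ (from Theorem~\ref{thm:2}) with the overlapping-betweenness Lemmas~\ref{lem:1}--\ref{lem:3} to extract $\ord{f(k-1)}{Q_y}{f(k)}$, then apply clause~(ii); this swaps the segmentation machinery for somewhat delicate index arithmetic and a case split on whether $k=1$.
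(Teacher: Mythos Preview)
Your proposal is correct and follows essentially the same route as the paper's formalisation: invoke Axiom~\ref{ax:I6} to obtain the chain, split on short versus long chains (using the dedicated short-chain clause for the former), and in the long case use \lstinline|int_split_to_segs| to place $Q_y$ either among the chain events or inside one of the segments, then finish with the appropriate clause of \ref{ax:I6}. You even anticipate the paper's remark that an earlier version of the proof went via a least-index argument instead of Theorem~\ref{thm:11}, and correctly flag the bookkeeping (obtaining the named middle event for \lstinline|int_split_to_segs|, aligning segment indices with the \ref{ax:I6} indexing) as the main source of friction.
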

\begin{proof}
By axiom I6 there is a finite chain $[Q_0 \; Q_1 \; \dots \; Q_{n-1} \; Q_n]$
(where $Q_0 = Q_x$ and $Q_n = Q_z$) so Theorem 11 implies that for some $i \in \left\lbrace 1, \dots, n \right\rbrace$, $[Q_{i-1} \; Q_y \; Q_i ]\!]$ whence axiom I6 implies that $Q_y \in Q(b, \emptyset)$.
\qed
\end{proof}



\begin{lstlisting}
theorem (*13*) unreach_connected:
  assumes path_Q: "Q\<in>\<P>"
      and event_b: "b\<notin>Q" "b\<in>\<E>"
      and unreach: "Q\<^sub>x \<in> \<emptyset> Q b" "Q\<^sub>z \<in> \<emptyset> Q b" "Q\<^sub>x \<noteq> Q\<^sub>z"
      and xyz: "[[Q\<^sub>x Q\<^sub>y Q\<^sub>z]]"
    shows "Q\<^sub>y \<in> \<emptyset> Q b"
\end{lstlisting}

We follow Schutz' proof at the start, obtaining a chain on $Q$ from Axiom~\ref{ax:I6}. We call this chain $X$, with indexing function $f$, while Schutz distinguishes the chain $\{Q_i\}_{i=0\dots n}$ from the path $Q$ only by the subscripts.

\begin{lstlisting}
  obtain X f where X_def: "ch_by_ord f X"
      "f 0 = Q\<^sub>x" "f (card X - 1) = Q\<^sub>z"
      "\<forall>i\<in>{1 .. card X - 1}.
        f i \<in> \<emptyset> Q b \<and>
        (\<forall>Qy\<in>\<E>. [[f(i-1) Qy f i]] \<longrightarrow> Qy \<in> \<emptyset> Q b)"
      "short_ch X \<longrightarrow> Q\<^sub>x \<in> X \<and> Q\<^sub>z \<in> X \<and>
         	      (\<forall>Q\<^sub>y\<in>\<E>. [[Q\<^sub>x Q\<^sub>y Q\<^sub>z]] \<longrightarrow> Q\<^sub>y \<in> \<emptyset> Q b)"
    using I6 [OF assms(1-6)] by blast
  hence "[f[Q\<^sub>x..Q\<^sub>z]X]" <proof>
\end{lstlisting}

\noindent
The final line above makes certain $Q_x$ and $Q_z$ (defined via their indices only) are indeed the bounding events of the chain.
It is only at this stage that we realised Axiom~\ref{ax:I6} had to be modified to account for the case of short chains. This is an artifact of our chain definition (see Sec.~\ref{sec:axioms}), where the indexing function $f$ only has meaning for chains of at least three elements (strictly speaking, the same is true of Schutz' prose definition). Thus, for example, if $|X|=2$, we cannot conclude $f(0)\in X$.




We split the remainder of the proof according to whether the obtained chain $X$ is long or short (i.e.\ whether there is a meaningful indexing function $f$). The case of short chains is very straightforward, and the theorem follows immediately from the short-chain clause of Axiom~\ref{ax:I6} (invoked as \lstinline|X_def(5)|, the fifth fact listed under the name \lstinline|X_def| above).

\begin{lstlisting}
  show ?thesis
  proof cases
    assume "N=2"
    thus ?thesis
      using X_def(1,5) xyz \<open>N = card X\<close> event_y short_ch_card_2
      by auto
  next
    assume "N\<noteq>2"
    hence "N\<ge>3" using \<open>2 \<le> N\<close> by auto
    have y_cases: "Q\<^sub>y\<in>X \<or> Q\<^sub>y\<notin>X" by blast
\end{lstlisting}

\noindent
A second layer of case splitting occurs only in the case of $N \geq 3$, and is given in the fact \lstinline|y_cases| (final line above). Schutz absorbs this split into the \textit{non-strict ordering} defined at the beginning of this section.
He then relies on his reader to consider both cases and to dispense with the (often degenerate\jeptodo{Often degenerate in this particular proof or generally?}) $Q_y \in X$ case.
Isabelle would not accept such an implicit approach, so this non-strict notation is not used for formalising Theorem~\ref{thm:13}, and we treat both cases explicitly.

Now that we have dealt with short chains, it is time to do as Schutz suggests, and use Theorem~\ref{thm:11}. In fact, we do not need the entirety of the theorem, but only the part of the result relating to the segmentation of an interval on a path, i.e.\ \lstinline|int_split_to_segs| (see Sec.~\ref{sec:order-path}).

To keep the proof as simple as possible, it is vital that the set of events $Q$ is already indexed as a chain. To see why, assume we have no indexing function, but only a set of events.
Then \lstinline|segmentation| (see Sec.~\ref{sec:order-path}) does provide a set of segments, but we have no handle on their endpoints: in particular, there is no proof that the segments are made up of events that are adjacent according to the ordering $f$. An early version of the proof of Theorem~\ref{thm:13} did go this route, using the interesting uniqueness result \lstinline|chain_unique_upto_rev| to relate a chain obtained from the segment endpoints to the chain $X$. With the more explicit formulation of Theorem~\ref{thm:11}, \lstinline|show_segmentation|, this extra complexity disappears.


\begin{lstlisting}
lemma (in MinkowskiSpacetime) chain_unique_upto_rev:
  assumes "[f[a..c]X]" "[g[x..z]X]" "card X \<ge> 3" "i < card X"
  shows "f i = g i \<or> f i = g (card X - i - 1)"
\end{lstlisting}


If $Q_y$ is an event of the chain $X$, I6 immediately implies $Q_y \in Q(b,\emptyset)$ (this is fact \hbox{\lstinline|X_def(4)|).}
If not, i.e.\ $Q_y \notin X$, we obtain the relevant segment from Theorem~\ref{thm:11} much like Schutz does in prose.

More specifically, we find the index $i$
such that $\ord{f(i-1)}{Q_y}{f(i)}$,
and prove our goal $Q_y \in Q(b,\emptyset)$. What follows is just a listing of the most salient statements of the remaining proof. The set $S$ is defined exactly as in \lstinline|show_segmentation|. Once the index $i$ is shown to exist, the result follows from a simple application of, again, Axiom~\ref{ax:I6} in the guise of \lstinline|X_def(4)|.

\jeptodo{I've made this not split over two pages. Check to make sure it hasn't broken anything, and apologies in advance :) Fig. 6 is further along now I think.}
\vbox{
\begin{lstlisting}
      assume "Q\<^sub>y \<notin> X"
      have "Q\<^sub>y \<in> \<Union>?S"
        using int_split_to_segs [OF `[f[Q\<^sub>x..c..Q\<^sub>z]X]`] <proof>
      (*...*)
      obtain i where i_def: "i\<in>{1..(card X)-1}"
                            "[[(f(i-1)) Q\<^sub>y (f i)]]"
        by blast
      show ?thesis
        by (meson X_def(4) i_def)
\end{lstlisting}
}






The completion of this proof demonstrates several benefits of mechanisation of formal mathematics. First, resolution of a minor lapse in the prose led to a proof of a result not found in the original text, \lstinline|chain_unique_upto_rev|. This is interesting in its own right, as it generalises Theorem~\ref{thm:1} to chains much like \lstinline|chain_sym| generalises Axiom~\ref{ax:O2}.
Secondly, we were able to reconcile a follow-on result with a necessarily weaker version of the required Theorem~\ref{thm:11} (\lstinline|show_segmentation|).

\begin{theorem}[Second Existence Theorem]\label{thm:14}
\begin{enumerate}
    \item[(i)] Given a path $Q$ and a pair of events $a,b \notin Q$, each of which can be joined to $Q$ by some path, there are events $y,z \in Q$ such that
    $$\ord{y}{Q(a,\emptyset)}{z} \text{ and } \ord{y}{Q(b,\emptyset)}{z}\;.$$
    \item[(ii)] Given a path $Q$ and a pair of events $a,b \notin Q$ each of which can be joined to $Q$ by some path and a pair of events $c,d \in Q$, there is an event $e \in Q$ and paths $ae$, $be$ such that $\ord{c}{d}{e}$.
    \item[(iii)] Given two paths $Q$ and $R$ which meet at $x$, an event $a \in R \setminus \lbrace x \rbrace$ and an event $b \notin Q$ which can be joined to $Q$ by some path, there is an event $e$ and paths $ae$, $be$ such that $\ord{x}{Q(a,\emptyset)}{e}$.
\end{enumerate}
\end{theorem}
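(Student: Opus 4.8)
The plan is to prove the three parts in the order (i), (ii), (iii), since each builds on its predecessor, and to base everything on one structural fact: the unreachable set $Q(a,\emptyset)$ of a path $Q$ from an exterior event $a$ is \emph{bounded} in the order of $Q$. It is nonempty (Axiom~\ref{ax:I5}), order-convex (Theorem~\ref{thm:13}, \lstinline|unreach_connected|), and — the crucial point — it is confined between two events of $Q$ that are themselves \emph{reachable} from $a$; those two events are what the three parts ultimately return. I would assume, as the conclusions force, that $c \neq d$ in part~(ii) (and $a\neq x$ in part~(iii) is already given).

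First I would isolate a helper lemma: if $Q\in\P$ and $a\in\E$ is joinable to $Q$ by some path and $a\notin Q$, then there are $p,r\in Q\setminus Q(a,\emptyset)$ with $\ord{p}{w}{r}$ for every $w\in Q(a,\emptyset)$. To prove it, take a reachable $p\in Q\setminus Q(a,\emptyset)$ from the joining path, an unreachable $q\in Q(a,\emptyset)$ (Axiom~\ref{ax:I5}), and apply Theorem~\ref{thm:4} to get $r\in Q\setminus Q(a,\emptyset)$ with $\ord{p}{q}{r}$ and $p\neq r$. For an arbitrary $w\in Q(a,\emptyset)$ distinct from $p,q,r$, Theorem~\ref{thm:9} orders $\{p,q,r,w\}$ as a chain; since $\ord{p}{q}{r}$ fixes the orientation, either $w$ lies ``outside'' $\ord{p}{q}{r}$, so $p$ or $r$ lies between $w$ and $q$ and is hence unreachable by Theorem~\ref{thm:13} — a contradiction — or $w$ lies ``inside'', giving $\ord{p}{w}{r}$ directly from the four-event chain. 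Part~(i) then follows: obtain bounds $p_a,r_a$ for $Q(a,\emptyset)$ and $p_b,r_b$ for $Q(b,\emptyset)$, order the (at most four) events as a chain (Theorem~\ref{thm:9} or~\ref{thm:10}), let $y,z$ be its two extreme events, and use the overlapping-betweenness lemmas of Section~\ref{sec:order-path} (Lemmas~\ref{lem:1}, \ref{lem:2}, \ref{lem:3} and the further variants proved there) to conclude $\ord{y}{w}{z}$ for every $w$ in either unreachable set; that $y,z\notin Q(a,\emptyset)\cup Q(b,\emptyset)$ comes for free, since otherwise $\ord{y}{y}{z}$.

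For part~(ii), note the $y,z$ of part~(i) are reachable from both $a$ and $b$ and that $Q(a,\emptyset)\cup Q(b,\emptyset)$ lies strictly between them, so it suffices to produce $e\in Q$ with $\ord{c}{d}{e}$ that is \emph{not} strictly between $y$ and $z$; the paths $ae$ and $be$ then follow immediately by unfolding \lstinline|unreachable_subset_def| from $e\notin Q(a,\emptyset)$ and $e\notin Q(b,\emptyset)$. To get such an $e$ one marches past $(yz)$ using prolongation: order $\{c,d,y,z\}$ as a chain, determine on which side of $c$ the event $d$ lies, apply Theorem~\ref{thm:6}(i) to prolong the betweenness from $c$ through $d$ beyond the appropriate extreme endpoint of that chain (splicing the new relation onto the old with Axiom~\ref{ax:O4}), and check that the resulting $e$ is beyond both $y$ and $z$. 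Part~(iii) is then a short reduction. Since $R$ is the unique path through $a$ and $x$ (Axiom~\ref{ax:I3}), we get $a\notin Q$ and $x\in Q\setminus Q(a,\emptyset)$; applying the helper lemma with $p:=x$ yields $r$ with $\ord{x}{w}{r}$ for all $w\in Q(a,\emptyset)$; feeding the pair $(x,r)$ into part~(ii) gives $e\in Q$ joinable to both $a$ and $b$ with $\ord{x}{r}{e}$; finally $\ord{x}{w}{r}$ together with $\ord{x}{r}{e}$ combine — via Lemma~\ref{lem:3} and then Axiom~\ref{ax:O4} — to $\ord{x}{w}{e}$, i.e.\ $\ord{x}{Q(a,\emptyset)}{e}$.

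The main obstacle I expect is not any single deduction but the combinatorial explosion of cases: the bounding helper, the four-event ``common bound'' of part~(i), and especially the prolongation-overshoot step of part~(ii) each require enumerating orderings on a path and handling coincidences among the named events (for instance $y$ or $z$ equal to $c$ or $d$, or $p_a$ equal to $p_b$). This is exactly the ``without loss of generality'' reasoning the paper already flags for Theorem~\ref{thm:8} and warns becomes disproportionate here. I would try to contain it by proving small reusable lemmas — ``a chain of (at most) four events has two extreme events bounding the rest'', ``the betweenness from $c$ through $d$ can be prolonged beyond any prescribed event on the $d$-side of $c$'' — rather than inlining the case analysis, and by using \lstinline|chain_sym| to halve the mirror-image cases.
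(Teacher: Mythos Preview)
Your proposal is correct and tracks the paper's approach closely: the helper lemma is exactly the paper's \lstinline|get_bds| (and like you, the paper needs Theorem~\ref{thm:13}, not just Theorem~\ref{thm:4}, to get a single pair of bounds working for \emph{all} unreachable events); your anticipation of the combinatorial blow-up in part~(i) matches the paper's introduction of the \lstinline|wlog_endpoints_distinct| lemma; and part~(ii) proceeds by the same case split on the ordering of $c,d,y,z$ followed by prolongation. The one place you diverge is part~(iii): you first re-run the helper with $p:=x$ to get a reachable bound $r$, then feed $(x,r)$ into part~(ii), then splice $\ord{x}{w}{r}$ and $\ord{x}{r}{e}$. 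The paper (following Schutz) is more direct: it applies part~(ii) with $c:=x$ and $d$ any element of $Q(a,\emptyset)$, obtaining $e$ with $\ord{x}{d}{e}$ and paths $ae,be$, and then a single appeal to Theorem~\ref{thm:13} (using that $x$ and $e$ are both reachable) gives $\ord{x}{Q(a,\emptyset)}{e}$ without the intermediate bound $r$. Your route works, but the paper's saves one invocation of the helper and one overlapping-betweenness splice.
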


\noindent
The betweenness relation is here extended to sets of events: for a set $S$,
\[\ord{a}{S}{b} \iff \forall x \in S: \ord{a}{x}{b}\;.\]
The First Existence Theorem (Theorem~\ref{thm:5}) provides the basic geometric setup for the proofs of Theorem~\ref{thm:6} and the important Lemma~\ref{lem:1} (leading to Theorems~\ref{thm:9} and \ref{thm:10}). Using several results of Chapter~3, which it concludes, Theorem~\ref{thm:14} provides similar constructions for use in the geometric proofs of subsequent chapters. A visualisation of parts (i) and (iii) is provided in Fig.~\ref{fig:thm14} (part (ii) is similar to (i)).

\begin{figure}
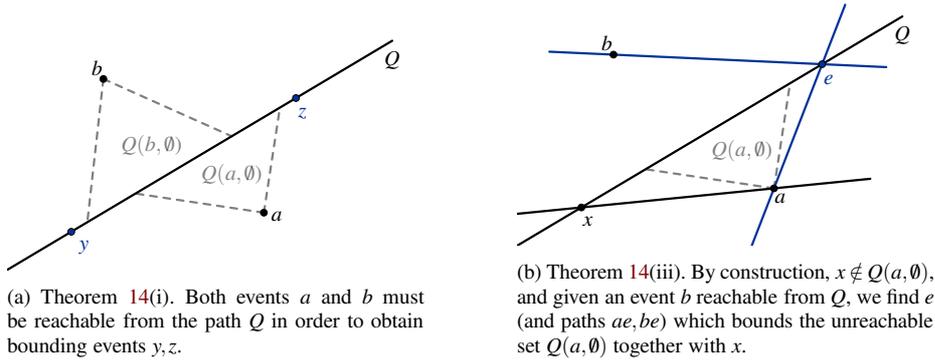

    \centering
     \begin{subfigure}[b]{0.45\textwidth}
         \centering
         \includegraphics[width=\textwidth]{tikz/thm14i.tex}
         \caption{Theorem \ref{thm:14}(i). Both events $a$ and $b$ must be reachable from the path $Q$ in order to obtain bounding events $y,z$.}
         \label{fig:thm14i}
     \end{subfigure}
     \hfill
     \begin{subfigure}[b]{0.45\textwidth}
         \centering
         \includegraphics[width=\textwidth]{tikz/thm14iii.tex}
         \caption{Theorem \ref{thm:14}(iii). By construction, $x\notin Q(a,\emptyset)$, and given an event $b$ reachable from $Q$, we find $e$ (and paths $ae, be$) which bounds the unreachable set $Q(a,\emptyset)$ together with $x$.}
         \label{fig:thm14iii}
     \end{subfigure}
    \caption{Visualisation of Theorem~\ref{thm:14}.}
    \label{fig:thm14}
\end{figure}

Schutz' proofs for each of the three statements are short \cite[p.~30]{schutz1997}, and we will reproduce them here to highlight the differences in our formalisation. By far the most different is the proof for part (i).

\begin{lstlisting}
theorem (*14i*) second_existence_thm_1:
  assumes path_Q: "Q\<in>\<P>"
      and events: "a\<notin>Q" "b\<notin>Q"
      and reachable: "path_ex a q1" "path_ex b q2" "q1\<in>Q" "q2\<in>Q"
    shows "\<exists>y\<in>Q.\<exists>z\<in>Q. (\<forall>x\<in>\<emptyset> Q a. [[y x z]]) \<and>
                      (\<forall>x\<in>\<emptyset> Q b. [[y x z]])"
\end{lstlisting}

\begin{proof}[Theorem~\ref{thm:14}(i)]
Theorem~\ref{thm:4} implies that both sets $Q(a,\emptyset)$ and $Q(b,\emptyset)$ are bounded in both directions by events which do not belong to the unreachable sets themselves, so the union $Q(a,\emptyset) \cup Q(b,\emptyset)$ is bounded by distinct events $y$, $z$ which do not belong to the union of the unreachable sets.
\qed
\end{proof}

In the proof above, Schutz implicitly extends his notion of boundedness to sets. We assume that he means a similar property as he did for chains, i.e. using strict betweenness. We take a set of events $S$ to be bounded by $a$, $b$ if $\ord{a}{S}{b}$, or equivalently $\forall x \in S: \ord{a}{x}{b}$, and we will keep this explicit in our formalisation.%
\footnote{A case can be made that Schutz means a non-strict bound when talking about bounded sets. This would be corroborated by his insistence that the bounds do not belong to the unreachable sets, which would be an immediate consequence of Axiom~\ref{ax:O3} for strict bounds, rather than a conclusion of Theorem~\ref{thm:4}. It would also explain why boundedness of sets is used in the proof, while the explicit betweenness is given in the theorem's conclusion. Nonetheless, for consistency, we stick with the meaning that aligns with the bounds of a chain.}

At the top level, we are able to follow Schutz' proof quite well. We give a truncated listing below. Two difficulties arise, one for each fact in the listing.

\begin{lstlisting}
proof -
  have get_bds: "\<exists>la\<in>Q. \<exists>ua\<in>Q. la \<notin> \<emptyset> Q a \<and> ua \<notin> \<emptyset> Q a \<and>
                               (\<forall>x \<in> \<emptyset> Q a. [[la x ua]])"
    if asm: "a\<notin>Q" "path_ex a q" "q\<in>Q"
   for a
  <proof>

  have "\<exists>y\<in>Q. \<exists>z\<in>Q. (\<forall>x\<in>(\<emptyset> Q a)\<union>(\<emptyset> Q b). [[y x z]])"
  <proof>

  show ?thesis <proof>
qed
\end{lstlisting}

First, to prove the fact \lstinline|get_bds|, we require Theorem~\ref{thm:13}, not just Theorem~\ref{thm:4} as proposed by Schutz. This is because Theorem~\ref{thm:4} gives us, for fixed $Q_x \in Q \setminus Q(b,\emptyset)$, one event $Q_z$ for every $Q_y \in Q(b,\emptyset)$ such that $\ord{Q_x}{Q_y}{Q_z}$. Showing that there is a single $Q_z$ that bounds all possible $Q_y$ requires a proof by contradiction that falsifies Theorem~\ref{thm:13} (Connectedness of the Unreachable Set). Schutz' proof misses this aspect.

Second, to prove a union of bounded sets is bounded, we introduce a way to abstract case splits over the many possible orderings of four events. Thinking about this proof as a mathematician, it is clear what happens: there are two bounds for each set, one on each side, and no matter what the relationship of the sets to one another is, there are always two bounds that qualify as bounds of the union. However, this reasoning breaks down into many case splits in Isabelle, because once we obtain bounds explicitly, we need to consider any possible ordering of all four of them, as well as the possibility of any subset of them being equal.

This leads to a lemma that encapsulates the idea of assuming some ordering ``without loss of generality'' \cite{harrison2009,scott2015,scott2016}, by using the symmetries of the betweenness relation. This can be compared to standard results, e.g. \lstinline|linorder-wlog| and \lstinline|linorder-less-wlog| in Isabelle/HOL's \lstinline|Orderings| theory.
We only list the lemma in the case of distinct events here. A similar result is provided to deal with the possible cases of equality between four events. Since we deal with orderings of four events, the notation for betweenness is extended to apply in the quaternary case (this is equivalent to saying a chain exists such that the four elements can be indexed left-to-right).

\begin{lstlisting}
lemma (in MinkowskiSpacetime) wlog_endpoints_distinct:
  assumes path_A: "A\<in>\<P>"
      and symmetric_Q: "\<And>a b I. Q I a b \<Longrightarrow> Q I b a"
      and Q_implies_path: "\<And>a b I. \<lbrakk>I\<subseteq>A; Q I a b\<rbrakk> \<Longrightarrow> b\<in>A \<and> a\<in>A"
      and symmetric_P: "\<And>I J. \<lbrakk>\<exists>a b. Q I a b; \<exists>a b. Q J a b; P I J\<rbrakk>
                             \<Longrightarrow> P J I"
      and "\<And>I J a b c d. \<lbrakk>Q I a b; Q J c d; I\<subseteq>A; J\<subseteq>A;
                [[a b c d]] \<or> [[a c b d]] \<or> [[a c d b]]\<rbrakk> \<Longrightarrow> P I J"
  shows "\<And>I J a b c d. \<lbrakk>Q I a b; Q J c d; I\<subseteq>A; J\<subseteq>A;
                a\<noteq>b \<and> a\<noteq>c \<and> a\<noteq>d \<and> b\<noteq>c \<and> b\<noteq>d \<and> c\<noteq>d\<rbrakk> \<Longrightarrow> P I J"
\end{lstlisting}

This lemma takes two generic predicates: $P$, a function of two sets of events (e.g. ``the union of these sets is bounded''), and $Q$, a function of two events and a set (e.g. ``this set is the interval between these two events''). The premise $R$ quantifies over two events in $Q$ (e.g. ``this set is an interval'').
For any such relations, the final assumption of the lemma \lstinline|wlog_endpoints_distinct| gives the only essentially distinct cases: only three orderings of four events need to be considered given that the relations $P,Q$ are symmetric, and that we know betweenness is symmetric by Axiom~\ref{ax:O2}.%
\footnote{The condition \lstinline|Q_implies_path| is needed to establish that some ordering exists, via Axiom~\ref{ax:O5}.}
We are then allowed to conclude that all possible orderings follow from the three essentially distinct ones. Notice how the naming of events is left unspecified using Isabelle's universal meta-quantification $\bigwedge$.\footnote{This may be compared to Schutz' formulation of Theorem~\ref{thm:10}, ``any set of $n$ distinct events can be represented by the notation $a_1,a_2,\dots,a_n$ such that [\dots]''.}

Thus we can, for example, prove boundedness of the union of two bounded sets without considering all possible orderings (24, if we don't use Axiom~\ref{ax:O2}). In exchange, the lemma is very verbose, but still remains specific to using betweenness and paths: generalising those would lead to an even more unwieldy statement. Using \lstinline|wlog_endpoints_distinct| is best done by splitting the proof of $P I J$ (for concrete $P$) immediately using Isabelle's \lstinline|rule_tac|, and then prove the lemma's assumptions one by one, fixing variables inside the scope of each subgoal.

\begin{lstlisting}
theorem (*14*) second_existence_thm_2:
  assumes path_Q: "Q\<in>\<P>"
      and events: "a\<notin>Q" "b\<notin>Q" "c\<in>Q" "d\<in>Q" "c\<noteq>d"
      and reachable: "\<exists>P\<in>\<P>. \<exists>q\<in>Q. path P a q"
                     "\<exists>P\<in>\<P>. \<exists>q\<in>Q. path P b q"
    shows "\<exists>e\<in>Q. \<exists>ae\<in>\<P>. \<exists>be\<in>\<P>. path ae a e
                             \<and> path be b e \<and> [[c d e]]"
\end{lstlisting}

After the hard work of part (i), the second statement of Theorem~\ref{thm:14} is easier to prove, as we don't need WLOG results, and can rely on the first part of the theorem to provide the general setup. In fact, we follow Schutz' proof with little trouble.

\begin{proof}[Theorem~\ref{thm:14}(ii)]
In the cases where $\ord{c}{d}{y}$ or $\ord{c}{d}{z}$ we define $e$ to be $y$ or $z$ respectively. The other cases are where
$([\![y\;c\;d] \text{ or } [c\;y\;d]\!]) \text{ and } ([\![z\;c\;d] \text{ or } [c\;z\;d]\!])$:
in these cases the Prolongation Theorem (Th.~\ref{thm:6}) implies the existence of an event $e$ such that $\ord{c}{d}{e}$ and by theorem~\ref{thm:10} the event $e$ is not between the bounding events $y$, $z$ so there are paths $ae$, $be$.
\qed
\end{proof}

The main case split according to orderings of the events $c,d,y,z$ can be found in our formalisation as well. We omit several facts establishing the situation resulting from applying part (i), which Schutz implicitly continues from. The two cases $\ord{c}{d}{y}$ and $\ord{c}{d}{z}$ are solved in a few steps, and are of no great interest, so we give only the final case.


\begin{lstlisting}
proof -
  (*...*)
  let ?P = "\<lambda>e ae be. (e\<in>Q \<and> path ae a e \<and> path be b e \<and> [[c d e]])"
  
  have "[[c d y]] \<or> [[c d z]] \<or>
        ((\<lbrakk>y c d]] \<or> [[c y d\<rbrakk>) \<and> (\<lbrakk>z c d]] \<or> [[c z d\<rbrakk>))"
  <proof>
  thus ?thesis
  proof (rule disjE3)
    (*...*)
    assume "(\<lbrakk>y c d]] \<or> [[c y d\<rbrakk>) \<and> (\<lbrakk>z c d]] \<or> [[c z d\<rbrakk>)"
    
    have "\<exists>e. [[c d e]]" <proof>
    then obtain e where "[[c d e]]" by auto
    
    have "\<not>[[y e z]]" <proof>
    (*...*)
    thus ?thesis
      using \<open>[[c d e]]\<close> \<open>e \<in> Q\<close> by blast
  qed
qed
\end{lstlisting}

Isabelle's \emph{sledgehammer} can automatically construct a proof for $\exists e.\; \ord{c}{d}{e}$ that uses only Theorem~\ref{thm:6}.
Similarly, a proof for $\lnot \ord{y}{e}{z}$ can be found.
We do not need Theorem~\ref{thm:10}, as in Schutz' proof. Instead, we use smaller lemmas specific to orderings of only three events, which Isabelle handles with greater ease, particularly in the presence of non-strict ordering.

\begin{lstlisting}
theorem (*14*) second_existence_thm_3:
  assumes paths: "Q\<in>\<P>" "R\<in>\<P>" "Q\<noteq>R"
      and events: "x\<in>Q" "x\<in>R" "a\<in>R" "a\<noteq>x" "b\<notin>Q"
      and reachable: "\<exists>P\<in>\<P>. \<exists>q\<in>Q. path P b q"
    shows "\<exists>e\<in>\<E>. \<exists>ae\<in>\<P>. \<exists>be\<in>\<P>. path ae a e \<and> path be b e \<and>
                               (\<forall>y\<in>\<emptyset> Q a. [[x y e]])"
\end{lstlisting}

\begin{proof}[Theorem~\ref{thm:14}(iii)]
By (ii) above, if we let $c := x$ and take any $d \in Q(a,\emptyset)$ there is an event $e \in Q$ and paths $ae$, $be$ such that $\ord{x}{d}{e}$. Theorem~\ref{thm:13} then implies that $\ord{x}{Q(a,\emptyset)}{e}$.
\qed
\end{proof}

Again, the formalisation of part (iii) follows Schutz' proof closely. The events $d,e$ in his first sentence can be obtained automatically again. We do need to consider a proof by contradiction and several case splits to prove $\ord{x}{Q(a,\emptyset)}{e}$, namely for $y \in Q(a,\emptyset)$, the non-trivial cases to be falsified are $\ord{y}{x}{e} \lor \ord{x}{e}{y}$. In both cases we use Theorem~\ref{thm:13} as the only major result.


        
\section{Conclusion and Future Work}\label{sec:conclusion}
Our formalisation of temporal order on paths in Schutz' axiomatic Minkowski space is over nine thousand lines long. Schutz' admirably detailed account (for prose) covers 22 pages. Estimating thirty lines on each page, this leaves us with a de Bruijn factor \cite{wiedijk2000,debruijn1994a} of roughly 14. This is not exceptional: while many formalisations only report de Bruijn factors as low as 3 to 6, values above 20 can be found \cite{dzamonja2020}. We also note that a recent batch of simplifications and rewritten proofs has cut our formalisation by about eight hundred lines, so this estimated factor may be further reducible.

One should note that the axiomatisation by itself would have a factor of only around 4. The thirteen formalised theorems and their proofs, together with most added intermediate lemmas, have de Bruijn factor of roughly 23. This, we estimate, is largely due to the later proofs of the chapter relying more strongly on Schutz' geometric intuition, the validation of which in the context of his axioms is the main goal of Chapter~3. Thus our formal constructions had to become more and more elaborate (the prime example is our collection of WLOG lemmas), and supported by lengthy existence proofs omitted in the original prose.

Several required lemmas were not stated in the original text, most notably, in the proof of Theorem~\ref{thm:10}. Theorem~\ref{thm:11} saw a minor correction to the statement, while Theorem~\ref{thm:14} required WLOG-style lemmas to avoid a large number of case splits. Refinement of these WLOG-lemmas is one avenue to pursue in future work, as it could prove useful in a large number of formalisations beyond ours, and captures a kind of mathematical reasoning device employed frequently, and to great effect, in prose. A similar investigation could try to link results of symmetry, such as a chain being reversible (\lstinline|chain_sym|), and sufficient subgoals for avoiding case splits. Ideally, such subgoals might be generated automatically based on the symmetry considered.

Our formalisation covers the third chapter of Schutz' monograph, with the exception of Theorem~12, the Continuity Theorem. This is the only result of the chapter that intensely relies on working with infinite chains, and thus falls outside the scope of this paper. 
Avoiding continuity in a first effort to formalise a geometry has precedence, for example in the work of Meikle and Fleuriot, which largely focuses on the first three groups of axioms of Hilbert's \emph{Grundlagen} (continuity appears in the fifth), or the investigation of the first four groups in Coq by Braun and Narboux \cite{meikle2003,braun12}. We do note here that we have made some progress towards mechanising Theorem~12 though. In particular, we have formalised its first part, which partitions any path into two rays. The second part attempts to show continuity formulated in a manner analogous to the construction of $\setR$ as Dedekind cuts of $\setQ$. This is not yet formalised. We plan to continue this work, establishing paths as continuous. This may require a stronger axiom of continuity, if Schutz' proof is shown to be insufficient.



While Schutz insists upon the independence of his axiomatic system, even mentioning it in the title of his monograph, future work on this formalisation may emphasise this property less, in favour of easier, more modular organisation, as well as more succinct definitions and axioms. If the great reward of the quest for independence was, as Schutz claims, a set of intuitive and clear axioms, then it seems justified to step away from strict independence towards a clearer organisation in Isabelle's locales. Another potential aspect of further work lies in trying to apply automation tools from similar formalisations, such as automated tactics to translate from problems of ordering on events to natural numbers and proof discovery tools \cite{scott2015,scott2011}.

\subsection{Final Remarks}
The programme of axiomatisation of foundational physics goes back at least to Hilbert's sixth problem in 1900. Once a candidate system of axioms is constructed, its formalisation in a proof assistant such as Isabelle is a natural continuation, both for the certainty of correctness it offers, and for the analysis (and maybe even automation) of the employed reasoning. A geometrically-inspired system such as the one of Schutz can be a valuable link between geometric intuition and physical theorems. In our case, several axioms similar to those of Hilbert's \emph{Grundlagen def Geometrie} meet an order-theoretic approach that may be compared to modern ideas for the foundations of physics \cite{knuth2014,knuth2017,goyal2010}.

Thus this formalisation contributes not only a study of the foundations of Special Relativity, but may provide a link between approaches from synthetic geometry and foundational physics. We hope that future work will not only extend our mechanisation to include and clarify more of Schutz' results, but will also investigate more general aspects of automated reasoning in axiomatic physics.

\bibliographystyle{spmpsci}      
\bibliography{schutz,manual}   


\end{document}